\definecolor{menucolor}{rgb}{0.1,0.52,0.47}
\definecolor{urlcolor}{rgb}{0.85,0.37,0.01}
\definecolor{runcolor}{rgb}{0.46,0.44,0.701}
\definecolor{filecolor}{rgb}{0.2,0.5,0.01}
\definecolor{linkcolor}{rgb}{0.12,0.47,0.70}
\definecolor{citecolor}{rgb}{0.55,0.36,0.01}
\definecolor{anchorcolor}{rgb}{0.4,0.4,0.4}
\newtheorem{theorem}{Theorem}
\definecolor{blueee}{rgb}{0.2, 0.2, 0.6}
\newcommand{\mX}{\mathcal X}
\newcommand{\mY}{\mathcal Y}
\newcommand{\mS}{\mathcal S}
\newcommand{\mQ}{\mathcal Q}
\newcommand{\mB}{\mathcal B}
\newcommand{\mZ}{\mathcal Z}
\newcommand{\mM}{\mathcal M}
\newcommand{\mL}{\mathcal L}
\newcommand{\mE}{\mathcal E}
\newcommand{\mV}{\mathcal V}
\newcommand{\mW}{\mathcal W}
\newcommand{\mbR}{\mathbb R}
\newcommand{\mbE}{\mathbb E}
\newcommand\sdots{\!\hbox to 1em{.\hss.\hss.}}
\newcommand{\mhM}{\mathcal{ \hat M}}
\newcommand{\mhB}{\mathcal{ \hat B}}
\newcommand{\bX}{\boldsymbol{X}}
\newcommand{\bx}{\boldsymbol{x}}
\newcommand{\bz}{\boldsymbol{z}}
\newcommand{\bbm}{\boldsymbol{m}}
\newcommand{\be}{\boldsymbol{e}}
\newcommand{\bbi}{\boldsymbol{i}}
\newcommand{\bj}{\boldsymbol{j}}
\newcommand{\bk}{\boldsymbol{k}}
\newcommand{\bl}{\boldsymbol{l}}
\newcommand{\bzero}{\boldsymbol{0}}
\newcommand{\rtX}{\boldsymbol{ \mathcal{   X}}}
\newcommand{\rtZ}{\boldsymbol{ \mathcal{   Z}}}
\newcommand{\rtY}{\boldsymbol{ \mathcal{   Y}}}
\newcommand{\rtU}{\boldsymbol{ \mathcal{   U}}}
\newcommand{\rtS}{\boldsymbol{ \mathcal{   S}}}
\newcommand{\rmX}{\boldsymbol{ {\sf   X}}}
\newcommand{\rmS}{\boldsymbol{ {\sf   S}}}
\newcommand{\rvX}{\boldsymbol{X}}
\newcommand{\rvZ}{\boldsymbol{Z}}
\newcommand{\sfX}{{\sf X}}
\newcommand{\sfQ}{{\sf Q}}
\newcommand{\sfA}{{\sf A}}
\newcommand{\sfI}{{\sf I}}
\newcommand{\sfS}{{\sf S}}
\newcommand{\sfB}{{\sf B}}
\newcommand{\sfC}{{\sf C}}
\newcommand{\sfD}{{\sf D}}
\DeclareMathOperator{\tr}{tr}
\DeclareMathOperator{\vecc}{vec}
\DeclareMathOperator{\N}{\mathcal N}
\DeclareMathOperator{\ECW}{\mathcal E \mathcal C \mathcal W}
\DeclareMathOperator{\AC}{\mathcal A \mathcal C}
\DeclareMathOperator{\Var}{{\mathbb V}\mbox{ar}}
\DeclareMathOperator{\Prob}{{\mathbb P}\mbox{r}}
\DeclareMathOperator{\EC}{\mathcal E\mathcal C}
\DeclareMathOperator{\Sph}{\mathcal S}
\DeclareMathOperator*{\circs}{\circ}
\DeclareMathOperator*{\argmax}{arg\,max}
\newcommand{\hSigma}{ \hat \Sigma}
\newcommand{\I}{\imath}%
\renewcommand{\hat}{\widehat}
\renewcommand{\tilde}{\widetilde}
\newcommand{\citep}{\cite}
\newcommand{\citet}{\cite}
\newtheorem{definition}[theorem]{Definition}
\newtheorem{lemma}[theorem]{Lemma}
\begin{document}
\title{Elliptically-Contoured Tensor-variate Distributions with
  Application to Improved Image Learning}
\author{Carlos~Llosa-Vite~and~Ranjan~Maitra
  \thanks{C. Llosa-Vite and R. Maitra are with the Department of Statistics
at Iowa State University, Ames, Iowa 50011, USA. E-mail:
\{cllosa,maitra\}@iastate.edu.}
}




\maketitle
\begin{abstract}
     Statistical analysis of tensor-valued data has largely used the tensor-variate normal (TVN) distribution that may be inadequate when data comes from distributions with heavier or lighter tails. We study a general family of elliptically contoured (EC) tensor-variate distributions and derive its characterizations, moments, marginal and conditional distributions, and the EC Wishart distribution. We describe procedures for maximum likelihood estimation from data that are (1) uncorrelated draws from an EC distribution, (2) from a scale mixture of the TVN distribution, and (3) from an underlying but unknown EC distribution, where we extend Tyler's robust estimator.  A detailed simulation study highlights the benefits of choosing an EC  distribution over the TVN for heavier-tailed data. We develop tensor-variate classification rules using  discriminant  analysis and EC errors and show that they better predict cats and dogs from images in the Animal Faces-HQ dataset than the TVN-based rules. A novel tensor-on-tensor regression and  tensor-variate analysis of variance (TANOVA) framework under EC errors is also demonstrated to better characterize  gender, age and ethnic origin than the usual TVN-based TANOVA in the celebrated Labeled Faces of the Wild dataset. 
   \end{abstract}
\begin{IEEEkeywords}
 Kronecker-separable covariance, multilinear statistics, robust statistics, scale mixtures, tensor decompositions, elliptically-contoured Wishart
\end{IEEEkeywords}

\maketitle

\section{Introduction}\label{sec:introduction}
Tensor- or array-variate data arise in many disciplines, for example, in
the context of relational data in political science, imaging
applications in medicine or engineering, artificial
intelligence and so on \citep{bietal21}. 
The tensor-variate normal (TVN)
distribution~\citep{akdemirandgupta11,hoff11,manceuranddutilleul13,ohlsonandetal13} is often used to model and analyze such datasets, as it 
generalizes the matrix-variate normal (MxVN) distribution
\citep{guptaandnagar99,thompsonetal20} to array-variate data. The TVN
and MxVN distributions flow  from  the  multivariate normal (MVN)
distribution, and so 
benefit from the latter's  ease of interpretation, computation and
inference, but also inherit its shortcomings when modeling data that arise from heavy- or light-tailed distributions.

Elliptically contoured (EC)
distributions~\citep{schoenberg38,lord54,kelker70,guptaandetal72,
muirhead82,fangetal90,frahm04,arashi17}
are a flexible class of symmetric vector-variate 
distributions that generalize the MVN distribution, and facilitate the
modeling of data with heavy or light tails. 
In Section~\ref{sec:ecsec} we review and study properties of EC
tensor-variate distributions by characterizing them, their marginal
and conditional distributions and moments. We also introduce the EC
tensor-variate Wishart distribution. Inference is no longer as 
straightforward under EC errors, so Section \ref{sec:MLE} 
provides computationally practical methodology for parameter estimation
under three different scenarios, including a reduced rank tensor-on-tensor
regression (ToTR) and tensor-variate analysis of variance (TANOVA)
framework with EC errors, and a robust Tyler estimator for when data comes from an underlying but unknown EC distribution. We evaluate  
performance of our algorithms in Section~\ref{sec:performance}.
  Section \ref{sec:application} shows
  the value of our methodology in two real-data scenarios.
First, we develop discriminant analysis classification rules using EC tensor-variate
distributions  that exploit the maximum likelihood estimation
frameworks developed in Section \ref{sec:MLE}. We use these classification rules
to compare the predictive performance of the TVN and the tensor-variate-$t$ (TV-$t$)
distributions in classifying cats and dogs from their images in the  Animal
Faces-HQ (AFHQ) database. In all cases, the flexible tensor-variate-t distribution with unknown degrees of freedom outperforms the TVN in terms of area under the receiver operating characteristic and precision-recall curves.
 Our second application demonstrates the ability of our ToTR and TANOVA methodology with EC errors to characterize the Labeled Faces in the Wild (LFW) dataset in terms of
age, ethnic origin and gender, and compares this performance relative to that of TVN-based ToTR and TANOVA. Similarly, assuming EC errors results in preferred models according to the Bayesian information criterion. We conclude this article with a
discussion on our contributions and propose further
generalizations. An online supplement, with 
sections, theorems, equations and figures bearing the prefix ``S'', provides
additional technical details and supporting information as well as proofs of our theorems and lemmas.
\section{Definitions and Characterizations}\label{sec:ecsec}
\subsection{Background and preliminaries}
\label{sec:prelim}
We define a {\em tensor} as a multi-dimensional array of numbers. This
article uses $\mX$, $\sfX$ and  $X$ to denote deterministic tensors, matrices and
vectors, with bold-faced fonts for their random counterparts ({\em
  e.g.}, $\rtX$, $\rmX$ and $\rvX$ denote random tensors, matrices and vectors). Further, we assume that  $\mX \in \mathbb{R}^{\times_{k=1}^p m_k}$  has $(i_1,i_2\sdots,i_p)$th element written
as $\mX(i_1,i_2,\sdots,i_p)$. 
Tensor reshapings \citep{koldaandbader09} allow us to modify the
structure of a tensor while preserving its elements. We can reshape
$\mX$ into a  $(m_{k})\times(\prod_{i=1,i\neq k}^p m_{i})$-matrix
$\mX_{(k)}$ using its $k$th mode matricization.
The matrix $\mX$ can also be reshaped, by means of its $k$th
canonical matricization, into 
a $(\prod_{i=1}^km_{i})\times(\prod_{i=k+1}^pm_{i})$ matrix
$\mX_{<k>}$, and into a vector $\vecc(\mX)$ of  
$\prod_{i=1}^pm_{i}$ elements using vectorization -- see
\eqref{defvec}, \eqref{kmode} and \eqref{canonical} in
Section~\ref{app:tensdef} for formal definitions.

The $k$th mode product between $\mX$ and the $n_k\times m_{k}$ matrix
$\sfA_k$ multiplies $\sfA_k$ to the $k$th mode of $\mX$, resulting in
the tensor  $\mX\times_k \sfA_k\in\mathbb{R}^{m_1\times\sdots\times m_{k-1}\times n_k\times m_{k+1}\times\sdots\times m_p}$. 
Applying the $k$th mode product with respect to $\sfA_k$ to every mode of
$\mX$ results in the Tucker (TK) product $[\![\mX;\sfA_1,\sdots,\sfA_p]\!]\in\mathbb{R}^{\times_{k=1}^pn_k}$ \citep{kolda06}. 
The inner product between two equal-sized tensors $\mX$ and $\mY$ is
defined as $(\vecc\mX)^\top(\vecc\mY)$ and denoted as $\langle\mX,\mY\rangle$. 
 If $\mB$ is a $(p+q)$-way tensor of size $m_1\times m_2\times \ldots \times m_p\times h_1\times h_2\times \ldots \times h_q$, then the partial contraction \citep{llosaandmaitra20} between $\mB$ and $\mX$ (denoted as $\langle \mX|\mB\rangle$) is a $q$-way tensor of size $h_1\times h_2\times \ldots \times h_q$ with 
\begin{equation}\label{eq:partcontr}
\begin{split}
\langle \mX|\mB\rangle(j_1,\dots,j_q)
=
&\sum_{i_1=1}^{m_1}\ldots \sum_{i_p=1}^{m_p}\mX(i_1, \ldots ,i_p)\mB(i_1, \ldots ,i_p,j_1, \ldots ,j_p).
\end{split}
\end{equation}
We refer to Section \ref{app:tensdef} for the reshaping and product definitions encountered in this section.

A random tensor $\rtX$ is a tensor whose vectorized form $\vecc(\rtX)$ is a
random vector. 
In many cases $\Var\{\vecc(\rtX)\}=\sigma^2 \otimes_{k=p}^1\Sigma_k$ for
$\Sigma_k\in\mathbb{R}^{m_k\times m_k}$, and the squared Mahalanobis
distance (with respect to the scale matrices $\sigma^2\Sigma_1,\sdots,\Sigma_p$)  of $\rtX$ from its mean $\mbE(\rtX) = \mM$ is 
\begin{equation}\label{eq:mahlanobis}
D^2_{\sigma^2\Sigma}(\rtX,\mM)=
\dfrac{1}{\sigma^2}
\langle
\rtX - \mM
,
[\![\rtX- \mM;\Sigma_1^{-1},\sdots,\Sigma_p^{-1}]\!]
\rangle.
\end{equation}

A random vector $\bX$ has a vector-valued \textit{spherical}
distribution if its distribution is invariant to rotations: that is, for every $\Gamma\in O(h)= \{H \in 
\mbR^{h\times h} :HH^\top=I_h\}$, we have 
$\bX\overset{d}{=}\Gamma\bX$, or equivalently $\mL(\bX)=\mL(\Gamma\bX)$,
with $\overset{d}{=}$ meaning distributional equality and $\mL(\bX)$ meaning the law of $\bX$.
 This notion of sphericity can be extended to a random matrix $\rmX$
 if it is invariant to rotations of its rows or/and columns
 \citep{fangandchen84}. 	
In this article, we say that a random tensor $\rtX \in \mathbb{R}^{\times_{k=1}^p m_k}$ follows a tensor-valued spherical distribution if $\vecc(\rtX)$ follows a vector-valued spherical distribution. Thus
$
\vecc(\rtX)\overset{d}{=} \Gamma\vecc(\rtX)
$
 for any $\Gamma\in O(\prod_{k=1}^ph_k)$, meaning that 
the distribution of  $\rtX$ is invariant
  under the group of transformations $\mathscr{G} =
 \{\varrho_\Gamma:\Gamma \in O(\prod_{k=1}^ph_k)\}$, where
 $\varrho_\Gamma (\rtX) = \Gamma\vecc(\rtX)$ is the matrix product of
 $\vecc(\rtX)$ with
 the orthogonal matrix $\Gamma$. Since $\langle \rtX,\rtX\rangle$ is
 maximally invariant under $\mathscr{G}$ (cf. Example 2.11 of \citep{eaton89}), it follows
 that the probability density function (PDF) $f(\cdot)$ and
 characteristic function (CF) $\psi(\cdot)$  are of the form
\begin{equation}\label{spher:cf1}
\psi_{\rtX}(\mZ) = \varphi(\langle \mZ,\mZ\rangle),\quad
f_{\rtX}(\mX) = g(\langle\mX,\mX\rangle)
\end{equation}
for some functions $\varphi$ and $g$ called the characteristic
generator (CG) and probability density generator of $\rtX$.
We write $\rtX \sim \Sph_{h_1,\sdots,h_p}(\varphi)$ if $\rtX$ has
the CF in \eqref{spher:cf1}. We define EC-distributed random tensors
from spherical distributions through the TK product. In the following,
we will define EC distributions that are nonsingular, and have positive definite scale matrices.
\begin{definition}\label{def:ellipcon}
Suppose that $\rtX\sim \Sph_{h_1,\sdots,h_p}(\varphi)$, $\mM \in \mathbb{R}^{\times_{k=1}^pm_k}$, and $\sfQ_k\in \mathbb{R}^{m_{k} \times h_k}$ are matrices such that $\sfQ_k \sfQ_k^\top = \Sigma_k$ is positive definite for all $k = 1,\sdots,p$. Then if 
\begin{equation}\label{eq:defscale}
\rtY \overset{d}{=} \mM+[\![\rtX ; \sfQ_1,\sdots,\sfQ_p ]\!],
\end{equation}
we say that $\rtY$ has an EC distribution with mean $\mM$, scale matrices $\Sigma_1,\Sigma_2,\sdots,\Sigma_p$, and CG $\varphi$, written as $\rtY \sim \EC_{\bbm}(\mM,\Sigma_1,\Sigma_2,\sdots,\Sigma_p,\varphi),$ where $\bbm=(m_1,m_2,\sdots,m_p)^\top$.
\end{definition}
For $\varphi(x) = \exp(-x/2)$
,  $\rtY$ follows the TVN distribution $\N_{\bbm}(\mM,\Sigma_1,\sdots,\Sigma_p)$.
We will use
$\bbm\!=\!(m_1,\sdots,m_p),
m=\prod_{i=1}^p m_{i}, 
m_{-k}\!=\!\frac{m}{m_k},
\Sigma \!=\! \otimes_{i=p}^1 \Sigma_i,
\Sigma_{-k}\!=\!\otimes_{i=p,i\neq k}^1 \Sigma_i.
$

Scale mixtures of TVN are an important sub-family of EC distributions, and are defined as 
\begin{equation}\label{eq:scaletensnorm}
\mL(\rtY|(Z = z)) =  \N_{\bbm}
(\mM,z^{-1}\Sigma_1,\Sigma_2,\sdots,\Sigma_p)
\end{equation}
for some non-negative random variable $Z$. This distribution is the tensor-variate extension~\citep{arashi17} of
the vector-variate case studied in
\citet{andrewsandmallows74,chu73,yao73} and the matrix-valued case
investigated in \citet{guptaandnagar99,guptaandvarga95}. 
A commonly-used mixing distribution in  \eqref{eq:scaletensnorm}, for the vector-variate case, uses
$Z\sim Gamma(a/2,b/2)$: doing so in the tensor-variate case yields the PDF
\begin{equation}\label{dengengamma}
f_{\rtY}(\mY) = \Big|\pi b\bigotimes_{k=p}^1\Sigma_k\Big|^{-\frac{1}{2}} 
\dfrac{\Gamma((m+a)/2)}{\Gamma(a/2)}
\Big(1+\dfrac{D^2_{\Sigma}(\mY,\mM)}{b}\Big) ^{-\frac{m+a}{2}}
\end{equation}
that corresponds to the TV-$t$ distribution with
$\nu$ degrees of freedom (DF) when $a=b=\nu$, the tensor-variate Cauchy
distribution when additionally $\nu=1$, and 
a tensor-variate Pearson Type VII distribution (cf. page 450 of {pearson1916})  with parameter $q$ when $a=m$ and $b=q$. We now
derive some useful properties of the EC tensor-variate distribution.

\subsection{Characterizing the EC family of tensor-variate distributions}\label{sec:secproperties}
 Defining EC distributions in terms of
 \eqref{eq:defscale} allows us to extend results from the
 spherical family of distributions to its EC counterpart. For
 instance, from  \eqref{spher:cf1} and \eqref{eq:defscale}, we can write the CF of $\rtY$ in terms of its CG $\varphi$ as
\begin{equation}\label{ellipt:cf1}
\psi_{\rtY}(\mZ) 
=
\exp(\I\langle \mZ , \mM\rangle)\varphi(\langle \mZ , [\![\mZ ; \Sigma_1,\Sigma_2,\sdots,\Sigma_p ]\!]\rangle).
\end{equation}
Similarly, if $\rtX\sim \Sph_{h_1,h_2, \sdots,h_p}(\varphi)$ has the PDF in \eqref{spher:cf1}, then the transformation induced in \eqref{eq:defscale} has Jacobian determinant $|\bigotimes_{k=p}^1\Sigma_k|^{-1/2}$, so the PDF of $\rtY$ is
\begin{equation}\label{ellipt:PDF1}
f_{\rtY}(\mY) = \Big|\bigotimes_{k=p}^1\Sigma_k\Big|^{-1/2} g ( D^2_{\Sigma}(\mY,\mM)),
\end{equation}
where $D^2(\cdot)$ is the squared Mahalanobis distance of \eqref{eq:mahlanobis}. Table~\ref{table:density_generators} 
\begin{table}[b]
\centering
  \caption{Some common EC distributions, defined using 
    \eqref{ellipt:PDF1} with their  corresponding probability 
    density     generators $g(\dot)$, given upto their constant of proportionality.}
\begin{tabular}{ c | c | c }
    \hline
  Distribution & Additional  & $g(x)$ \\ 
     &  parameters &  \\  \hline 
      Normal  & -- &$\exp(-x/2)$\\
   Student's-$t$  & $q>0$  & $(1+q^{-1}x)^{-(q+m)/2}$\\
   Pearson Type VII  & $q>0$  & $(1+x/q)^{-m}$\\
   Kotz Type  & $q>0$  & $x^{m-1}\exp(-qx)$\\
   Logistic  & -- & $\exp(-x)/(1+\exp(-x))^2$\\
   Power exponential  & $q>0$& $\exp(-x^q/2)$\\  \hline 
     \end{tabular}
\label{table:density_generators}
\end{table}
defines some EC distributions by specifying the density generator
function $g(\cdot)$  (for additional specifications, see \citep{arashiandtabatabaey10,galeaandriquelme00}).

The following theorem allows us to express EC tensor-variate distributions in terms of vector-variate EC distributions that have
been studied extensively in the literature.
\begin{theorem}\label{thm:reshap_EC}
  With the reshapings of \eqref{defvec}, \eqref{kmode} and
  \eqref{canonical}, and for all $k=1,2,\sdots,p$, the following
  statements are equivalent, with $n_k = \prod_{i=1}^k m_{i}$:
\begin{enumerate}
\item \label{reshap:par1} $\rtY \sim \EC_{\bbm}(\mM,\Sigma_1,\Sigma_2,\sdots,\Sigma_p,\varphi)$.
\item \label{reshap:par2}
$
\rtY_{(k)} \sim \EC_{m_{k},m_{-k}} ( \mM_{(k)}, \Sigma_k,  \Sigma_{-k} ,\varphi)$
\item \label{reshap:par3}
$
\rtY_{<k>} \sim \EC_{n_k,m/n_k} ( \mM_{<k>},  \bigotimes_{i=k}^1 \Sigma_i, \bigotimes_{i=p}^{k+1} \Sigma_i ,\varphi)$.
\item \label{reshap:par4} $\vecc(\rtY) \sim \EC_{m} ( \vecc(\mM),  \Sigma ,\varphi)$.
\end{enumerate}
\end{theorem}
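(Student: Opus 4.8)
The plan is to reduce all four statements to the vector-variate case in item~(\ref{reshap:par4}), exploiting that the EC family is defined in Definition~\ref{def:ellipcon} through a Tucker product, which collapses to a Kronecker product under matricization and vectorization. The algebraic facts I would invoke are the standard reshaping identities collected in Section~\ref{app:tensdef}: $\vecc([\![\mX;\sfA_1,\sdots,\sfA_p]\!])=(\bigotimes_{k=p}^1\sfA_k)\vecc(\mX)$ and its matricized analogues, so that $\vecc(\mX_{<k>})=\vecc(\mX)$ while $\vecc(\mX_{(k)})=\mathsf{P}_k\vecc(\mX)$ for a fixed permutation matrix $\mathsf{P}_k$ (a perfect-shuffle permutation arising from reordering Kronecker factors). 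In particular $\langle\mZ,[\![\mZ;\Sigma_1,\sdots,\Sigma_p]\!]\rangle=\vecc(\mZ)^\top\Sigma\,\vecc(\mZ)$. Since each reshaping is a linear bijection, two random tensors are equal in distribution iff the corresponding reshapings are, so it suffices to track how the characteristic function \eqref{ellipt:cf1} transforms; equivalently one can transform the stochastic representation \eqref{eq:defscale} directly.

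I would first establish (\ref{reshap:par1})$\Leftrightarrow$(\ref{reshap:par4}). Vectorizing \eqref{eq:defscale} gives $\vecc(\rtY)\overset{d}{=}\vecc(\mM)+(\bigotimes_{k=p}^1\sfQ_k)\vecc(\rtX)$, where $\vecc(\rtX)$ is vector-spherical with generator $\varphi$ (our very definition of tensor sphericity) and $(\bigotimes_{k=p}^1\sfQ_k)(\bigotimes_{k=p}^1\sfQ_k)^\top=\bigotimes_{k=p}^1\Sigma_k=\Sigma$; this is exactly the $p=1$ instance of Definition~\ref{def:ellipcon}, so $\vecc(\rtY)\sim\EC_m(\vecc(\mM),\Sigma,\varphi)$. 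For the converse I would take Cholesky factors $\sfQ_k$ with $\sfQ_k\sfQ_k^\top=\Sigma_k$ (available since the $\Sigma_k$ are positive definite), set $\sfQ=\bigotimes_{k=p}^1\sfQ_k$ so $\sfQ\sfQ^\top=\Sigma$, invoke the $p=1$ stochastic representation to write $\vecc(\rtY)\overset{d}{=}\vecc(\mM)+\sfQ\rvX$ with $\rvX\sim\Sph_m(\varphi)$, reshape $\rvX$ into a tensor $\rtX$ (tensor-spherical by our definition), and note $\sfQ\vecc(\rtX)=\vecc([\![\rtX;\sfQ_1,\sdots,\sfQ_p]\!])$, which recovers \eqref{eq:defscale}. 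At the level of CFs the same point reads: by the identity above, \eqref{ellipt:cf1} is the CF of the $\EC_m(\vecc(\mM),\Sigma,\varphi)$ law transported through $\vecc$, and a CF determines the law uniquely, so exhibiting one EC tensor with that CF suffices.

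The other two equivalences are reorganizations of (\ref{reshap:par4}). Statement (\ref{reshap:par3}) says, by Definition~\ref{def:ellipcon} applied at order $2$, that $\vecc(\rtY_{<k>})\sim\EC_m(\vecc(\mM_{<k>}),(\bigotimes_{i=p}^{k+1}\Sigma_i)\otimes(\bigotimes_{i=k}^1\Sigma_i),\varphi)$; since $\vecc(\rtY_{<k>})=\vecc(\rtY)$, $\vecc(\mM_{<k>})=\vecc(\mM)$, and $(\bigotimes_{i=p}^{k+1}\Sigma_i)\otimes(\bigotimes_{i=k}^1\Sigma_i)=\bigotimes_{i=p}^1\Sigma_i=\Sigma$, this is literally (\ref{reshap:par4}). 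For (\ref{reshap:par4})$\Leftrightarrow$(\ref{reshap:par2}) I would use $\vecc(\rtY_{(k)})=\mathsf{P}_k\vecc(\rtY)$ together with the matching relation $\Sigma_{-k}\otimes\Sigma_k=\mathsf{P}_k\Sigma\mathsf{P}_k^\top$; since a permutation is orthogonal, \eqref{ellipt:cf1} immediately gives $\mathsf{P}_k\vecc(\rtY)\sim\EC_m(\mathsf{P}_k\vecc(\mM),\mathsf{P}_k\Sigma\mathsf{P}_k^\top,\varphi)$, and recognizing the right-hand side as the vectorized $\EC_{m_k,m_{-k}}(\mM_{(k)},\Sigma_k,\Sigma_{-k},\varphi)$ law (again Definition~\ref{def:ellipcon} at order $2$) yields (\ref{reshap:par2}).

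The main nuisance I anticipate is bookkeeping of Kronecker orderings: the mode-$k$ matricization of $[\![\mX;\sfA_1,\sdots,\sfA_p]\!]$ carries its factors in the reversed, $k$-omitted order $\bigotimes_{i=p,i\neq k}^1\sfA_i$, and one must confirm that this is exactly the $\Sigma_{-k}$ of the statement and that $\mathsf{P}_k$ conjugates $\Sigma$ into $\Sigma_{-k}\otimes\Sigma_k$ rather than $\Sigma_k\otimes\Sigma_{-k}$, and likewise that the row/column-scale convention of the matrix-variate $\EC$ notation matches. The only genuinely substantive ingredient is the converse half of (\ref{reshap:par1})$\Leftrightarrow$(\ref{reshap:par4}) — the stochastic representation of a nonsingular vector-variate EC law with a prescribed positive-definite scale — but this is the $p=1$ case of Definition~\ref{def:ellipcon} combined with uniqueness of characteristic functions, so no new idea is needed.
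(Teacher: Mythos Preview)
Your proposal is correct. The paper's own proof is considerably terser: it simply writes the characteristic function \eqref{ellipt:cf1} once and observes that, because the inner product $\langle\mZ,\mM\rangle$ and the Mahalanobis form $\langle\mZ,[\![\mZ;\Sigma_1,\sdots,\Sigma_p]\!]\rangle$ are invariant under the reshapings \eqref{defvec}--\eqref{canonical}, that single CF can be read off in the four displayed forms, one per item. Your argument works instead at the level of the stochastic representation \eqref{eq:defscale}, vectorizing or matricizing it and tracking Kronecker factors (including the perfect-shuffle permutation for the mode-$k$ case), and you treat the converse direction (\ref{reshap:par4})$\Rightarrow$(\ref{reshap:par1}) explicitly by rebuilding a Tucker-product representation from Cholesky factors of the $\Sigma_k$. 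The two routes are equivalent---your CF aside is exactly the paper's argument---but yours is more explicit about bidirectionality and about the permutation needed for $\rtY_{(k)}$, at the cost of some Kronecker-ordering bookkeeping that the paper suppresses by invoking ``invariance of the Mahalanobis distance'' in one stroke.
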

\begin{proof}
See the Supplementary Material Section \ref{proof:reshap_EC}.
\end{proof}
Spherical distributions receive their name from the observation that if  $\rtX \sim \Sph_{h_1,h_2,\hdots,h_p}(\varphi)$, then $\rtX = R\rtU,$ where the magnitude $R=||\rtX||$ is independent from $\rtU$
and $\vecc(\rtU)$ is uniformly distributed in the shell of a unit sphere of dimension $\prod_{k=1}^ph_k$ \citep{cambanisetal81,schoenberg38,khokhlov06}. 
Based on the CF in \eqref{spher:cf1} and the independence of $R$ and $\rtU$ , we have that
$
\psi_{\rtX}(\mZ) 
= 
\mbE\{\psi_{\rtU}(R\mZ)\}, 
$
and since $\rtU$ is also spherically distributed, the CG of $\rtX$ can
be written, in terms of the CG of $\rtU$, as
\begin{equation}\label{spher:stocharacgen}
\varphi(u) =\mbE\{\varphi_{\rtU}(R^2u)\}.
\end{equation}
The distribution of $R$ determines the distribution of $\rtX$. For instance, if $R^2\sim\chi^2_h$, where $h=\prod_{k=1}^ph_k$, then 
 $\varphi(u) =
 \exp(-u/2)$ and $\rtX$ follows a TVN distribution.
Further, if
$\rtY\sim\EC_{\bbm}(0,\sigma^2\Sigma_1,\sdots,\Sigma_p,\varphi)$ as in equation \eqref{eq:defscale} and
$\rtX = R\rtU$ as above,
 then $\rtZ
= \rtY/ ||\rtY||$ does not depend on $R$, meaning that the
distribution of $\rtZ$ does not depend on the original
EC distribution of $\rtY$. The distribution of $\rvZ=\vecc(\rtZ)$  is called
the \textit{elliptical angular distribution}
\citep{mardia72,watson83}, and is denoted as $\rvZ\sim\AC_m(\Sigma)$, with PDF 
\begin{equation}\label{eq:angular_pdf}
f_{\rvZ}(\bz) = \frac{\Gamma(m/2)}{2\pi^{m/2}}|\Sigma|^{-1/2}(\bz^\top\Sigma^{-1}\bz)^{-m/2}.
\end{equation}
The above PDF does not depend on $\sigma^2$. In Section
\ref{sec:robustmle}, we derive a robust tensor-variate Tyler
estimator that exploits the fact that $\rtZ$ is the same for any
EC distribution. 

\subsection{Marginal and conditional distributions}
Our next theorem shows that the TK product of an EC distributed
random tensor is also EC distributed. 
\begin{theorem}\label{ellipt:distform} 
Let $\rtY\! \sim\!  \EC_{\bbm}(
\mM,\Sigma_1,\sdots, \Sigma_p, \varphi)$ and $\sfA_k \in \mathbb{R}^{n_k\times m_{k}}$  $\forall k=1,\sdots,p$. Then 
\begin{equation*}
\begin{split}
[\![\rtY; \sfA_1,\sdots,\sfA_p ]\!] \sim \EC_{[n_1 , \sdots , n_p]}&( [\![\mM; \sfA_1,\sdots,\sfA_p ]\!],\sfA_1\Sigma_1\sfA_1^\top,\sdots, \sfA_p\Sigma_p\sfA_p^\top,\varphi).
\end{split}
\end{equation*}
\end{theorem}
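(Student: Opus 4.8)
\emph{Approach.} The cleanest route is to compute the characteristic function of $[\![\rtY;\sfA_1,\sdots,\sfA_p]\!]$ directly from the CF of $\rtY$ in \eqref{ellipt:cf1}, using two elementary Tucker-product identities that I would record first (they are routine and follow from the definitions collected in Section~\ref{app:tensdef}, so I would not dwell on their proofs): the \emph{composition law} $[\![[\![\mZ;\sfB_1,\sdots,\sfB_p]\!];\sfC_1,\sdots,\sfC_p]\!]=[\![\mZ;\sfC_1\sfB_1,\sdots,\sfC_p\sfB_p]\!]$, from $(\mZ\times_k\sfB_k)\times_k\sfC_k=\mZ\times_k(\sfC_k\sfB_k)$; and the \emph{adjoint relation} $\langle\mZ,[\![\mW;\sfA_1,\sdots,\sfA_p]\!]\rangle=\langle[\![\mZ;\sfA_1^\top,\sdots,\sfA_p^\top]\!],\mW\rangle$, which reduces mode by mode to $\langle\mZ,\mW\rangle=\tr(\mZ_{(k)}^\top\mW_{(k)})$ together with $(\mW\times_k\sfA_k)_{(k)}=\sfA_k\mW_{(k)}$. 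One could instead vectorize, invoke Theorem~\ref{thm:reshap_EC}, and use the affine-transformation property of vector-variate EC laws, but that borrows an external fact, so I prefer the self-contained CF argument.

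\emph{Main computation.} Set $\boldsymbol{\mathcal W}=[\![\rtY;\sfA_1,\sdots,\sfA_p]\!]$. The adjoint relation gives $\langle\mZ,\boldsymbol{\mathcal W}\rangle=\langle\mZ',\rtY\rangle$ with $\mZ'=[\![\mZ;\sfA_1^\top,\sdots,\sfA_p^\top]\!]$, hence $\psi_{\boldsymbol{\mathcal W}}(\mZ)=\psi_{\rtY}(\mZ')$. Substituting $\mZ'$ into \eqref{ellipt:cf1}: the linear term becomes, again by the adjoint relation, $\langle\mZ',\mM\rangle=\langle\mZ,[\![\mM;\sfA_1,\sdots,\sfA_p]\!]\rangle$; and since the composition law gives $[\![\mZ';\Sigma_1,\sdots,\Sigma_p]\!]=[\![\mZ;\Sigma_1\sfA_1^\top,\sdots,\Sigma_p\sfA_p^\top]\!]$, one further application of the adjoint relation followed by the composition law turns $\langle\mZ',[\![\mZ';\Sigma_1,\sdots,\Sigma_p]\!]\rangle$ into $\langle\mZ,[\![\mZ;\sfA_1\Sigma_1\sfA_1^\top,\sdots,\sfA_p\Sigma_p\sfA_p^\top]\!]\rangle$. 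Therefore $\psi_{\boldsymbol{\mathcal W}}(\mZ)=\exp\!\big(\I\langle\mZ,[\![\mM;\sfA_1,\sdots,\sfA_p]\!]\rangle\big)\varphi\!\big(\langle\mZ,[\![\mZ;\sfA_1\Sigma_1\sfA_1^\top,\sdots,\sfA_p\Sigma_p\sfA_p^\top]\!]\rangle\big)$, which is exactly the CF \eqref{ellipt:cf1} of $\EC_{[n_1,\sdots,n_p]}([\![\mM;\sfA_1,\sdots,\sfA_p]\!],\sfA_1\Sigma_1\sfA_1^\top,\sdots,\sfA_p\Sigma_p\sfA_p^\top,\varphi)$; uniqueness of characteristic functions finishes the proof.

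\emph{A stochastic-representation variant.} One may also start from Definition~\ref{def:ellipcon}: write $\rtY\overset{d}{=}\mM+[\![\rtX;\sfQ_1,\sdots,\sfQ_p]\!]$ with $\rtX\sim\Sph_{h_1,\sdots,h_p}(\varphi)$ and $\sfQ_k\sfQ_k^\top=\Sigma_k$; then linearity of the mode product in its tensor slot and the composition law give $[\![\rtY;\sfA_1,\sdots,\sfA_p]\!]\overset{d}{=}[\![\mM;\sfA_1,\sdots,\sfA_p]\!]+[\![\rtX;\sfA_1\sfQ_1,\sdots,\sfA_p\sfQ_p]\!]$, and $(\sfA_k\sfQ_k)(\sfA_k\sfQ_k)^\top=\sfA_k\Sigma_k\sfA_k^\top$. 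Matching Definition~\ref{def:ellipcon} then requires swapping $\sfA_k\sfQ_k$ for a preferred square root of $\sfA_k\Sigma_k\sfA_k^\top$ without altering the law -- i.e., the fact that $[\![\rtX;\cdot]\!]$ of a spherical $\rtX$ depends on each factor only through $\sfB_k\sfB_k^\top$, which is precisely the well-definedness of the symbol $\EC_{\bbm}(\cdot)$ and is itself cleanest to see from the CF computation above; so I would present the CF proof as primary and this one as a remark.

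\emph{Anticipated obstacle.} Nothing deep is at stake; the real care goes into (i) tracking transposes and the left/right order of matrix products through the nested composition/adjoint manipulations, and (ii) a genuine degeneracy point -- $\sfA_k\Sigma_k\sfA_k^\top$ is positive definite, as the notation $\EC_{[n_1,\sdots,n_p]}(\cdot)$ tacitly demands, only when $\sfA_k$ has full row rank. I would therefore either impose the standing hypothesis $\operatorname{rank}(\sfA_k)=n_k$ for all $k$, or observe that the CF computation is valid verbatim for arbitrary $\sfA_k$ and interpret the conclusion through the (then possibly singular) CF characterization \eqref{ellipt:cf1}.
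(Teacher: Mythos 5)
Your proposal is correct and follows essentially the same route as the paper's proof: both evaluate the characteristic function of the transformed tensor as $\psi_{\rtY}([\![\mZ;\sfA_1^\top,\sdots,\sfA_p^\top]\!])$ and simplify via the adjoint and composition identities for the Tucker product to obtain the CF of the claimed EC law. Your added remarks on the full-row-rank requirement for positive definiteness of $\sfA_k\Sigma_k\sfA_k^\top$ and on the stochastic-representation variant are sensible but not part of the paper's (terser) argument.
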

\begin{proof}  
See the Supplementary Material Section \ref{proof:ellipt:distform}.
\end{proof}
As a corollary, the marginal distributions follow from Theorem \ref{ellipt:distform} by choosing $\sfA_1,\sdots,\sfA_p$ appropriately, as described in detail in Section \ref{sec:marginals}. 
We now extend the vector-variate result of \citet{cambanisetal81} to derive conditional distributions in the tensor-variate setting.
\begin{theorem}\label{ellipt:conditional}
Suppose that $\rtY \sim \EC_{\bbm}(\mM,\Sigma_1,\sdots,\Sigma_p
,\varphi)$, where $m_p = n_1 + n_2>1$ for some $n_1, n_2 \in
\mathbb{N}$, and $\Sigma_p$ is a block matrix with $(i,j)$th block $\Sigma_{ij} \in \mathbb{R}^{n_i\times n_j}$.
Partition $\rtY$ and $\mM$ over the $pth$ mode with 
$\rtY_1,\mM_1 \in \mathbb{R}^{m_1\times m_2\times\hdots\times m_{p-1}\times
  n_1}$ and $\rtY_2,\mM_2 \in \mathbb{R}^{m_1\times m_2\times \hdots\times
  m_{p-1}\times n_2}$. Then 
\begin{equation}\label{eq:condres}
\begin{split}
\rtY_1|(\rtY_2 = \mY_2)\sim \EC_{[m_1,\sdots,m_{p-1},n_1]}
\Big(
\mM_{\{1\}|\{2\}}
,
\Sigma_1,\sdots,\Sigma_{p-1},
\Sigma_{p,\{11\}|\bullet}
,\varphi_{q(\mY_2)}
\Big),
\end{split}
\end{equation}
where
$\mM_{\{1\}|\{2\}} = \mM_1 + (\mY_2 - \mM_2)\times_p(\Sigma_{p,12}\Sigma_{p,22}^{-1})$ 
, 
$\Sigma_{p,\{11\}|\bullet} = \Sigma_{p,11}-\Sigma_{p,12}\Sigma_{p,22}^{-1}\Sigma_{p,21}$
,
$\varphi_{q(\mY_2)}(u) =\mbE\{R^2_{q(\rtY_2)}\varphi_{\rtU}(u)|\rtY_2 = \mY_2\}$,
$
R^2_{q(\rtY_2)} \!=\! D^2_{\Sigma_{\bullet}}(\rtY_1,\mM_1),
$
and
$
\Sigma_{\bullet} \!=\! \Sigma_{p,\{11\}|\bullet}\otimes \Sigma_{-1}.
$
\end{theorem}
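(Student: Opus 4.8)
The plan is to reduce the tensor-variate conditional statement to the vector-variate conditional result of \citet{cambanisetal81} via the reshaping equivalences of Theorem~\ref{thm:reshap_EC}, and then to re-inflate the resulting vector-variate EC distribution back to tensor form. First I would vectorize: by part~\ref{reshap:par4} of Theorem~\ref{thm:reshap_EC}, $\vecc(\rtY)\sim\EC_m(\vecc(\mM),\Sigma,\varphi)$ with $\Sigma=\otimes_{i=p}^1\Sigma_i$. The partition of $\rtY$ over the $p$th mode into $\rtY_1$ and $\rtY_2$ corresponds, under vectorization, to a partition of $\vecc(\rtY)$ into two blocks of sizes $n_1 m_{-1}'$ and $n_2 m_{-1}'$ where $m_{-1}' = \prod_{i=1}^{p-1}m_i$; crucially, because the $p$th mode is the slowest-varying index in $\vecc(\cdot)$, this is a \emph{contiguous} block partition, and the corresponding block structure of $\Sigma$ is $\Sigma_{p,ij}\otimes\Sigma_{-1}$ for $i,j\in\{1,2\}$ where $\Sigma_{-1}=\otimes_{i=p-1}^1\Sigma_i$. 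This is the one place where a short verification is needed, using the mixed-product property of Kronecker products together with the definition of $\vecc$.

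Next I would invoke the vector-variate conditional result: for $\vecc(\rtY)\sim\EC_m(\vecc(\mM),\Sigma,\varphi)$ partitioned as above, \citet{cambanisetal81} gives
\begin{equation*}
\vecc(\rtY_1)\,|\,(\vecc(\rtY_2)=\vecc(\mY_2)) \sim \EC\big(\mu_{1|2},\,\Sigma_{11}-\Sigma_{12}\Sigma_{22}^{-1}\Sigma_{21},\,\varphi_{q}\big),
\end{equation*}
with $\mu_{1|2}=\vecc(\mM_1)+\Sigma_{12}\Sigma_{22}^{-1}(\vecc(\mY_2)-\vecc(\mM_2))$ and the updated characteristic generator $\varphi_q$ depending on the conditioning value through the quadratic form $q(\mY_2)$. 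Here $\Sigma_{ij}=\Sigma_{p,ij}\otimes\Sigma_{-1}$, so $\Sigma_{12}\Sigma_{22}^{-1}=(\Sigma_{p,12}\Sigma_{p,22}^{-1})\otimes I_{m_{-1}'}$ by the mixed-product property (the $\Sigma_{-1}$ factors cancel), and similarly the conditional scale matrix becomes $(\Sigma_{p,11}-\Sigma_{p,12}\Sigma_{p,22}^{-1}\Sigma_{p,21})\otimes\Sigma_{-1}=\Sigma_{p,\{11\}|\bullet}\otimes\Sigma_{-1}$. Translating $\Sigma_{12}\Sigma_{22}^{-1}(\vecc(\mY_2)-\vecc(\mM_2))$ back through $\vecc$ and the mode-$p$ product identity $\vecc(\mX\times_p\sfA)=(\sfA\otimes I)\vecc(\mX)$ yields exactly $\mM_{\{1\}|\{2\}}=\mM_1+(\mY_2-\mM_2)\times_p(\Sigma_{p,12}\Sigma_{p,22}^{-1})$.

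Finally I would convert the conditional vector-variate EC distribution back to a tensor-variate one using Theorem~\ref{thm:reshap_EC} in the reverse direction: since the conditional covariance factors as $\Sigma_{p,\{11\}|\bullet}\otimes\Sigma_{-1}=\Sigma_{p,\{11\}|\bullet}\otimes\Sigma_{p-1}\otimes\cdots\otimes\Sigma_1$, part~\ref{reshap:par4}$\Rightarrow$part~\ref{reshap:par1} gives
\begin{equation*}
\rtY_1\,|\,(\rtY_2=\mY_2)\sim \EC_{[m_1,\sdots,m_{p-1},n_1]}\big(\mM_{\{1\}|\{2\}},\Sigma_1,\sdots,\Sigma_{p-1},\Sigma_{p,\{11\}|\bullet},\varphi_{q(\mY_2)}\big),
\end{equation*}
which is \eqref{eq:condres}. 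It remains only to identify the updated generator: I would check that the quadratic form appearing in the vector-variate result, $(\vecc(\mY_2)-\vecc(\mM_2))^\top\Sigma_{22}^{-1}(\vecc(\mY_2)-\vecc(\mM_2))$, equals $D^2_{\Sigma_{p,22}\otimes\Sigma_{-1}}(\rtY_2,\mM_2)$, and that the Schur-complement bookkeeping lines up so that $R^2_{q(\rtY_2)}=D^2_{\Sigma_\bullet}(\rtY_1,\mM_1)$ with $\Sigma_\bullet=\Sigma_{p,\{11\}|\bullet}\otimes\Sigma_{-1}$ plays the role of the conditional radial variable, giving $\varphi_{q(\mY_2)}(u)=\mbE\{R^2_{q(\rtY_2)}\varphi_{\rtU}(u)\mid\rtY_2=\mY_2\}$ as stated. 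The main obstacle is purely organizational rather than deep: keeping the Kronecker-factor bookkeeping straight through the contiguous mode-$p$ partition and making sure the cancellation of the $\Sigma_{-1}$ factors in $\Sigma_{12}\Sigma_{22}^{-1}$ and in the Schur complement is justified by the mixed-product property; once that is set up, everything else is a direct appeal to \citet{cambanisetal81} and Theorem~\ref{thm:reshap_EC}.
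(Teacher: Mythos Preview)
Your proposal is correct and follows essentially the same route as the paper's own proof: vectorize via Theorem~\ref{thm:reshap_EC}, identify the Kronecker block structure $\Sigma_{p,ij}\otimes\Sigma_{-p}$, apply Corollary~5 of \citet{cambanisetal81}, simplify the conditional mean and Schur complement using the mixed-product property, and then invoke Theorem~\ref{thm:reshap_EC} in reverse. The paper's write-up is terser but uses exactly these ingredients; your additional remarks about the mode-$p$ index being slowest-varying (so the partition is contiguous) and about the $\Sigma_{-p}$ factors cancelling in $\Sigma_{12}\Sigma_{22}^{-1}$ are the right justifications for the steps the paper leaves implicit.
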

\begin{proof}
See the Supplementary Material Section \ref{proof:ellipt:conditional}.
\end{proof}
The CG  $\varphi_q(\mY_2)$ in  \eqref{eq:condres} is a conditional moment, just like in the more general case of \eqref{spher:stocharacgen}. Although Theorem \ref{ellipt:conditional} applies only for the conditional distribution that results from partitioning the last mode of the random tensor, we can find conditional distributions of any subtensor by applying Theorem \ref{ellipt:conditional} multiple times, as demonstrated in Section \ref{supp:condsupp}.

\subsection{Moments}

  The moments of EC tensor-variate distributions are found by
  differentiating \eqref{ellipt:cf1}. We provide the first four
  moments in Theorem \ref{expectelement} and then use
  them in Theorem \ref{thm:moments_ec} to find moments of other special forms. 
\begin{theorem}\label{expectelement} 
Suppose $\rtY \sim  \EC_{\bbm}( \mM,\Sigma_1,\Sigma_2,\sdots,
\Sigma_p,\varphi)$, and let $\bbi=(i_1,\sdots, i_p)^\top$,
$\bj=(j_1,\sdots,j_p)^\top$, $\bk=(k_1,\sdots,k_p)^\top$ and $\bl=(l_1,\sdots,l_p)^\top$ be sets of indices such that $i_q,j_q,k_q,l_q\in \{1,\sdots,m_q\}$ for $q =1,\sdots,p$. Further, denote $\rtY(i_1,\sdots, i_p)=Y_{\bbi} $, $\mM(i_1,\sdots, i_p)=m_{\bbi}$ and $\sigma_{\bbi\bj} = \prod_{q=1}^p \Sigma_q(i_q,j_q)$. Then
\begin{enumerate}
\item \label{expectelement:1} $\mbE(Y_{\bbi}) = m_{\bbi}$.
\item  \label{expectelement:2}
$\mbE(Y_{\bbi}Y_{\bj}) 
= m_{\bbi}m_{\bj}-2 \varphi'(0)\sigma_{\bbi\bj}$.
\item \label{expectelement:3} $\mbE(Y_{\bbi}Y_{\bj}Y_{\bk}) 
\hspace*{-.1cm}=\hspace*{-.1cm}
 m_{\bbi}m_{\bj}m_{\bk}
 \hspace*{-.1cm}-\hspace*{-.1cm}
 2\varphi'(0)(
m_{\bbi}\sigma_{\bk\bj}
\hspace*{-.1cm}+\hspace*{-.1cm}
m_{\bj} \sigma_{\bbi\bk}
\hspace*{-.1cm}+\hspace*{-.1cm}
m_{\bk}\sigma_{\bbi\bj}
).
$
\item \label{expectelement:4} $\mbE(Y_{\bbi}Y_{\bj}Y_{\bk}Y_{\bl}) = m_{\bbi}m_{\bj}m_{\bk}m_{\bl}
+ 4 \varphi''(0)\Big(\sigma_{\bk\bl}\sigma_{\bj\bbi}
+\sigma_{\bj\bl}\sigma_{\bbi\bk}
+\sigma_{\bbi\bl}\sigma_{\bj\bk}\Big)-2 \varphi'(0)\Big(m_{\bl}m_{\bk}\sigma_{\bbi\bj}
+m_{\bj}m_{\bl}\sigma_{\bbi\bk}
+m_{\bj}m_{\bk}\sigma_{\bbi\bl}
+m_{\bbi}m_{\bl}\sigma_{\bj\bk}
+m_{\bbi}m_{\bk}\sigma_{\bj\bl}
+m_{\bbi}m_{\bj}\sigma_{\bk\bl}\Big).
$
\end{enumerate}
Here, the $k$th statement of the theorem requires $\varphi^{(k)}(0)<\infty$, for $k=1,2,3,4$.
\end{theorem}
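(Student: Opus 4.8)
The plan is to read off all four moments by differentiating the characteristic function \eqref{ellipt:cf1} at the origin. Write $u(\mZ)=\langle\mZ,\mM\rangle=\sum_{\bbi}Z_{\bbi}m_{\bbi}$ and $v(\mZ)=\langle\mZ,[\![\mZ;\Sigma_1,\sdots,\Sigma_p]\!]\rangle=\sum_{\bbi,\bj}\sigma_{\bbi\bj}Z_{\bbi}Z_{\bj}$ (the last identity using symmetry of the $\Sigma_q$), so that $\psi_{\rtY}(\mZ)=e^{\I u(\mZ)}\varphi(v(\mZ))$. Writing $\partial_{\bbi}$ for $\partial/\partial Z_{\bbi}$, the mixed moments are recovered through
\[
\mbE(Y_{\bbi_1}\sdots Y_{\bbi_r})=(-\I)^{r}\,\partial_{\bbi_1}\sdots\partial_{\bbi_r}\psi_{\rtY}(\mZ)\big|_{\mZ=\bzero},
\]
and the hypothesis $\varphi^{(k)}(0)<\infty$ is exactly what is needed to justify differentiating under the expectation and to guarantee that the order-$k$ moments exist; this regularity check is the one genuinely analytic step and should be dispatched first (in fact it is the main obstacle — everything after it is bookkeeping).

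Next I would record the elementary facts that drive the computation. Since $u$ is linear, $\partial_{\bbi}u=m_{\bbi}$ and every higher derivative of $u$ vanishes. Since $v$ is a symmetric quadratic form, $\partial_{\bbi}v\big|_{\bzero}=0$, $\partial_{\bbi}\partial_{\bj}v=2\sigma_{\bbi\bj}$, and all third- and higher-order derivatives of $v$ are identically zero. Finally $u(\bzero)=v(\bzero)=0$ and $\varphi(0)=1$ (from $\psi_{\rtY}(\bzero)=1$).

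The main computation is the multivariate Leibniz rule applied to the product $e^{\I u}\cdot\varphi(v)$: the $r$th derivative at $\bzero$ is $\sum_{S\subseteq\{1,\sdots,r\}}\big(\partial_S e^{\I u}\big|_{\bzero}\big)\big(\partial_{S^{c}}\varphi(v)\big|_{\bzero}\big)$, where $\partial_S$ denotes the mixed partial in the indices labelled by $S$. Because $u$ is linear, $\partial_S e^{\I u}\big|_{\bzero}=\I^{|S|}\prod_{j\in S}m_{\bbi_j}$. For the factor $\partial_{S^{c}}\varphi(v)\big|_{\bzero}$ I would use Fa\`a di Bruno: it is a sum over set partitions of $S^{c}$ of $\varphi^{(\#\text{blocks})}(0)$ times the product of $\partial_{\text{block}}v\big|_{\bzero}$; since $\partial_{\text{block}}v\big|_{\bzero}$ vanishes unless the block has size exactly $2$ (in which case it is $2\sigma$), only perfect matchings of $S^{c}$ survive. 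Hence this term is zero unless $|S^{c}|$ is even, and then equals $2^{|S^{c}|/2}\varphi^{(|S^{c}|/2)}(0)$ times the sum over perfect matchings of $S^{c}$ of the corresponding products of $\sigma$'s. Collecting phases, $(-\I)^{r}\I^{|S|}=(-\I)^{|S^{c}|}=(-1)^{|S^{c}|/2}$ when $|S^{c}|$ is even, which combines with $2^{|S^{c}|/2}$ into $(-2)^{|S^{c}|/2}$; this gives a closed form for $\mbE(Y_{\bbi_1}\sdots Y_{\bbi_r})$ as a sum over subsets $S$ with $|S^{c}|$ even.

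Specializing finishes the proof: $r=1$ forces $S=\{1\}$, giving $m_{\bbi}$; $r=2$ gives $m_{\bbi}m_{\bj}$ from $S^{c}=\emptyset$ and $-2\varphi'(0)\sigma_{\bbi\bj}$ from the single matched pair; $r=3$ adds the three $-2\varphi'(0)\,m\cdot\sigma$ cross-terms; and $r=4$ produces $m_{\bbi}m_{\bj}m_{\bk}m_{\bl}$ ($|S^{c}|=0$), the six $-2\varphi'(0)\,m\,m\,\sigma$ terms ($|S^{c}|=2$, one matching each), and the three $4\varphi''(0)\,\sigma\sigma$ terms ($|S^{c}|=4$, three matchings). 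One may equally bypass Fa\`a di Bruno and differentiate $e^{\I u}\varphi(v)$ directly up to four times, discarding at each stage every term carrying an undifferentiated factor $\partial_{\bbi}v$ (which vanishes at $\bzero$); this is a short self-contained alternative for the order-$\le 4$ statements and yields the same expressions.
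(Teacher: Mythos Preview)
Your proposal is correct and follows essentially the same route as the paper: both differentiate the characteristic function $\psi_{\rtY}(\mZ)=e^{\I u(\mZ)}\varphi(v(\mZ))$ at $\mZ=\bzero$, using that $u$ is linear and $v$ is a symmetric quadratic form (so only the constants $m_{\bbi}$, $2\sigma_{\bbi\bj}$, $\varphi'(0)$, $\varphi''(0)$ survive). The paper carries out the four differentiations directly, which is exactly the ``bypass Fa\`a di Bruno'' alternative you mention at the end; your Leibniz/Fa\`a di Bruno organization into subsets $S$ with $|S^{c}|$ even and perfect matchings of $S^{c}$ is a clean way to package the same bookkeeping and yields the identical terms.
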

\begin{proof}
See  the Supplementary Material Section \ref{app:proofexpectelement}.
\end{proof}
Theorem \ref{ellipt:distform} allows us to derive many different
moments. We derive some of them now.
\begin{theorem}\label{thm:moments_ec}
Let $\rtY \sim \EC_{\bbm}(\mM,\Sigma_1,\Sigma_2,\hdots,\Sigma_p,\varphi)$, where $\varphi''(0)\!<\!\infty$. 
Further, let $\sfA_k$ and $\sfB_k$ be $n_k\times m_k$ and $h_k\times
m_k$ matrices for $k=1,2,\hdots,p$, and define $D^2_A(\mX) = \langle
\mX,[\![\mX;\sfA_1,\sfA_2,\sdots,\sfA_p]\!]\rangle$ whenever each
$A_k$ is a square matrix. Then,
\begin{enumerate}
\item \label{par51} $\mbE(\rtY) = \mM$.
\item \label{par52}
$
\Var\{\vecc(\rtY)\} = -2\varphi'(0)\bigotimes\limits_{k=p}^1 \Sigma_k.
$
\item \label{par54} If $n_k=m_k$ for all $k=1,2,\hdots,p$, then 

$
\mbE\left\{D^2_A(\rtY)\right\} = D^2_A(\mM) -2\varphi'(0)\prod\limits_{k=1}^p \tr(\Sigma_k\sfA_k^\top).
$
\item\label{par55} If $\mV$ is of size $n_1\times n_2\times\hdots\times n_p$, then 

$\mbE(\langle \mV,[\![\rtY;\sfA_1,\hdots,\sfA_p]\!]\rangle\rtY) 
=\langle \mV,[\![\mM;\sfA_1,\hdots,\sfA_p]\!]\rangle\mM
 -
2\varphi'(0)[\![\mV;\Sigma_1 \sfA_1^\top,\hdots,\Sigma_p
\sfA_p^\top]\!].
$
\item\label{par56} If $n_k=m_k$ for all $k=1,2,\hdots,p$, then

$ -2\varphi'(0)\Big[\prod\limits_{k=1}^p \tr(\Sigma_k\sfA_k^\top)\mM 
 +[\![\mM;\Sigma_1 \sfA_1,\hdots,\Sigma_p \sfA_p]\!] 
   	+
 [\![\mM;\Sigma_1 \sfA_1^\top,\hdots,\Sigma_p \sf A_p^\top]\!]\Big].
$
\item \label{par57} If $\mM=0$, $\varphi^{(4)}(0)<\infty$ and $n_k=h_k = m_k$ for all $k=1,\hdots,p$, then 

$\mbE\big
  \{
D^2_A(\rtY)D^2_B(\rtY)
\big\} \!=\!
 4\varphi''(0)\Big[
\prod\limits_{k=1}^p \left\{ \tr(\sfA_k\Sigma_k)\tr(\sfB_k\Sigma_k) \right\}
+
\prod\limits_{k=1}^p \tr(\sfA_k\Sigma_k\sfB_k^\top\Sigma_k) 
\!+\!
\prod\limits_{k=1}^p \tr(\sfA_k\Sigma_k\sfB_k\Sigma_k) 
\Big].
$
\end{enumerate}
\end{theorem}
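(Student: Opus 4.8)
The plan is to obtain every item by an elementwise expansion that reduces it to the scalar moment formulas of Theorem~\ref{expectelement}, using throughout the fact that all the objects in play factor across the $p$ modes.

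Items~\ref{par51} and \ref{par52} are immediate. Coordinatewise, Theorem~\ref{expectelement}(\ref{expectelement:1}) gives $\mbE(Y_{\bbi})=m_{\bbi}$, i.e.\ $\mbE(\rtY)=\mM$; and by Theorem~\ref{expectelement}(\ref{expectelement:2}) the $(\bbi,\bj)$ entry of $\Var\{\vecc(\rtY)\}$ equals $-2\varphi'(0)\sigma_{\bbi\bj}=-2\varphi'(0)\prod_{q=1}^{p}\Sigma_q(i_q,j_q)$, which is exactly the $(\bbi,\bj)$ entry of $-2\varphi'(0)\bigotimes_{k=p}^1\Sigma_k$.

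For the remaining items the recipe is the same. First, expand the target into a sum over index vectors of products of coordinates of $\rtY$ (appearing with multiplicity two in items~\ref{par54} and \ref{par55}, three in item~\ref{par56}, and four in item~\ref{par57}) against coordinates of $\sfA_k$, $\sfB_k$ and $\mV$; for instance $D^2_A(\rtY)=\langle\rtY,[\![\rtY;\sfA_1,\sdots,\sfA_p]\!]\rangle=\sum_{\bbi,\bj}Y_{\bbi}Y_{\bj}\prod_{q=1}^{p}\sfA_q(j_q,i_q)$ and $D^2_A(\rtY)D^2_B(\rtY)=\sum_{\bbi,\bj,\bk,\bl}Y_{\bbi}Y_{\bj}Y_{\bk}Y_{\bl}\prod_{q=1}^{p}\sfA_q(j_q,i_q)\sfB_q(l_q,k_q)$. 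Second, substitute the matching moment formula from Theorem~\ref{expectelement}: (\ref{expectelement:2}) for items~\ref{par54}, \ref{par55}, (\ref{expectelement:3}) for item~\ref{par56}, and (\ref{expectelement:4}) for item~\ref{par57}, where $\mM=0$ annihilates all but the $\varphi''(0)$ summand, leaving $\mbE(Y_{\bbi}Y_{\bj}Y_{\bk}Y_{\bl})=4\varphi''(0)(\sigma_{\bk\bl}\sigma_{\bj\bbi}+\sigma_{\bj\bl}\sigma_{\bbi\bk}+\sigma_{\bbi\bl}\sigma_{\bj\bk})$. Third, since $\sigma_{\bbi\bj}=\prod_q\Sigma_q(i_q,j_q)$ and each coefficient factors as $\prod_q\sfA_q(\cdot,\cdot)$, $\prod_q\sfB_q(\cdot,\cdot)$, every resulting multi-sum splits into a product over $q=1,\sdots,p$ of a single-mode sum. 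Finally, identify each single-mode sum either as a trace, via $\sum_{i,j}\Sigma_q(i,j)\sfA_q(j,i)=\tr(\sfA_q\Sigma_q)=\tr(\Sigma_q\sfA_q^\top)$ and its analogues (using symmetry of $\Sigma_q$ to move transposes around), or as one coordinate of a Tucker product, e.g.\ $\sum_{\bbi}m_{\bbi}\prod_q(\sfA_q\Sigma_q)(l_q,i_q)=[\![\mM;\Sigma_1\sfA_1^\top,\sdots,\Sigma_p\sfA_p^\top]\!](\bl)$, and reassemble. In item~\ref{par57} the three pairings $\sigma_{\bk\bl}\sigma_{\bj\bbi}$, $\sigma_{\bj\bl}\sigma_{\bbi\bk}$, $\sigma_{\bbi\bl}\sigma_{\bj\bk}$ give, respectively, $\prod_q\tr(\sfA_q\Sigma_q)\tr(\sfB_q\Sigma_q)$ (the first pairing decouples $\{\bbi,\bj\}$ from $\{\bk,\bl\}$), $\prod_q\tr(\sfA_q\Sigma_q\sfB_q^\top\Sigma_q)$, and $\prod_q\tr(\sfA_q\Sigma_q\sfB_q\Sigma_q)$.

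A slightly cleaner alternative for items~\ref{par54} and \ref{par57} is to vectorize: by the vec/Tucker identity $D^2_A(\rtY)=\vecc(\rtY)^\top\big(\bigotimes_{k=p}^1\sfA_k\big)\vecc(\rtY)$ and $\vecc(\rtY)\sim\EC_m(\vecc(\mM),\Sigma,\varphi)$ from Theorem~\ref{thm:reshap_EC}(\ref{reshap:par4}), one applies the classical vector-variate EC formulas for the mean of one quadratic form and for the expected product of two quadratic forms, then converts back using $\tr(\bigotimes_k C_k)=\prod_k\tr(C_k)$, $(\bigotimes_k C_k)(\bigotimes_k D_k)=\bigotimes_k(C_kD_k)$ and $(\bigotimes_k C_k)^\top=\bigotimes_k C_k^\top$. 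I expect the only real obstacle to be the index bookkeeping in items~\ref{par56}--\ref{par57}: one must carefully track which index of each $\sfA_q$ or $\sfB_q$ contracts against which index of each $\Sigma_q$, so that the transpose in $\tr(\sfA_q\Sigma_q\sfB_q^\top\Sigma_q)$ versus its absence in $\tr(\sfA_q\Sigma_q\sfB_q\Sigma_q)$ comes out right, and one must check that the pairing combinatorics of the four index vectors matches the three-term structure of Theorem~\ref{expectelement}(\ref{expectelement:4}). All moment identities invoked are valid under the finiteness hypotheses stated in Theorems~\ref{expectelement} and \ref{thm:moments_ec} (in particular $\varphi''(0)<\infty$ throughout, and additionally $\varphi^{(4)}(0)<\infty$ for item~\ref{par57}), which is why they are imposed.
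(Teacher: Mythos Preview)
Your proposal is correct and follows the same overall strategy as the paper---reduce each item to the scalar moments of Theorem~\ref{expectelement} and then re-factor across modes---but the execution differs in a few places. For items~\ref{par54} and~\ref{par55} the paper takes precisely what you call the ``cleaner alternative'': it vectorizes first, writing $\mbE\{D^2_A(\rtY)\}=\tr\big\{(\bigotimes_k\sfA_k)\,\mbE[(\vecc\rtY)(\vecc\rtY)^\top]\big\}$ and the vectorized LHS of item~\ref{par55} as $\mbE[(\vecc\rtY)(\vecc\rtY)^\top](\bigotimes_k\sfA_k^\top)\vecc(\mV)$, and then substitutes the second-moment matrix obtained from items~\ref{par51}--\ref{par52}. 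For item~\ref{par56} the paper writes $\rtY\overset{d}{=}\rtX+\mM$ with $\rtX$ centered, expands, drops the odd moments of $\rtX$, and invokes items~\ref{par54} and~\ref{par55} on the surviving pieces; your direct appeal to Theorem~\ref{expectelement}(\ref{expectelement:3}) reaches the same three cross-terms without that detour. For item~\ref{par57} the paper first passes to a spherical variable via $(\sfC_k,\sfD_k)=(\Sigma_k^{1/2}\sfA_k\Sigma_k^{1/2},\Sigma_k^{1/2}\sfB_k\Sigma_k^{1/2})$, so that each $\sigma_{\bbi\bj}$ becomes a Kronecker delta and the four-index sum collapses immediately; your version keeps the $\Sigma_q$'s in place throughout, which is equally valid but carries slightly heavier index bookkeeping. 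Either route lands on the same three trace products.
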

\begin{proof}
See the Supplementary Material Section \ref{proof.b2}.
\end{proof}
From Parts~\ref{par54} and \ref{par57} of Theorem
\ref{thm:moments_ec}, we see that for $\rtY$ specified as in Definition \ref{def:ellipcon},  $\mathbb{E}\{ D^2_{\Sigma}(\rtY,\mM)\}= -2\varphi'(0)\times m
$, and
$
\mathbb{V}ar\{ D^2_{\Sigma}(\rtY,\mM)\}=4\times [\varphi''(0)\times(m^2 + 2m) - (\varphi'(0)\times m)^2]. 
$

\subsection{The EC tensor-variate Wishart distribution} The Wishart
distribution~\citep{wishart28} plays an important role in multivariate
statistics in inference on sample dispersion
matrices \citep{johnsonandwichern08}.  The matrix-variate EC Wishart
distribution was proposed
in~\citet{caroetal14,diazandgutierrez11,sutradharandali89} as per 
\begin{definition}\label{def:matwish}
A matrix $\rmS$ is said to follow an EC matrix-variate  Wishart
distribution with DF  $n$, scale parameter $\Sigma$ and
CG $\varphi$, written as $\rmS\sim \ECW_m(n,\Sigma,\varphi)$, if there exists a random matrix  $\rmX\sim \EC_{m,n}(0,\Sigma,I_n,\varphi)$ such that $\rmS \overset{d}{=} \rmX\rmX^\top$.
\end{definition}
If $n>m$, then the PDF of $\rmS$ can be written as 
\begin{equation}
f_{\rmS}(\sfS)=
\dfrac{\pi^{mn/2}}{\Gamma_m(n/2)|\Sigma|^{n/2}}|\sfS|^{(n-m-1)/2}g(\tr(\Sigma^{-1}\sfS)),
\end{equation}
where $g(\cdot)$ is the density generator of $\rmX$. For $n<m$, we have the
singular matrix-variate EC Wishart distribution. For completeness, we derive the PDF of the functionally independent elements of $\rmS$  when $n<m$ in  Section \ref{sec:singularecwishart}.
 We now define the EC TV Wishart distribution. 
\begin{definition}\label{def:tenswish}
The random tensor 
$\rtS \in\mathbb{R}^{\times_{j=1}^p m_j}
$ 
 follows a tensor-valued
EC Wishart distribution with DF $n$, scale matrices
$\Sigma_1,\Sigma_2,\sdots,\Sigma_p$ and CG $\varphi$ ({\em i.e.},
$\rtS\sim
\ECW_{\bbm}(n,\Sigma_1,\sdots,\newline\Sigma_p,\varphi)$)
if $\rtS_{<p>}\sim \ECW_{m}(n,\Sigma,\varphi)$. 
\end{definition}
In the vector-variate case, the Wishart distribution can be
constructed from the vector outer product sum of independent and identically distributed (IID) MVN-distributed
random vectors. Similarly, the EC tensor-variate Wishart distribution can be constructed from the outer tensor product sum of uncorrelated tensors $\rtY_1,\rtY_2,\sdots,\rtY_n$ (where $\rtY_i = \rtY\times_{p+1}{\be_i^n}^\top$ for all $i=1,2,\dots,n$ and $\be_i^n \in \mbR^n$ is a unit basis vector with 1 as the $i$th
element and 0 elsewhere) which jointly follow
\begin{equation}\label{eq:jointfull}
\rtY\sim\EC_{[\bbm,n]^\top} (\mM_n, \sigma^2\Sigma_1,\Sigma_2, \sdots, \Sigma_p,I_n,\varphi).
\end{equation}
Here $\Sigma_k$ is constrained to have $\Sigma_k(1,1) = 1$ for all $k=1,2,\dots,p$ while $\sigma^2$ captures the overall proportionality constraint of $\Sigma$, and 
 $\mM_{n}\times_{p+1}{\be_i^n}^\top = \mM$ for all $i=1,2,\sdots,n$.
 In the next theorem, we show that $\rtY_1 ,\rtY_2,\sdots\rtY_n$ are uncorrelated, have identical marginal distributions,  and that their tensor outer product sum follows the EC Wishart distribution.
\begin{theorem}\label{thm:tenswish1}
Let $ \rtY_1, \rtY_2\hdots, \rtY_n$ jointly follow the distribution in
\eqref{eq:jointfull}. Then, we have:
\begin{enumerate}
\item \label{thm:tenswish1:1}
 $\rtY_i\sim\EC_{\bbm} (\mM, \sigma^2\Sigma_1, \Sigma_2,\sdots, \Sigma_p,\varphi)$ and $\forall i\neq j, \mathbb{E}\left(\vecc(\rtY_i-\mM)\vecc(\rtY_j-\mM)^\top\right) = 0$.
\item \label{thm:tenswish1:2}
$
\sum\limits_{i=1}^n (\rtY_i \circ \rtY_i)
 \sim 
\ECW_{\bbm}(n,\sigma^2\Sigma_1,\Sigma_2,\sdots,\Sigma_p,\varphi).
$
\end{enumerate}
Here $\circ$ denotes the tensor outer product that we define in Section \ref{app:tensdef}.
\end{theorem}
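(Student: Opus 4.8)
The plan is to establish both parts by reshaping the joint tensor $\rtY$ in \eqref{eq:jointfull} into a matrix and then invoking the matrix-variate constructions (Definitions~\ref{def:matwish} and \ref{def:tenswish}) together with the moment formulas already developed. For Part~\ref{thm:tenswish1:1}, I would first apply Theorem~\ref{ellipt:distform} with the mode-$(p+1)$ multiplier $\sfA_{p+1} = {\be_i^n}^\top$ (and identity on all other modes) to the joint distribution \eqref{eq:jointfull}; since ${\be_i^n}^\top I_n \be_i^n = 1$ and $\mM_n \times_{p+1}{\be_i^n}^\top = \mM$ by hypothesis, this immediately yields $\rtY_i \sim \EC_{\bbm}(\mM, \sigma^2\Sigma_1,\Sigma_2,\sdots,\Sigma_p,\varphi)$, giving the identical-marginals claim. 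For the uncorrelatedness, I would use Part~\ref{par52} of Theorem~\ref{thm:moments_ec} applied to the full joint tensor: $\Var\{\vecc(\rtY)\} = -2\varphi'(0)(I_n \otimes \Sigma_p \otimes \cdots \otimes \Sigma_1)\cdot\sigma^2$ (up to the bookkeeping of where $\sigma^2$ sits), so that the cross-covariance block between $\vecc(\rtY_i)$ and $\vecc(\rtY_j)$ for $i\neq j$ is governed by the $(i,j)$ entry of $I_n$, which is $0$. Extracting $\vecc(\rtY_i - \mM)$ as $(\be_i^n{}^\top \otimes I_m)\vecc(\rtY - \mM_n)$ makes this precise: $\mathbb{E}(\vecc(\rtY_i-\mM)\vecc(\rtY_j-\mM)^\top) = (\be_i^n{}^\top \otimes I_m)\Var\{\vecc(\rtY-\mM_n)\}(\be_j^n \otimes I_m) = -2\sigma^2\varphi'(0)(\be_i^n{}^\top I_n \be_j^n)(\text{Kronecker of }\Sigma\text{'s}) = 0$.

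For Part~\ref{thm:tenswish1:2}, the key is to recognize the tensor outer product sum $\sum_{i=1}^n \rtY_i \circ \rtY_i$ in matricized form. The plan is to show that $\big(\sum_{i=1}^n \rtY_i \circ \rtY_i\big)_{<p>}$ equals $\sfS\sfS^\top$ where $\sfS = \rtY_{<p>}$ is the mode-$\langle p\rangle$ canonical matricization of the joint tensor (an $m \times n$ matrix whose $i$th column is $\vecc(\rtY_i)$, assuming $\mM = 0$ for the Wishart construction as in Definition~\ref{def:matwish}). This is a reshaping identity: the $(\bbi,\bj)$ entry of $\sum_i \rtY_i\circ\rtY_i$ is $\sum_i Y_{i,\bbi}Y_{i,\bj}$, which is exactly the $(\vecc$-index of $\bbi$, $\vecc$-index of $\bj)$ entry of $\sfS\sfS^\top$. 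Then, by Theorem~\ref{thm:reshap_EC} (equivalence of parts \ref{reshap:par3} and \ref{reshap:par4}) applied to the joint $(p+1)$-way tensor, $\rtY_{<p>} \sim \EC_{m,n}(0, \sigma^2\Sigma, I_n, \varphi)$ with $\Sigma = \otimes_{i=p}^1 \Sigma_i$ and the $\sigma^2$ absorbed into the first factor. By Definition~\ref{def:matwish}, $\sfS\sfS^\top \sim \ECW_m(n, \sigma^2\Sigma, \varphi)$, and then Definition~\ref{def:tenswish} (which defines the tensor EC Wishart precisely through its $\langle p\rangle$ matricization) gives $\sum_i \rtY_i\circ\rtY_i \sim \ECW_{\bbm}(n,\sigma^2\Sigma_1,\Sigma_2,\sdots,\Sigma_p,\varphi)$, as claimed.

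The main obstacle I anticipate is purely notational bookkeeping rather than conceptual: carefully tracking how the overall scale $\sigma^2$ and the normalization $\Sigma_k(1,1)=1$ interact with the Kronecker structure, and verifying that the column ordering in the canonical matricization $\rtY_{<p>}$ matches the indexing convention under which $\sum_i \rtY_i \circ \rtY_i$ matricizes to $\sfS\sfS^\top$ (i.e., that the mode-$(p+1)$ fibers of the joint tensor become the columns, and the first $p$ modes collapse to rows in the $\vecc$ order). One must also confirm that the outer-product-sum operation commutes with matricization in the way claimed, which requires unwinding the definition of $\circ$ from Section~\ref{app:tensdef}. Once the indexing is aligned, both parts follow directly from the already-established Theorems~\ref{ellipt:distform}, \ref{thm:reshap_EC}, and \ref{thm:moments_ec} and the two Wishart definitions, with no genuinely new estimates required.
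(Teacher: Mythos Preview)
Your proposal is correct and matches the paper's proof essentially step for step: marginals via Theorem~\ref{ellipt:distform} with $\sfA_{p+1}={\be_i^n}^\top$, uncorrelatedness from the $I_n$ factor in the joint second-moment structure (the paper cites Theorem~\ref{expectelement} Part~\ref{expectelement:2} directly rather than its assembled form in Theorem~\ref{thm:moments_ec}, but the content is the same), and Part~\ref{thm:tenswish1:2} by the matricization identity $(\sum_i \rtY_i\circ\rtY_i)_{<p>}=\rtY_{<p>}\rtY_{<p>}^\top$ (which the paper isolates as Lemma~\ref{lemma:outervectens}) combined with Theorem~\ref{thm:reshap_EC} and Definition~\ref{def:tenswish}. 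Your anticipated ``obstacle'' is exactly the content of Lemma~\ref{lemma:outervectens}, so nothing is missing.
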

\begin{proof}
See the Supplementary Material Section \ref{proof:tenswish1}.
\end{proof}
In Section \ref{sec:uncorrML} we derive maximum likelihood estimators (MLEs)
estimators for $(\mM,\sigma^2,\newline\Sigma_1,\Sigma_2,\sdots,\Sigma_p)$
under the  assumption that $\rtY_1,\rtY_2,\sdots,\rtY_n$ are
uncorrelated, as implied in  \eqref{eq:jointfull}, which is a
generalization of the traditional IID assumption.
 Setting $\varphi(z) = \exp(-z/2)$ leads to 
the case where IID $\rtY_i\sim\N_{\bbm}(0,\sigma^2\Sigma_1,\Sigma_2,\sdots,\Sigma_p)$ results in $\sum_{i=1}^n (\rtY_i \circ \rtY_i) \sim \mW_{\bbm}(n,\sigma^2\Sigma_1,\Sigma_2,\sdots,\Sigma_p)$.

\section{Maximum likelihood estimation}\label{sec:MLE}
This section derives MLEs from EC tensor-variate distributed data under four
different scenarios.  We first derive these estimators from
uncorrelated draws that have common and joint 
EC tensor-variate distributions. This procedure depends on
identifiable MLEs from the TVN distribution, and so we also
describe how to obtain such estimates.
Next, we provide  expectation conditional maximization (ECM)
\citep{dempsterandetal77,mengandrubin93}  and  ECM Either
(ECME) \citep{liuandrubin94} algorithms for maximum likelihood estimation on
IID realizations from a scale TVN mixture distribution, and show how
these estimation algorithms can be used to extend the ToTR framework
of \citep{llosaandmaitra20} to the case where the error distribution
is a scale mixture of TVNs.
 Finally, we derive a robust estimation algorithm for the scale matrices in the
spirit of Tyler's estimation \citep{tyler83,tyler87a,tyler87b} for
unknown EC distributions.

\subsection{Estimation in the TVN model}\label{sec:tensnormalMLE}

Suppose that we independently observe $\mY_1,\mY_2,\dots,\mY_n$ from $\N_{\bbm}(0,\sigma^2\Sigma_1,\Sigma_2,\dots,\Sigma_p)$. The MLEs of the scale matrices $\Sigma_1,\Sigma_2,\dots,\Sigma_p$  are not identifiable since $\Sigma_k \otimes \Sigma_l =
(a\Sigma_k) \otimes (\Sigma_l/a)$ for any scalar $a\neq 0$. Moreover,
the scale matrix estimators have no closed-form solutions, but each $\hSigma_k$ depends on the rest as $\hSigma_k =\sfS_k/(nm_{-k}\sigma^2)$, where
$
\sfS_k  = \sum_{i=1}^n \mY_{i(k)}
\Sigma_{-k}^{-1}\mY_{i(k)}',$ with $\Sigma_{-k}$ as defined in the
paragraph following Definition~\ref{def:ellipcon}.
The existing methodology \citep{dutilleul99,srivastavaandetal08,dutilleul18,rosandetal16,soloveychikandtryshin16}
takes into account the indeterminacy of the estimates by scaling each
$\hSigma_k$ such that the trace, determinant or $(1,1)$th element is
unity. Alternatively, we use the ADJUST procedure of \citet{glanzandcarvalho18} that optimizes the loglikelihood with respect to $\Sigma_k$ under the constraint $\Sigma_k(1,1)=1$ as
$
\hSigma_k = ADJUST(nm_{-k},\sigma^2,\sfS_k).
$
Here $\sigma^2$ is inexpensively estimated to be 
$\hat\sigma^2=\tr(S_k\hSigma_k^{-1})/(nm)$. These form our minor modifications to
the maximum likelihood estimation algorithm of \citet{akdemirandgupta11,hoff11,manceuranddutilleul13,ohlsonandetal13}, but will be used in the following section.

\subsection{Estimation from uncorrelated data with identical
  EC-distributed marginal distributions}\label{sec:uncorrML}
Suppose that we observe realizations $\mY_1,\mY_2,\sdots,\mY_n$ of random
tensors that have the same setting as Theorem \ref{thm:tenswish1}. Our
derivation of MLEs  under the joint EC distributional
assumption of \eqref{eq:jointfull} is then a tensor-variate extension
of Theorem 1 of \citet{andersonandetal86}, and is as follows:
\begin{theorem}\label{mluncorrelated}
Suppose that the density generator $g(\cdot)$ of $\rtY$ is such that 
$
h(d) = 
d^{nm/2}g(d) 
$
has a finite positive maximum $d_g$. Also, let $(\tilde{\sigma}^2,\tilde{\mM},
\tilde\Sigma_1,\tilde\Sigma_2,\dots,\tilde\Sigma_p)$
be the TVN MLEs described in Section \ref{sec:tensnormalMLE}. Then the MLE for $(\sigma^2,\mM,\Sigma_1,\Sigma_2,\dots,\Sigma_p)$ is\newline $(\frac{nm}{d_g}\tilde{\sigma}^2,\tilde{\mM},\tilde\Sigma_1,\tilde\Sigma_2,\dots,\tilde\Sigma_p)$.
\end{theorem}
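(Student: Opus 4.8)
The plan is to reduce the EC maximum-likelihood problem to the already-solved TVN one by a change of variables in the overall scale $\sigma^2$. First I would write out the joint log-likelihood of the model \eqref{eq:jointfull}. Because its $(p+1)$th mode scale is $I_n$ and the $\mM$-slices of $\mM_n$ are all equal, the squared Mahalanobis distance of $\rtY$ from $\mM_n$ decomposes as $\sum_{i=1}^n D^2_{\sigma^2\Sigma}(\mY_i,\mM) = \sigma^{-2}W$, where $W = W(\mM,\Sigma_1,\dots,\Sigma_p) := \sum_{i=1}^n D^2_{\Sigma}(\mY_i,\mM)$; combined with $|\sigma^2(\bigotimes_{k=p}^1\Sigma_k)\otimes I_n| = (\sigma^2)^{nm}\prod_{k=1}^p |\Sigma_k|^{nm_{-k}}$, equation \eqref{ellipt:PDF1} gives
\begin{equation*}
\ell = -\tfrac{nm}{2}\log\sigma^2 - \tfrac{n}{2}\sum_{k=1}^p m_{-k}\log|\Sigma_k| + \log g(\sigma^{-2}W).
\end{equation*}

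Next, with $(\mM,\Sigma_1,\dots,\Sigma_p)$ held fixed (and $W>0$, which holds almost surely), I would substitute $u = \sigma^{-2}W$, a bijection of $(0,\infty)$ onto itself; this recasts $\ell$ as $-\tfrac{nm}{2}\log W - \tfrac{n}{2}\sum_k m_{-k}\log|\Sigma_k| + \log h(u)$ with $h(d)=d^{nm/2}g(d)$ as in the statement. The two pieces now depend on disjoint variables, so $\ell$ is maximized by maximizing each separately: $\log h$ attains its maximum at $u=d_g$ by hypothesis, and the remaining term $\ell_0(\mM,\Sigma_1,\dots,\Sigma_p) := -\tfrac{nm}{2}\log W - \tfrac{n}{2}\sum_k m_{-k}\log|\Sigma_k|$ is maximized subject to the identifiability constraint $\Sigma_k(1,1)=1$ inherited from \eqref{eq:jointfull}.

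The last step is to recognize $\ell_0$ as the TVN log-likelihood profiled over $\sigma^2$: taking $g(x)\propto e^{-x/2}$ makes $\ell = -\tfrac{nm}{2}\log\sigma^2 - \tfrac{n}{2}\sum_k m_{-k}\log|\Sigma_k| - \tfrac12\sigma^{-2}W$, whose maximizer in $\sigma^2$ is $\tilde\sigma^2 = \tilde W/(nm)$ with $\tilde W := W(\tilde\mM,\tilde\Sigma_1,\dots,\tilde\Sigma_p)$, and back-substitution recovers $\ell_0$ up to an additive constant. Hence $\ell_0$ and the profiled TVN objective have the same maximizers, namely $(\tilde\mM,\tilde\Sigma_1,\dots,\tilde\Sigma_p)$, so the EC MLE of $\sigma^2$ is $\hat\sigma^2 = \tilde W/d_g = (nm/d_g)\tilde\sigma^2$, which is exactly the asserted estimator (this is the tensor-variate analogue of Theorem~1 of \citet{andersonandetal86}).

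I expect the main obstacle to be making this separation watertight rather than any single hard computation: one must verify that the $\sigma^2\mapsto u$ reparametrization is legitimate over the full parameter space (in particular that $W>0$ with probability one, so the bijection never degenerates), that the Kronecker-determinant and Mahalanobis-decomposition identities hold with the exponents as written, and --- the crux --- that the term $\ell_0$ left over after profiling the Gaussian model is \emph{literally the same} function of $(\mM,\Sigma_1,\dots,\Sigma_p)$, constants included, that appears in the EC reduction, so the two optimization problems share an argmax. Everything past that is routine bookkeeping.
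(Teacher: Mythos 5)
Your proof is correct, and it establishes the result by a direct, self-contained argument where the paper instead reshapes and cites. The paper's proof applies Part 3 of Theorem \ref{thm:reshap_EC} to write $\rtY_{<p>}\sim\EC_{m,n}(\mM_{n<p>},\sigma^2\Sigma,I_n,\varphi)$ and then simply invokes Theorem 1 of Anderson, Fang and Hsu (1986) for the matrix-variate case, adding only the observation that the constant $nm/d_g$ is absorbed into $\sigma^2$ because of the $\Sigma_k(1,1)=1$ constraint. What you have written is, in effect, the proof of that cited theorem transplanted to the tensor setting: the decomposition $\ell=\ell_0(\mM,\Sigma_1,\dots,\Sigma_p)+\log h(u)$ under the reparametrization $u=\sigma^{-2}W$, the separation of the two suprema (valid since $W>0$ almost surely so $u$ ranges over all of $(0,\infty)$), and the identification of $\ell_0$ with the $\sigma^2$-profiled TVN objective, which forces the common argmax $(\tilde\mM,\tilde\Sigma_1,\dots,\tilde\Sigma_p)$ and yields $\hat\sigma^2=\tilde W/d_g=(nm/d_g)\tilde\sigma^2$ since $d_g=nm$ for the Gaussian generator. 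Your computations of the Kronecker determinant exponent $(\sigma^2)^{nm}$ and of the slice-wise Mahalanobis decomposition $\sigma^{-2}W$ are both right, and $\tilde\sigma^2=\tilde W/(nm)$ agrees with the estimator $\tr(S_k\tilde\Sigma_k^{-1})/(nm)$ of Section \ref{sec:tensnormalMLE}. The trade-off is the usual one: the paper's route is shorter and leans on an established vector/matrix-variate result, while yours makes the mechanism (and the points where $W>0$ and the identifiability constraint are actually used) explicit without any external citation.
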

\begin{proof}
See the Supplementary Material Section \ref{proof:mluncorrelated}.
\end{proof}

It follows that $d_g = mn$ in the TVN case, $d_g = nmb/a$ when $g(\cdot)$ is
of the form \eqref{dengengamma},  $d_g = (mn/s)^{1/s}$ in the power exponential case, and $d_g$ is the solution to 
$
mn/(2d)  = \tanh\left(d/2\right)
$
in the logistic case~\citep{galeaandriquelme00,fangetal90}.
\subsection{Estimation from TVN scale mixture realizations}\label{sec:ECtensMLE}

Suppose that we observe $n$ IID realizations $\mY_1,\mY_2,\dots,\mY_n$
from the scale mixture distribution of \eqref{eq:scaletensnorm}. The complete-data loglikelihood is
\begin{equation}\label{eq:completelog}
\ell_c(\boldsymbol{\theta}) = -\dfrac{n}{2} \log|\sigma^2\Sigma| -\dfrac{1}{2\sigma^2}\sum_{i=1}^n z_i D^2_{\Sigma}(\mY_i,\mM) 
\end{equation}
where $\boldsymbol{\theta}  
= \left((\vecc\mM)^\top,\boldsymbol{\theta}_{\Sigma}^\top \right)^\top$,
and
$ \boldsymbol{\theta}_{\Sigma}
=
 \big(\sigma^2,(\vecc\Sigma_1)^\top,(\vecc\Sigma_2)^\top\dots,$ $ (\vecc\Sigma_p)^\top\big)^\top.$
 We now use $\ell_c$ to propose an EM algorithm with the following expectation
 (E) and conditional maximization (CM) steps:
\newline
\textbf{E-step:} Here, we obtain the conditional
expectation of \eqref{eq:completelog}, given the  observed data and
evaluated at the $t$th parameter iteration
$\boldsymbol{\theta}^{(t)}$, to get
\begin{equation}\label{eq:expectedcompletelog1}
Q(\boldsymbol{\theta};\boldsymbol{\theta}^{(t)}) = -\dfrac{n}{2} \log|\sigma^2\Sigma|  -\dfrac{1}{2\sigma^2}\sum_{i=1}^n \hat{z}_i^{(t)} D^2_{\Sigma}(\mY_i,\mM),
\end{equation}
where $\hat{z}_i^{(t)}= \mathbb{E}(Z_i|\rtY_i=\mY_i;\boldsymbol{\theta}^{(t)}).$
This conditional expectation depends on the distribution of $Z_i$.
For $Z_i\sim Gamma(a/2,b/2)$, $\rtY_i$ has PDF  given by
\eqref{dengengamma} and  then $\hat{z}_i^{(t)} = {(m+a)}/(D^{2}_{\sigma^{2(t)}\Sigma^{(t)}}(\mY_i,\mM^{(t)})+b).$
\newline
\textbf{CM-steps:} The CM steps maximize 
\eqref{eq:expectedcompletelog1}, but with respect to subsets of parameters. The
first CM step maximizes~\eqref{eq:expectedcompletelog1} with respect to $\mM$ and is obtained by setting to zero
the total differential on $\mM$, that is, 
$\partial Q(\mM) = (\sigma^2)^{-1}\langle \partial \mM, [\![ 
\mS ;
\Sigma_1^{-1},\Sigma_2^{-1},\dots,\Sigma_p^{-1} ]\!] \rangle
 = 0,$ where $\mS =  \sum_{i=1}^n
 \hat{z}_i^{(t)}\mY_i-\mM\sum_{i=1}^n \hat{z}_i^{(t)}.$ The term
 $\partial Q(\mM)=0$ implies that  $\mS =0$, or  
$
\mhM^{(t+1)}  =  (\sum_{i=1}^n \hat{z}_{i}^{(t)} \mY_i) / (\sum_{i=1}^n  \hat{z}_{i}^{(t)}).
$
 Setting  $\mM = \mhM^{(t+1)}$ in  \eqref{eq:expectedcompletelog1}
 profiles out  $\mM$ and again yields \eqref{eq:expectedcompletelog1},
 but with $\mY_{w,i}^{(t)}  = \sqrt{\hat{z}_{i}}^{(t)}(\mY_i -
 \mhM^{(t+1)})$ instead of $\mY_i$.  This profiled loglikelihood is
 identical to the loglikelihood of the TVN model when
 $\mY_{w,i}^{(t)}\sim
 \N_{\bbm}(0,\sigma^2\Sigma_1,\Sigma_2,\dots,\Sigma_p)$, and thus the
 next $(p+1)$ CM steps (corresponding to
 $\sigma^2,\Sigma_1,\Sigma_2,\dots,\Sigma_p$) are obtained similarly
 as for the TVN model of Section \ref{sec:tensnormalMLE}. Specifically, $
\hSigma_k = ADJUST(nm_{-k},\sigma^2,\sfS_k),
$
and $\sigma^2=\tr(S_k\hSigma_k^{-1})/(nm)$. 
\newline
\textbf{EM algorithm:} The ECM algorithm iterates the E- and CM-steps until convergence. We determine convergence based on the relative difference in $||\mM|| + ||\sigma^2\Sigma||,$ where $||\sigma^2\Sigma|| = \sigma^2\prod_{k=1}^p||\Sigma_k||$. The resulting ECM algorithm is summarized in the following steps:
\begin{enumerate}
\item \textbf{Initialization:} Fit the faster TVN model of Section
  \ref{sec:tensnormalMLE} with lax convergence, set $t=0$.
\item 
\textbf{E-Step:} Find $\hat{z}_i^{(t)} := \mathbb{E}(Z_i|\rtY_i=\mY_i;\boldsymbol{\theta}^{(t)})$ $ \forall i=1,2,\dots,n$.
\item 
\textbf{CM-Step 1:} Find $\mhM^{(t+1)}  =  (\sum_{i=1}^n \hat{z}_{i}^{(t)} \mY_i) / (\sum_{i=1}^n  \hat{z}_{i}^{(t)}).$
\item 
\textbf{CM-Step 2:} For each $k=1,2,\dots p$, estimate $\hSigma_k$.
\item
\textbf{CM-Step 3:} estimate $\sigma^2$.
\item
\textbf{Iterate or stop:} Return to the E-step and set $t\gets t+1$, or stop if convergence is met.
\end{enumerate}
\textbf{Alternative EM approach for estimating $\boldsymbol{\sigma^2}$:} 
The above algorithm is an ECM algorithm, since it estimates $\sigma^2$ once for each iteration. A faster alternative is to estimate $\sigma^2$ right after each $\Sigma_k$, making our algorithm an AECM algorithm \citep{mengandvandyk97}. An even faster alternative estimates $\sigma^2$ directly from the loglikelihood 
$\ell(\sigma^2)=
-mn\log(\sigma^2)/2
+\sum_{i=1}^n\log  g (D^{2}_{\Sigma^{(t)}}(\mY_i,\mM^{(t)})/\sigma^2)
$
making the algorithm an ECME algorithm \citep{liuandrubin94}. Our
experiments showed our ECME algorithm to be considerably more
computationally efficient for heavy-tailed EC distributions.

For some cases, such as for the DF parameter in the $t$ distribution, the mixing random variable $Z$ in  \eqref{eq:scaletensnorm} has extra parameters that show up in the last term of the complete loglikelihood \eqref{eq:completelog}, and that may be estimated in  additional
CM steps \citep{liuandrubin95,mcLachlanandkrishnan08}. These extra parameters can also be optimized directly from the loglikelihood as an Either step \citep{liuandrubin94}. For the tensor-variate-$t$ (TV-$t$) distribution with an unknown $\nu$ degree of freedom (the PDF of the TV-$t$ distribution is provided in  \eqref{dengengamma} whenever $a=b=\nu$), we estimate $\nu$ as the solution to
\begin{equation}\label{eq:eithernu}
\begin{split}
0=&1+
\dfrac{1}{n}\sum_{j=1}^n\big[\log z_j^{(k+1)}(\nu)- 
  z_j^{(k+1)}(\nu) \big] 
-
\psi(\dfrac{\nu}{2}) + \psi(\dfrac{n+m}{2}) 
+\log(\dfrac{\nu}{2}) - \log(\dfrac{n+m}{2})
\end{split}
\end{equation}
 at each iteration, where
 $
 z_j^{(k+1)}(\nu) = (\nu+m)/(\nu+D^{2}_{\hat\sigma^{2(t)}\hSigma^{(t)}}(\mY_i,\mhM^{(t)})),
 $
 and $\psi$ is the digamma function.
 This is an adaptation of the vector-variate case of \citet{liuandrubin95} to our tensor-variate case.
\subsection{ToTR with scale TVN mixture errors}\label{sec:totr_ec}
 The algorithm proposed in Section \ref{sec:ECtensMLE} can be extended
 to the case where $(\mY_1,\mY_2,\dots,\mY_n)$ are independent
 realizations of the ToTR model, for $i=1,2,\ldots,n$,
\begin{equation}\label{eq:totr}
\mY_i = \langle \mX_i | \mB \rangle+ \mE_i,
\end{equation}
where  $\mX_i$ is the $i$th tensor-variate regression covariate, 
$\mE_i$  the $i$th regression random error following the scale mixture
distribution of \eqref{eq:scaletensnorm} for $\mM = 0$, $\mB$ is the
tensor-variate regression coefficient,  and $\langle . | . \rangle$
denotes the partial contraction as per \eqref{eq:partcontr}. We assume that $\mX_i\in \mathbb{R}^{h_1\times h_2\times \dots \times h_l}$ and $\mY_i \in \mathbb{R}^{m_1\times m_2\times \dots \times m_p}$, which means that $\mB \in\mathbb{R}^{h_1\times h_2\times \dots \times h_l\times m_1\times m_2\times \dots \times m_p }$. 
ToTR was proposed by \citet{hoff14} under TVN errors and outer matrix product (OP) factorization of $\mB$, by \citet{lock17} under {\em canonical polyadic} or CANDECOMP/PARAFAC
(CP) ~\citep{carrollandchang70,harshman70} factorization of $\mB$, by \citep{gahrooeietal21} under the TK factorization, and
by   \citet{liuetal20} under the TT format~\citep{oseledets11} that is the
tensor-ring (TR) format \citep{zhaoetal16a} when one of the TR ranks
is set to 1. The CP, TR, TK and OP factorizations on $\mB$ allow for substantial parameter
reduction without affecting prediction or discrimination ability of the
regression model. For more recent developments in ToTR see \citep{luoandzhang22,raskuttietal19}. The more general case of ToTR can be found under the CP, OP, TR and TK formats and TVN errors in \citet{llosaandmaitra20}, and here we extend these results to the case where the errors follow a scale mixture of TVN distribution.

\subsubsection*{Maximum Likelihood Estimation}
Following the same steps as in Section \ref{sec:ECtensMLE}, the E-step is performed by calculating the expected complete loglikelihood, which can be written as
\begin{equation}\label{eq:expectedcompletelog_totr}
Q(\boldsymbol{\theta};\boldsymbol{\theta}^{(t)}) = -\dfrac{n}{2} \log|\sigma^2\Sigma|  -\dfrac{1}{2\sigma^2}\sum_{i=1}^n D^2_{\Sigma}(\mY_{w,i}^{(t)},\langle \mX_{w,i}^{(t)} | \mB \rangle)
\end{equation}
with $\boldsymbol{\theta} = (\mB,\sigma^2\Sigma)$, and $(\mY_{w,i}^{(t)},\mX_{w,i}^{(t)})=(\sqrt{\hat{z}_i}^{(t)}\mY_i,\sqrt{\hat{z}_i}^{(t)}\mX_i)$ for $\hat{z}_i^{(t)}= \mathbb{E}(Z_i|\rtY_i=\mY_i;\boldsymbol{\theta}^{(t)}).$
For $Z_i\sim Gamma(a/2,b/2)$, we have 
$
\hat{z}_i^{(t)} = {(m+a)}/(D^{2}_{\sigma^{2(t)}\Sigma^{(t)}}(\mY_i,\langle \mX_i|\mB^{(t)} \rangle)+b).
$
 The CM-steps are performed by sequentially optimizing
 \eqref{eq:expectedcompletelog_totr} with respect to the parameter
 factors involved in $\boldsymbol{\theta}$. This equation
 is identical to the loglikelihood for the ToTR model under TVN
 errors (see Eqn. (16) of \citep{llosaandmaitra20}), and so the CM-steps
 are performed in one iteration of their block-relaxation algorithm,
 but with $\{(\mY_{w,i}^{(t)},\mX_{w,i}^{(t)}),i=1,2,\dots,n\}$ as the
 responses and covariates. This approach is very general as it works
 for any of the low-rank formats studied in
 \citet{llosaandmaitra20}. The ECM algorithm for maximum likelihood estimation is
 summarized as follows: 
\begin{enumerate}
\item \textbf{Initialization:} Fit the respective ToTR model with TVN errors with a lax convergence. 
\item 
\textbf{E-Step:} Find $\hat{z}_i^{(t)} := \mathbb{E}(Z_i|\rtY_i=\mY_i;\boldsymbol{\theta}^{(t)})$ and $(\mY_{w,i}^{(t)},\mX_{w,i}^{(t)})$ $ \forall i=1,2,\dots,n$.
\item 
\textbf{CM-Step:} Perform one iteration of the respective ToTR algorithm of \citet{llosaandmaitra20}, but with $\{(\mY_{w,i}^{(t)},\mX_{w,i}^{(t)}),i=1,2,\dots,n\}$ as the responses and covariates.
\item
\textbf{Iterate or stop:} Return to the E-step and set $t\gets t+1$, or stop if convergence is met.
\end{enumerate}
Section \ref{sec:classification} later demonstrates that the proposed ToTR
framework with EC errors outperforms ToTR with TVN errors in terms of
classification performance on color bitmap images of cats and dogs
from the AFHQ database.

\subsection{Robust tensor-variate Tyler's estimator}\label{sec:robustmle}
While we proposed several methods for maximum likelihood estimation, they all
require the form of the underlying EC distribution to be known. In this section, we derive constrained MLEs that are robust to the type of EC distribution in the spirit of Tyler's estimator \citep{tyler83,tyler87a,tyler87b,soloveychikandtryshin16}. 
Consider $\rtY_i\sim\EC_{\bbm}(0,\sigma^2\Sigma_1,\Sigma_2,\dots,\Sigma_p,\varphi)$ for $i=1,2,\dots,n$. As described in Section \ref{sec:secproperties}, if we let $\rtZ_i = \rtY_i / ||\rtY_i||$ for all $i=1,2,\dots,n$, then $\vecc(\rtZ_i)\sim\AC_{\bbm}(\Sigma)$, and does not depend on the underlying type of EC distribution. 
Using this result along with the PDF in \eqref{eq:angular_pdf} means
that, regardless of $\varphi$, we can write the loglikelihood for $\Sigma_1,\Sigma_2,\dots,\Sigma_p$ as 
\begin{equation}\label{eq:tyfull}
\ell(\Sigma_1,\Sigma_2,\dots,\Sigma_p) = 
-\dfrac{n}{2} \log\left|\bigotimes_{k=p}^1\Sigma_k\right| 
-\dfrac{m}{2} \sum_{i=1}^n \log D^2_{\Sigma}(\mZ_i,0).
\end{equation}
We now optimize~\eqref{eq:tyfull} using a block-relaxation algorithm \citep{leeuw94} that partitions the parameter space into multiple blocks, and sequentially optimizes each block while fixing the remaining blocks. In this context, each block corresponds to $\Sigma_k$ under the constraint $\Sigma_k(1,1)=1$ for all $k=1,2,\dots,p$, and is derived in the following theorem :
\begin{theorem}\label{thm:tyrobustML}
For a fixed value of $\Sigma_{-k}$ and for $S_{ik} =
\mY_{i(k)}\Sigma_{-k}^{-1}\mY_{i(k)}^\top$, a fixed-point algorithm
for the maximum likelihood estimation of $\Sigma_k$ updates $\Sigma_{k,(t)}$, under the
constraint that $\Sigma_k(1,1) = 1$,  to
\begin{equation}\label{eq:tytensfixed}
\Sigma_{k,(t+1)} = ADJUST\left(n/m_{k},1,\sum_{i=1}^n \dfrac{S_{ik}}{\tr(\Sigma_{k,(t)}^{-1}S_{ik})}\right).
\end{equation}
\end{theorem}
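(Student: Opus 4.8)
The plan is to fix $\Sigma_{-k}$, reduce the loglikelihood \eqref{eq:tyfull} to a function of the single block $\Sigma_k$, set its matrix differential to zero, and recognize the resulting stationary equation as a call to the $ADJUST$ operator. First I would apply two standard reshaping identities. By the determinant rule for Kronecker products, $\log\big|\bigotimes_{j=p}^1\Sigma_j\big| = \sum_{j=1}^p m_{-j}\log|\Sigma_j|$, so only $m_{-k}\log|\Sigma_k|$ depends on $\Sigma_k$. By the $k$th-mode matricization of the Mahalanobis form (the same reshaping used in Theorem \ref{thm:reshap_EC}), $D^2_{\Sigma}(\mZ_i,0)=\tr\big(\Sigma_k^{-1}\mZ_{i(k)}\Sigma_{-k}^{-1}\mZ_{i(k)}^\top\big)$; since $\mZ_i=\mY_i/\lVert\mY_i\rVert$ this equals $\tr(\Sigma_k^{-1}S_{ik})/\lVert\mY_i\rVert^2$, and the factor $\lVert\mY_i\rVert^{-2}$ only shifts $\ell$ by a constant independent of $\Sigma_k$ (and later cancels in the ratio $S_{ik}/\tr(\Sigma_k^{-1}S_{ik})$), so working with $S_{ik}$ in terms of $\mY_i$ is harmless. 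Thus, up to an additive constant,
\[
\ell(\Sigma_k) = -\frac{nm_{-k}}{2}\log|\Sigma_k| - \frac{m}{2}\sum_{i=1}^n \log\tr(\Sigma_k^{-1}S_{ik}).
\]

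Next I would differentiate. Using $\partial\log|\Sigma_k|=\tr(\Sigma_k^{-1}\partial\Sigma_k)$ and $\partial\tr(\Sigma_k^{-1}S_{ik})=-\tr(\Sigma_k^{-1}(\partial\Sigma_k)\Sigma_k^{-1}S_{ik})$, requiring $\partial\ell(\Sigma_k)=0$ for all symmetric $\partial\Sigma_k$ yields
\[
\frac{nm_{-k}}{2}\,\Sigma_k^{-1} = \frac{m}{2}\,\Sigma_k^{-1}\Big(\sum_{i=1}^n \frac{S_{ik}}{\tr(\Sigma_k^{-1}S_{ik})}\Big)\Sigma_k^{-1}.
\]
Pre- and post-multiplying by $\Sigma_k$ and using $m/m_{-k}=m_k$ gives the fixed-point equation $\Sigma_k=(m_k/n)\sum_i S_{ik}/\tr(\Sigma_k^{-1}S_{ik})$; substituting $\Sigma_{k,(t)}$ on the right-hand side (and inside the ratios) produces the update for $\Sigma_{k,(t+1)}$. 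Since the stationary equation above is already symmetric, symmetric-matrix differentiation introduces no spurious factor, and the update is automatically symmetric positive definite.

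Finally I would handle the constraint $\Sigma_k(1,1)=1$. The angular PDF \eqref{eq:angular_pdf}, and hence \eqref{eq:tyfull}, is invariant under $\Sigma_j\mapsto c_j\Sigma_j$ for any $c_j>0$, so the unconstrained stationary equation determines $\Sigma_k$ only up to a positive scalar; the $ADJUST$ operator of \citet{glanzandcarvalho18} is exactly the device that returns the maximizer of this scale-invariant block objective subject to $\Sigma_k(1,1)=1$. Writing the fixed-point relation as $\Sigma_k=\frac{1}{(n/m_k)\cdot 1}\sum_i S_{ik}/\tr(\Sigma_{k,(t)}^{-1}S_{ik})$ matches the template $\Sigma_k=ADJUST(\nu,\sigma^2,\sfS)$ from Section~\ref{sec:tensnormalMLE} with $\nu=n/m_k$, $\sigma^2=1$, and $\sfS=\sum_i S_{ik}/\tr(\Sigma_{k,(t)}^{-1}S_{ik})$, which is precisely \eqref{eq:tytensfixed}. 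I expect the main obstacle to be this last step: rigorously justifying that applying $ADJUST$ to the right-hand side solves the constrained block maximization, i.e., that the normalization built into $ADJUST$ is compatible with the fixed-point structure. This should follow because the right-hand side depends on $\Sigma_{k,(t)}$ only through the ratios $S_{ik}/\tr(\Sigma_{k,(t)}^{-1}S_{ik})$, which are themselves invariant to rescaling $\Sigma_{k,(t)}$, so the iteration is well defined irrespective of how the previous iterate was normalized.
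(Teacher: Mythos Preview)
Your proposal is correct and follows essentially the same structure as the paper's proof: reduce \eqref{eq:tyfull} to a block objective in $\Sigma_k$ via Kronecker determinant and matricization identities, differentiate, and obtain the fixed-point relation $\Sigma_k=(m_k/n)\sum_i S_{ik}/\tr(\Sigma_k^{-1}S_{ik})$. The paper differentiates with respect to $\Sigma_k^{-1}$ rather than using your total-differential form, but this is cosmetic.

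The one substantive difference is the treatment of the constraint $\Sigma_k(1,1)=1$. You argue via scale invariance of the angular likelihood: since \eqref{eq:tyfull} is unchanged under $\Sigma_k\mapsto c\Sigma_k$, the unconstrained stationary equation pins down $\Sigma_k$ only up to a positive scalar, and $ADJUST$ supplies the normalization. The paper instead forms the Lagrangian $\ell(\Sigma_k)-\tfrac{\lambda}{2}\big({\be_1^{m_k}}^\top\Sigma_k\be_1^{m_k}-1\big)$, obtains the constrained stationary condition
\[
\frac{nm_{-k}}{2}\Sigma_k-\frac{m}{2}\sum_{i=1}^n\frac{S_{ik}}{\tr(\Sigma_k^{-1}S_{ik})}+\lambda\boldsymbol{\sigma}\boldsymbol{\sigma}^\top=0,
\]
and recognizes this as Equation~(B.1) of \citet{glanzandcarvalho18}, whose solution is exactly the $ADJUST$ call. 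Your scale-invariance argument is more conceptual and avoids the Lagrange machinery, but---as you yourself flag---it leaves the identification with $ADJUST$ somewhat informal; the paper's route closes that gap by matching the Lagrangian first-order condition directly to the equation that $ADJUST$ is defined to solve.
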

\begin{proof}
See the Supplementary Material Section \ref{proof:tyrobustML}.
\end{proof}
\subsubsection*{ Estimation algorithm}
For each $k$, the fixed-point iteration algorithm of Theorem
\ref{thm:tyrobustML} depends on $\Sigma_{-k}$ only through
$S_{ik}$. Obtaining $S_{ik}$ for all $i=1,2,\dots,n$ and for any
$k=1,2,\dots,p$ has $O(nmm_M)$ computational complexity  (where $m_M=\max\{m_1,m_2,\dots,m_p\}$) while one iteration of \eqref{eq:tytensfixed} is usually considerably cheaper with a complexity of at most $O(m_M^3)$.  Therefore, in most cases we suggest performing $n_L$ iterations of \eqref{eq:tytensfixed} before going from $\Sigma_k$ to $\Sigma_{k+1}$. While all values of $n_L$ will eventually converge, we set $n_L=12$ based on numerical experiments. The final estimation algorithm is  as follows:
\begin{enumerate}
\item \textbf{Initialization:} Fit the TVN model of Section \ref{sec:tensnormalMLE} with lax convergence.
\item For $k=1,2,\dots,p$ 
\begin{enumerate}
\item Obtain $S_{ik}$ from Theorem \ref{thm:tyrobustML} $\forall i=1,\dots,n$
\item Update $\Sigma_{k,(t+1)}$  as per \eqref{eq:tytensfixed} for $n_L$ iterations.
\end{enumerate}
\item
\textbf{Iterate or stop:} Stop if convergence is achieved, or return to Step 2.
\end{enumerate}
\subsubsection*{Robust estimation of the location parameter}\label{subsec:robustM}
When $\mM$ is also unknown, \citet{tyler87a} and
\citet{maronna76} proposed (for the vector-variate case) the  fixed-point iteration algorithm:
$$
\mhM_{(t+1)} = (\sum_{i=1}^n \mX_i/D_{\Sigma}(\mX_i,\mhM_{(t)}))/(\sum_{i=1}^n 1/D_{\Sigma}(\mX_i,\mhM_{(t)}).
$$
\citet{tyler87a} was unable to show the joint existence of $\Sigma$
and $\mM$ due to discontinuities in the objective function, and so we
do not integrate this result into our estimation algorithm. However,
it can be easily integrated to estimate $\mM$ at each iteration and
then used to center the data before estimating
$\Sigma_1,\Sigma_2,\dots,\Sigma_p$. Although to our knowledge, no
theoretical guarantees exist for  this approach even in the
vector-variate case, we have found this algorithm to converge and produce stable estimates  (see Figures \ref{fig:ashoka} and \ref{fig:ashoka_diag}).

\section{Performance Evaluations}
\label{sec:performance}
In this section, we evaluate performance of the maximum likelihood estimation
procedures proposed in Section \ref{sec:MLE} by fitting them to
realizations  from the TVN and from gamma
scale TVN mixture (GSM) distributions. We used the GSM distribution  of
 Section \ref{sec:ECtensMLE}, with PDF as in
\eqref{dengengamma}, $(a,b)=(3,15)$, $i=1,2,\dots,n=10$ and
$(m_1,m_2,m_3,m_4,m_5,m_6) = (7,9,3,23,7,3)$.

  \begin{figure*}[h]
     \centering
     \includegraphics[width = \textwidth]{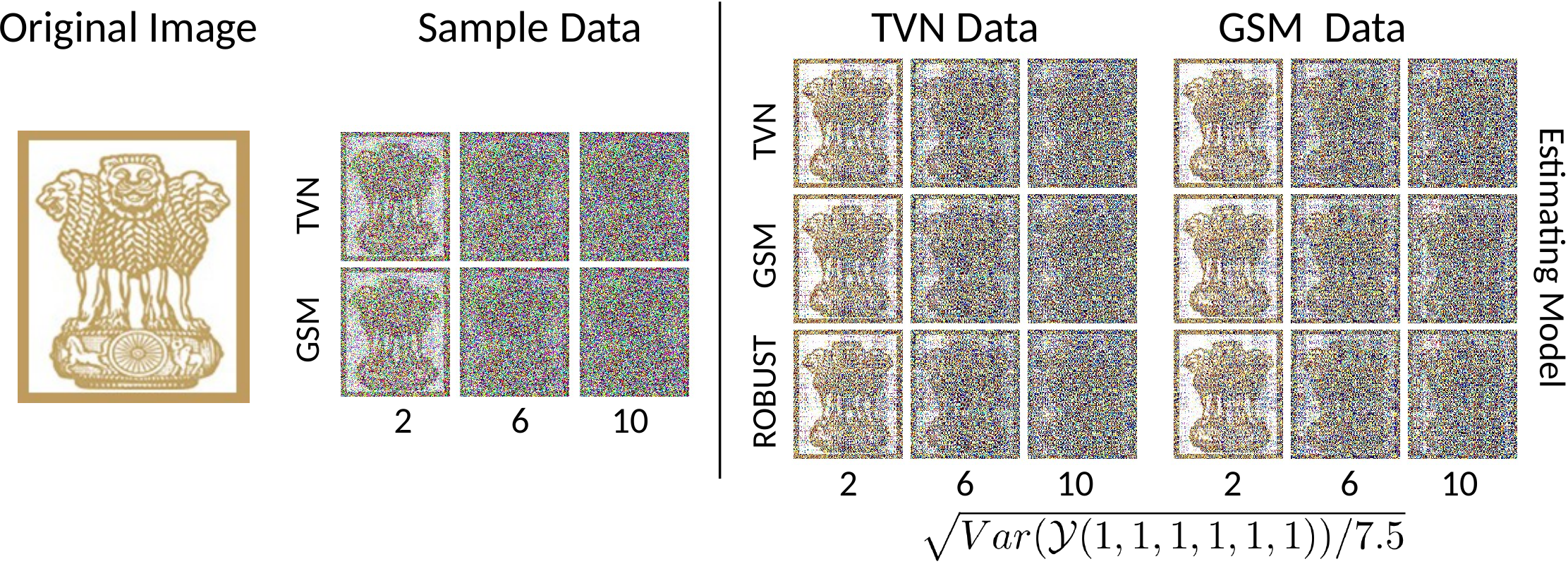}
     \caption{ \underline{\em Left block}: The Lion
       capital of Ashoka that is the ground truth of the rearranged mean
       tensor $\mM$ of size $7\times 9\times 3\times 23\times 7\times
       3$, and sample realization with increasing variance across
       two EC distributions. \underline{\rm Right
         block:}  Estimated $\mM$ asuming TVN, GSM models
       and Tyler's robust estimators, under TVN and GSM-simulated
       data.  The GSM model-assumed  and Tyler estimators 
       perform similarly as the TVN model-assumed estimators for
       TVN-simulated data, but better than the latter for        GSM-simulated data.} 
     \label{fig:ashoka}
\end{figure*}
We simulated $\Sigma_k$ (for $k=1,2,\dots,6$) from $\mW_{m_k}(100m_k,I_{m_k})$ or a  Wishart distribution, before constraining
its condition numbers to be at most 50 and scaling each entry by its $(1,1)$th element.  
  The tensor-valued $\mM$ was chosen such that when rearranged in a 3-way
  tensor of size $(7\times9\times 3)\times(23\times 7)\times 3 $, it corresponds to the
  three RGB slices of a $189\times 161$ true color image  of the Lion
  Capital of Ashoka, shown in the leftmost image of Figure \ref{fig:ashoka}.

We also simulated data for
 $\sigma\in\{2,6,10\}$ and for the TVN case that is a
special case of \eqref{eq:scaletensnorm} when $\Prob(Z_i=1)=1$ for all
$i=1,\dots,n$. Since $\Var(\vecc(\rtY_i)) = 15\sigma^2\Sigma$, 
 we set $\sigma^2$ for TVN-simulated data to be 15 times that of the
 GSM-simulated cases for both sets to have comparable overall
 variability. Figure~\ref{fig:ashoka} (second set of images in the
 left block) displays simulated realizations at each setting: as
 expected, increasing  $\sigma^2$ produces noisier  images.  
  Figure \ref{fig:ashoka} (right block) displays TVN-,
  GSM- and robust-estimated $\mhM$s. For Tyler's robust estimator we used the location  estimation procedure  of Section \ref{subsec:robustM}. 
For all cases, estimation
 performance worsens with increasing $\sigma$. For TVN-simulated data,
 there is little difference in the quality of $\mhM$ obtained by the
 three methods.  
But for heavy-tailed GSM-simulated data, the GSM or robust $\mhM$s
recover $\mM$ better than the TVN-estimated $\mhM$.  

Our illustration in Figure~\ref{fig:ashoka} was only on one
realization for each setting, so we account for simulation variability
in a larger study under a similar framework.
We  generated TVN and GSM data for $\sigma = 2,6,10$ and 100 different
realizations of $(\Sigma_1,\Sigma_2,\dots,\Sigma_6)$, each drawn as before from a Wishart distribution.\begin{figure*}[h]
     \centering
	 \includegraphics[width = \textwidth]{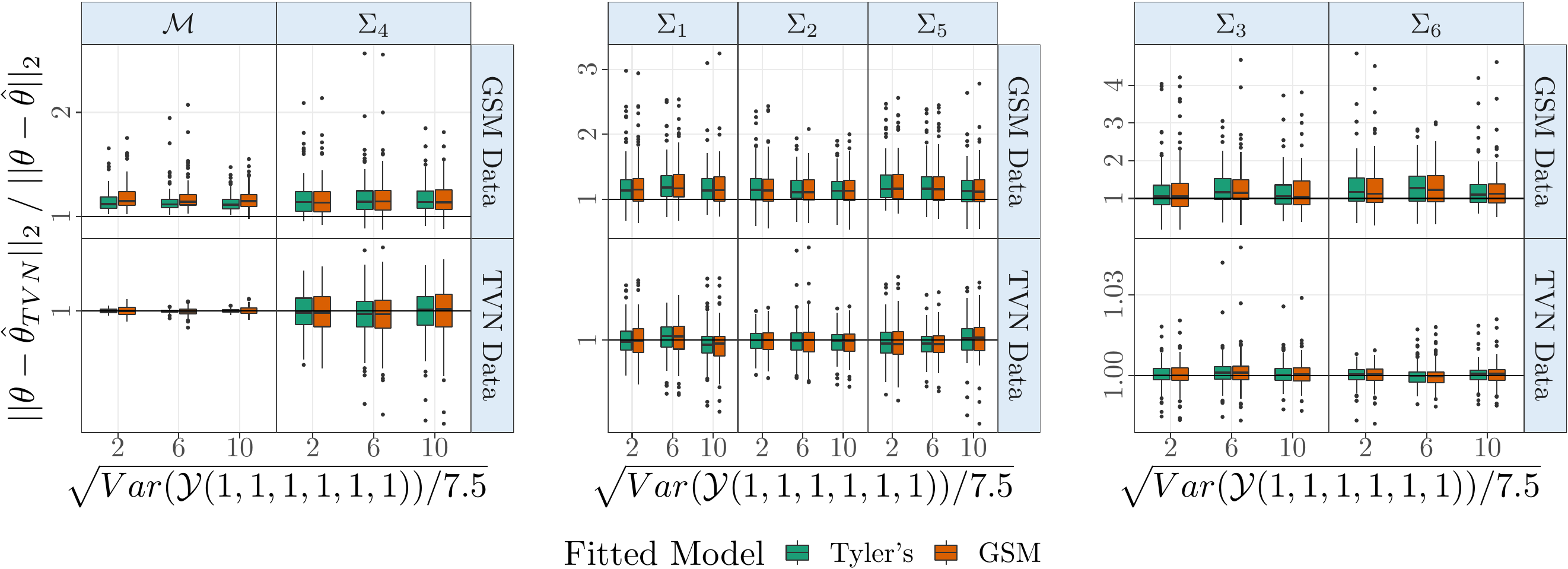}
     \caption{ Relative Frobenius norms of differences in parameters
       (generically denoted by $\theta$) for 
       $\sigma\in\{2,6,10\}$, TVN- and GSM-simulated data, and
       estimates obtained under Tyler's robust and GSM model
       assumptions. 
       Each plot is obtained over 100 realizations from
       the  GSM and TVN models with different values of 
     $\Sigma$s, themselves independent
     Wishart realizations. 
       All points are concentrated around 1 for TVN-simulated data, indicating  that the GSM and Tyler's estimates perform at least as well as the TVN model when the data is TVN-simulated. For the higher dimensional parameters $\mM$ and $\Sigma_4\in\mathbb{R}^{23\times 23}$ and for GSM-simulated data, nearly all points are above 1, indicating that the GSM and Tyler's models outperform the TVN model in this scenario.  For medium-sized parameters $\Sigma_1,\Sigma_2\in \mathbb{R}^{7\times 7}$ and $\Sigma_5\in\mathbb{R}^{9\times 9}$ and for GSM-simulated data, most points are also larger than 1. However, for the smaller $3\times 3$ matrices $\Sigma_3$ and $\Sigma_6$ and for GSM-simulated data, the points are more concentrated at around 1. In conclusion, the GSM and Tyler's models perform at least as well as the TVN model for TVN-simulated data. Morover, the GSM and Tyler's models outperform the TVN model for GSM-simulated data, and their relative performance increase with larger covariance matrices.
    }
     \label{fig:ashoka_diag}
   \end{figure*} 
Figure \ref{fig:ashoka_diag} displays the relative Frobenius norms of the
difference between the true $(\mM,\Sigma_1,\Sigma_2,\dots,\Sigma_6)$ and the estimated $(\hat\mM,\hat\Sigma_1,\hat\Sigma_2,\dots,\hat\Sigma_6)$ parameters, 
 obtained under the GSM, TVN and Tyler models, and for 100 independent samples realized under  GSM and TVN assumptions. 
Specifically, the relative Frobenius normed difference  is $R_d =||\theta - \hat\theta_{TVN}||_2
/||\theta - \hat\theta||_2$, where $\theta$ generically denotes the
parameter being compared, $\hat\theta_{TVN}$ corresponds to the
estimated parameter under the TVN model and $\hat\theta$ denotes the
estimate of the parameter obtained under the assumed GSM or Tyler's
model. Therefore, values larger than unity indicate better performance
of the GSM (or Tyler) estimator over that obtained under  TVN
assumptions. Conversely, values lower than unity indicate worse
performance of GSM (or Tyler) estimates than those obtained under TVN
assumptions.  Our display in Figure~\ref{fig:ashoka_diag} is grouped by parameters of similar
dimensionality, with the first plot displaying results for $\mM$ and $\Sigma_4\in \mathbb{R}^{23\times 23}$, the second one for $\Sigma_1,\Sigma_5 \in \mathbb{R}^{7\times 7}$ and $\Sigma_2\in \mathbb{R}^{9\times 9}$, and the last one for displaying values of the $3\times 3$ matrices $\Sigma_3$ and $\Sigma_6$.
For TVN-simulated data,  there is not much to choose from betwen the
TVN-assumed estimates and those obtained using  GSM or Tyler
estimators. 
 On the other hand, for GSM-simulated data and the higher dimensional
 parameters $\mM$ and $\Sigma_4$, the  robust GSM and Tyler models
 outperform the TVN model, since nearly all values exceed unity. The
 performance of the TVN model is still worse for
 $\Sigma_1,\Sigma_2,\Sigma_5$ when compared to the GSM and Tyler
 models, since $R_d$ is then mostly above 1. However, for smaller
 $3\times 3$ covariance matrices ($\Sigma_3$ and $\Sigma_6$), the TVN
 model performs more like the GSM and Tyler models. In conclusion, the
 GSM and Tyler fits perform similarly to the TVN model for
 TVN-simulated data. Morover, the GSM and Tyler models outperform the TVN model when the data are GSM-simulated, and their performance relative to the TVN increase with larger covariance matrices.
Our simulation experiment shows the importance of considering an EC
distribution that is more general than the TVN when the tails of the
data are heavier. It also shows the benefit of using Tyler's robust
estimator even when the exact EC distribution is unknown.

\section{Applications to Image Learning}\label{sec:application}
We apply our EC tensor-variate estimation methodology to two
learning applications involving images. The first application is
in the context of improved prediction of an image class using tensor-variate discriminant analysis and classification while the
second application characterizes the distinctiveness of face images in
terms of gender, age and ethnic 
origin. In each case, the EC distribution with heavier
tails (here, the TV-$t$ distribution) is demonstrated to have better
results than its TVN counterpart.
\subsection{Discriminant analysis for image classification}\label{sec:classification}
In this section, we provide a general linear (LDA) and quadratic (QDA)
framework for the classification of tensor-valued data, in a similar
manner as done for vector-~\citep{andersonandbahadur62} and
matrix-variate \citep{thompsonetal20} data. We also illustrate how our
maximum likelihood estimation methods of Section \ref{sec:MLE} are incorporated into
the LDA and QDA frameworks, and the benefit of using a more general EC
tensor-variate distribution than the TVN in predicting images of dogs
or cats in a two-class digital image classification problem.
The development of linear and quadratic discriminant analysis is straightforward, so we refer to Section~\ref{sec:classmethod}
for details. The methods in Section~\ref{sec:MLE} are used to estimate the parameters in each class population. Our framework can extend other discriminant analysis methods that considered the normal distribution \citep{panetal19}.

\begin{figure*}[h]
\subfloat[\label{subfig:catndogimag}]{
\includegraphics[width=\linewidth]{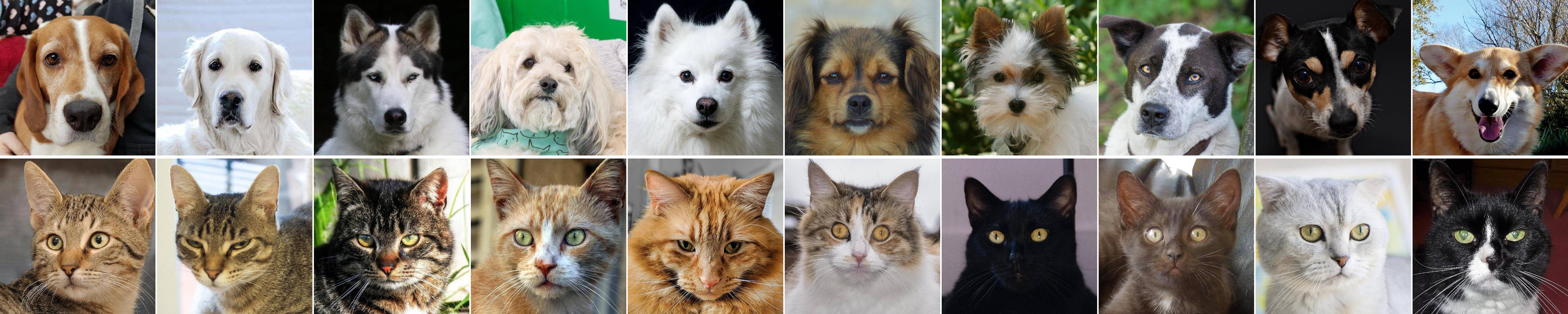}
}
\\
\subfloat[\label{subfig:curves}]{
\includegraphics[width=\linewidth]{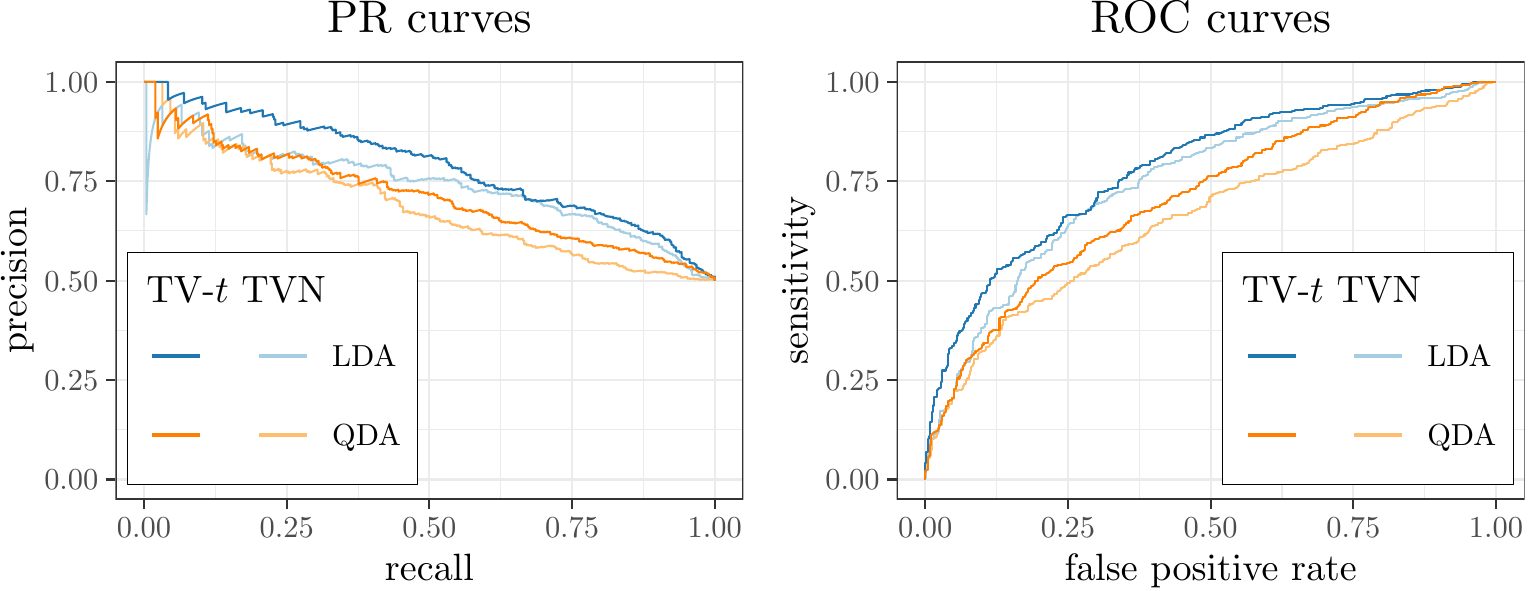}
}
\caption{
(a) sample dog and cat images from the AFHQ dataset,
(b) precision-recall (PR) and  receiver operating characteristic (ROC) curves for the tensor-variate-$t$ (TV-$t$) model and the TVN model, and  across both LDA and QDA classification frameworks. The TV-$t$ model outperforms the TVN model in terms of ROC and PR, since the curves corresponding to the TV-$t$ model are higher than those that correspond to the TVN model.
}
  \centering
\end{figure*}
The Animal Faces-HQ (AFHQ)
dataset \citep{choietal20} has 15,000 512$\times$512 digital photographs of wild and
domestic felines and canines. In this paper, we use our developed
methodology to perform classification on the 5,653 cat and 5,239 dog
images from the AFHQ dataset, of which 500 images of each animal are
test datasets \citep{choietal20}. 
Figure \ref{subfig:catndogimag} shows images of ten randomly selected
dogs and cats from the database. Despite the 
attempts at alignment, the images have dissimilar lighting conditions
and viewing angles, so we sequentially applied the Radon~\citep{radon86}
and 2D discrete wavelet transforms (DWT)~\citep{daubechies92} to the data in each of the
RGB channels and extracted the low-frequency (LL) components~\citep{jafariandsoltanian05,jadhavandholambe09}, yielding a $22\times 22\times 3$ array of features for each
training and test image. Figure 
\ref{fig:radondwt_animals} displays the sequentially applied Radon and
DWT transforms that correspond to the cat and dog images of Figure
\ref{subfig:catndogimag}. The processing extracts the local and
directional features in the images, with the Radon transform improving
 the low-frequency components~\citep{jadhavandholambe09} that are then
 extracted by the  wavelet transform~\citep{jafariandsoltanian05}.
We use QDA and LDA to build discriminant rules from these transformed
array data, by modeling each population of transformed images via the
TV-$t$ distribution  with unknown DF,  and for
comparison, also the TVN distribution. In particular, we  fit 
\begin{equation}\label{eq:totr_catndogs}
\mY_i = \langle \bx_i | \mB\rangle + \mE_i
,\quad 
\mE_i\sim t_{22,22,3}(\nu;0,\sigma^2\Sigma_1,\Sigma_2,\Sigma_3),
\end{equation}
where $\mY_i\in\mathbb{R}^{22\times 22\times 3}$ is the $i$th
transformed  image and  $\bx_i$ is $[1,0]'$ or $[0,1]'$ depending on
whether $\mY_i$ is a transformed image of a cat or a dog. The tensor-variate
errors $\mE_i$ are set to follow a TV-$t$ distribution with unknown
$\nu$ DF, which corresponds to the GSM distribution of
\eqref{dengengamma} for $a=b=\nu$. Based on our model
\eqref{eq:totr_catndogs}, the densities $f_{cat}$ and $f_{dog}$  involved in the classification rule of \eqref{eq:eqda_general} correspond to the EC distributions $t_{22,22,3}(\nu_{cat};\mB_{cat},\sigma^2\Sigma_1,\Sigma_2,\Sigma_3)$ and $t_{22,22,3}(\nu_{dog};\mB_{dog},\sigma^2\Sigma_1,\Sigma_2,\Sigma_3)$, where $\mB_{cat} = \mB\times_1 [1,0]$ and $ \mB_{dog} = \mB\times_1 [0,1]$ are the mean parameters of the transformed cat and dog image populations. 
Similarly for QDA, we estimate $f_{cat}$ and $f_{dog}$ by fitting 
$t_{22,22,3}(\nu_{cat};\mB_{cat},\sigma^2\Sigma_1,\newline\Sigma_2,\Sigma_3)$ and
$t_{22,22,3}(\nu_{dog};\mB_{dog},\sigma^2\Sigma_1,\Sigma_2,\Sigma_3)$ models
using the ToTR methodology of \eqref{eq:totr_catndogs} with
$\bx_i\equiv 1$ always.

These models (one for LDA and two for QDA) were fit to the training
set images of $5153$ cats and $4739$ dog images using the ECME
algorithm of Section \ref{sec:totr_ec}. The DFs $\nu$ were estimated
using the either step of \eqref{eq:eithernu} to be $\nu = 3.33$ for
the LDA case, and $\nu_{cat}=3.05$ and $\nu_{dog}=3.51.$ We 
specified $\Sigma_1$ and $\Sigma_2$ to be AR(1) correlation matrices
to capture  spatial context in the transformed images, and $\Sigma_3$
to have an equicorrelation structure to denote correlation between the
three RGB channels. This autocorrelation was estimated to be 0.94 for
the LDA case and the same for the cat transformed images but
marginally lower (0.93) for the dogs in the QDA fits. We also imposed low-rank CP formats on
$\mB$, with ranks chosen from among five candidates using cross
validation (CV) on the training data, and both missclassification rate
and area under the curves (AUC) as our CV decision criteria. For
comparison, we also performed QDA and LDA under the TVN model. 

The classification rule of \eqref{eq:eqda_general} for a transformed
image $\mY$ is specified in terms of the loglikelihood ratio (LLR) $=
\log(f_{cat}(\mY))-\log(f_{dog}(\mY)) + log(\eta_1/\eta_2)$. A large
LLR value indicates high posterior probability that $\mY$ is a cat image,
with small LLR values conversely indicating high posterior probability
that $\mY$ is the image of a dog. A zero value indicates complete
uncertainty in the classification. In our application, we set
$\eta_1=\eta_2=0.5$. 
 We obtained posterior logit-probabilities for the $500$ cat and $500$ dog images in the test set, and used them to calculate the precision-recall (PR) curves \citep{davisandgoadrich06} and the receiver operating characteristic (ROC) curves \citep{bradley97,bantisandfeng16} of Figure \ref{subfig:curves}. 
 We observe that for both LDA and QDA, PR curves are higher with the
 TV-$t$ than with the TVN models. This means that there is higher
 precision at all possible recall values using the TV-$t$ model than
 the TVN model. In the QDA PR curves, the TVN model sometimes has
 higher precision than the TV-$t$ 
 model (for recall values of less than 0.2). However, thresholds with
 recall values of less than 0.2 are not  practical, since they lead 
 false negatives (that is, cat images being classified as dog images) to be four
 times more  numerous than true positives (correctly classified cat images).  
 Similarly, in all cases the ROC curve is higher for the TV-$t$ model
 than for the TVN model. Higher ROC curves mean that all possible
 false positive rates have  higher sensitivity (true positive rate)
 for the TV-$t$ than for the TVN case.  We finish this section noting
 that $\nu$  was estimated to be less than 3.6 in all cases, meaning that
  our tensor-variate data are quite heavy-tailed. This explains why the
  TV-$t$ outperforms the TVN distribution in terms of classification
  for this real-data example, since it is able to accommodate the
  heavier tails of the transformed image data.

\subsection{Distinguishing facial characteristics}\label{lfw}
The LFW database is commonly used in the development and testing of facial
recognition methods \citep{huangetal07}, and consists of over 13,000
$250\times 250$ color images. Each image 
has the labeled attributes of ethnic origin, age group and
gender~\citep{afifiandAbdelhamed19,kumaretal09}. We use our ToTR
methodology of Section \ref{sec:totr_ec} to distinguish the
visual characteristics of different attributes. Specifically, we perform a 3-way tensor-variate analysis of variance (TANOVA) model to
distinguish faces across the three factors of gender (male or 
female), continent of ethnic origin (African, European or Asian, as
specified in the database), and cohort (child, youth, middle-aged or
senior).  Of the more than 13,000 images in the LFW database, we use
the 605 images with unambiguous genders, age group and ethnic origin,
and with at most 33 images for each factor combination that were
selected and made available by  \citet{llosaandmaitra20}. The authors also
cropped the images to a central region of size $151\times 111$ each,
and logit-transformed them to match the statespace of the normal
distribution. In this application, we evaluate whether the use of a
model with TV-$t$ errors can provide a more improved fit of the model.

We fit the 3-way TANOVA model with TV-$t$ errors:
\begin{equation}\label{eq:lfwtotr}
\mY_{ijkl} = \langle \mX_{ijk}|\mB\rangle + \mE_{ijkl}
,\quad 
\mE_{ijkl} \sim t_{\bbm_3}(\nu;0,\sigma^2\Sigma_1,\Sigma_2,\Sigma_3),
\end{equation}
where $\bbm_3 = [151,111,3]'$ and $\mY_{ijkl}\in \mathbb{R}^{151\times
  111\times 3}$ is the 151$\times$111 RGB color image that corresponds
to the $l$th  person of the $i$th gender, $j$th ethnic origin and
$k$th age group  $(i=1,2;j=1,2,3;k=1,2,3,4;l=1,\sdots,n_{ijk})$.  Here $\mX_{ijk}\in\mathbb{R}^{2\times3\times
  4}$ is the tensor-valued covariate with 1 at position $(i,j,k)$ and
zero elsewhere, and indicates the factor combination of
$\mY_{ijk}$. The regression coefficient
$\mB\in\mathbb{R}^{2\times3\times4\times151\times111\times3}$ contains
the group means across all factor combinations. As in
Section~\ref{sec:classification}, we constrain  $\Sigma_1$ and
$\Sigma_2$ to be AR(1) correlation matrices and $\Sigma_3$ to be an
equicorrelation matrix. We fit our model \eqref{eq:lfwtotr} assuming
TV-$t$ errors with unknown DF parameter $\nu$ and TVN errors. The DF $\nu$ was estimated in all cases, and for ultra heavy-tailed data $\nu$ was constrained to never be lower than 2.01, so that that the first two population moments exist. Such is the case for this LFW dataset, which selected $\hat\nu=2.01$ as the DF in all cases.
We also fit the model assuming two different assumptions on the
$\mB$: the CP format, the TT format~\citep{oseledets11} that is the
tensor-ring (TR) format \citep{zhaoetal16a} when one of the TR ranks
is set to 1. We also assumed an  unconstrained $\mB$ (with no
reduction in the number of 
parameters). We chose the CP and TT ranks using the Bayesian
Information Criterion~(BIC)~\citep{kashyap77,schwarz78} out of a total
of 26 candidate ranks (for CP), and from nine ranks 
for the TT format. The chosen CP rank was $R=70$, with  19,175
unconstrained parameters in $\mB$, or a 98.4\% 
dimension reduction relative to the unconstrained $\mB$ that contains more than 1.2 million parameters. The
TT rank was chosen to be $g=(1,2,4,8,4,2)$, which resulted in a total
of 8,450 unconstrained parameters in $\mB$, or a 99\%
dimension reduction relative to the unconstrained $\mB$. Table \ref{table:BIC_lfw} displays the BICs
across the six (two error distributions across three $\mB$ formats)
assumptions.
\begin{table}[h]
\centering
  \caption{BICs for different fits, with smaller values indicating
    better model fit. 
    The TV-$t$ fits in this table and this paper have a selected DF parameter $\nu$ of 2.01, which in all cases outperformed their TVN counterparts.}
\begin{tabular}{ c  c | c  c c }
 &  & \multicolumn{3}{c}{Format of $\mB$}\\
& & TT  & CP & none  \\     \hline
 \multirow{2}{*}{Error Model}
 &TVN& -11,190,861 & -11,223,595 & -4,745,807\\ 
&TV-$t$ & -23,992,185& -24,064,833 & -20,043,342
     \end{tabular}
\label{table:BIC_lfw}
\end{table}

From Table \ref{table:BIC_lfw}, we observe that
the TV-$t$ distribution always outperforms its TVN counterpart, which
indicates the benefit of considering a model for the heavy
tails. Moreover, within the TVN or TV-$t$ models, the CP
and TT formats always outperform the model with unconstrained $\mB$,
with the CP always performing better than the TT. This shows the
benefit of using a low-rank format that allows massive parameter
reductions while also preserving model accuracy. Importantly in 
the context of this paper, we find that the TV-$t$ distributions with
constrained or unconstrained $\mB$ outperform all the TVN models,
including the TT and CP formats. Our findings  show that performing
parameter reduction 
through the CP or TT low-rank formats should be accompanied with the
more appropriate model for the tail weights. Indeed, the best performance is
achieved when the low-rank format is used in combination with the
heavy-tailed TV-$t$ distribution.

\begin{figure*}[h]
\subfloat[TV-$t$ \label{subfig:LFWt}]{
\includegraphics[width=0.49\linewidth,page = 2]{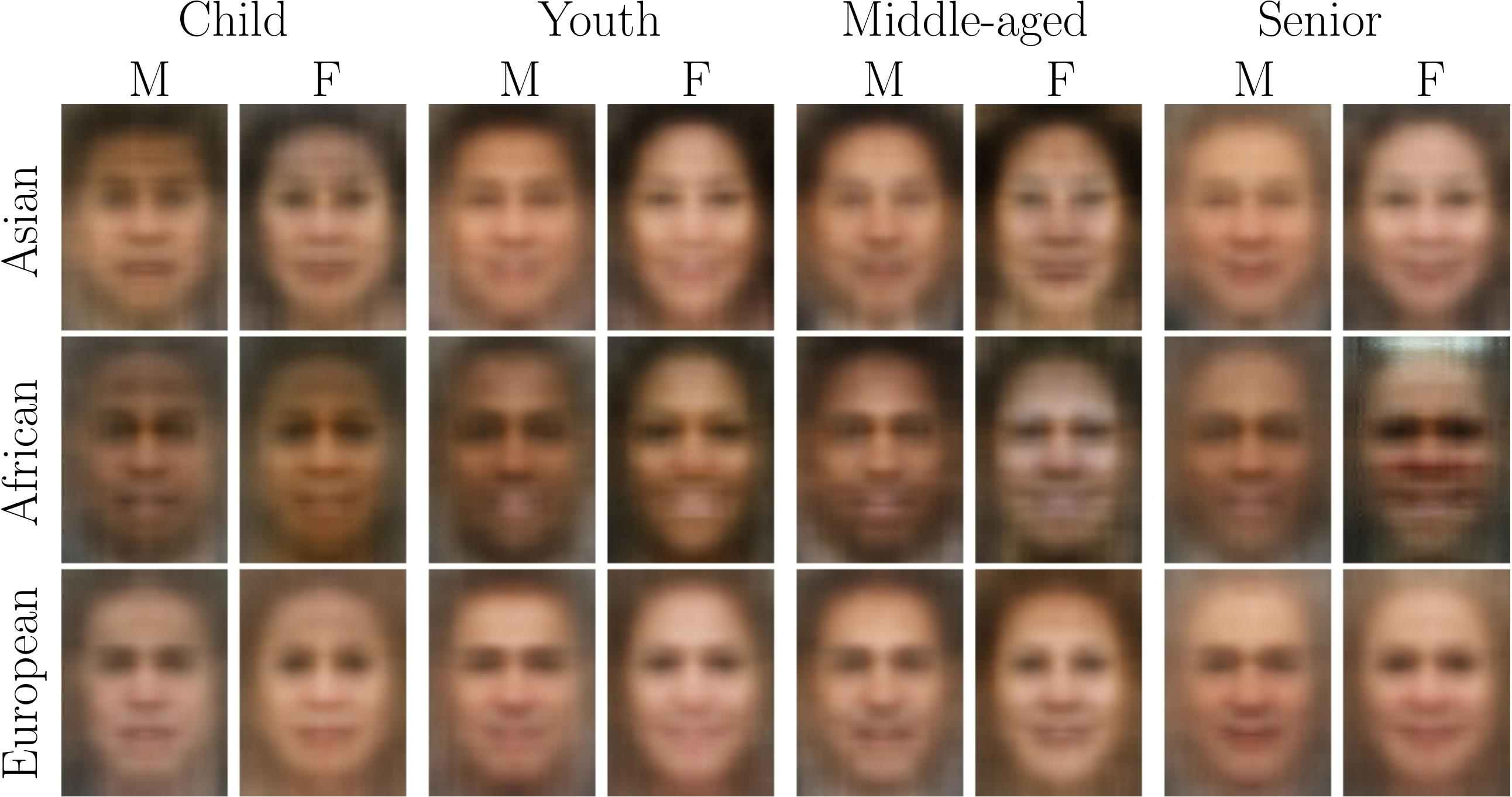}
}
\subfloat[TVN \label{subfig:LFWn}]{
\includegraphics[width=0.49\linewidth,page = 1]{figures/faces-crop.pdf}
}
\caption{
Different slices of the resulting factorized tensor $\mB$ that results from fitting the ToTR model on a 3-way factor ANOVA of the LFW dataset using the CP format and the (a) TV-$t$ distribution and (b) TVN distribution. The results are CP-format compressed mean images across age groups (child, youth, middle aged and senior), ethnic origin (Asian, African, European) and  genders (male, female) from the 605 LFW images used in \citep{llosaandmaitra20}. The imposed CP format preserved vital information regarding the factor-combination of age group, gender and ethnic-origin while providing substantial (more than $98\%$) parameter reduction.
}
  \centering
\end{figure*}
Figures \ref{subfig:LFWt} and \ref{subfig:LFWn}  display the estimated
$\mhB$ for the TV-$t$ ($\nu=2.01$) and TVN models. In both
cases, the CP format preserves vital visual characteristics regarding ethnic
origin, gender, and age-group. However, the TVN formatted $\mB$ results in
more exaggerated facial expressions, while the fits under
the TV-$t$ model are sometimes noisier, but more accurately reflect
the diversity of facial characteristics in each sub-group. Therefore,
we contend that the TV-$t$ model not only provides a quantitatively
better fit to the TANOVA model, but also more accurately displays the
heterogeneity in the dataset.

\section{Discussion}
\label{sec:discussion}
In this article, we introduced, 
defined, characterized and studied in detail the family of spherically
and EC tensor-variate distributions that generalize the well-studied TVN distributions. We derived CFs, PDFs, exact distributions of linear
reshapings, linear transformations, 
derived the moments of different orders as well as of special forms,
and introduced the EC tensor-variate Wishart distribution. These
properties arise not only from the vectorized elliptical form of the
random tensor, but also from its intrinsic tensor-variate structure, which in many cases makes  estimation and inference possible.
We provided algorithms for maximum likelihood estimation
under different assumptions on the observed EC data. First, we suggested a modification to the  maximum likelihood estimation  algorithm for TVN data that makes its parameter estimates identifiable. We then use this algorithm to propose a maximum likelihood estimation algorithm for uncorrelated EC tensor-valued observations. We also derive EM
algorithm variants for maximum likelihood estimation under IID draws from a
scale mixture of TVN distribution, and a novel, robust and constrained Tyler-type
algorithm for maximum likelihood estimation when the underlying EC tensor-variate distribution is
unknown. We further proposed a ToTR framework with EC errors that extends the work of \citet{llosaandmaitra20} to the case where the tensor-response regression has  EC tensor-variate distributed errors under various low-rank format assumptions on the regression coefficient.
We studied the performance of our estimation algorithms through a
simulation experiment on 6-way tensor-valued data that showed the
limitations of the TVN distribution when the data have heavier tails
than can be modeled by the TVN. Indeed, we demonstrated that fitting a
GSM or Tyler robust estimating model to such data results in improved performance when the data is not Gaussian.
 Further, we provided methodology that allows us to use our maximum likelihood methodology
 to perform LDA and QDA on tensor-data classification and
 discrimination, and applied it to predict images of cats or dogs in the AFHQ dataset. We demonstrate that the AFHQ data is quite heavy-tailed, resulting in better performance for the TV-$t$ model relative to the TVN model in terms of ROC and PR curves. 
  Finally, we used our ToTR methodology to perform
 a 3-way TANOVA of facial images from the LFW database.  
 We demonstrate that the database is exceptionally heavy-tailed (with a selected DF of 2.01), since the
 TV-$t$ models outperfom their TVN counterparts in all scenarios.

There are several possible extensions and generalizations of our
work. First, we can define spherical distributions in multiple ways. The class of distributions $
\mathscr{F}_k = \{ \rtX \!:\!\rtX\overset{d}{=} [\![\rtX;\Gamma_1,\sdots\Gamma_k,I_{m_{k+1}},\newline \sdots,I_{m_p}]\!]$,  $\Gamma_j\in O(m_j) \forall j\in\{1,\sdots,k\}\}$
is such that 
$
\mathscr{F}_A \subseteq
\mathscr{F}_{p} \subseteq
\mathscr{F}_{p-1} \subseteq
\sdots \subseteq
\mathscr{F}_{1},
$
where $\mathscr{F}_A$ is the family we defined in this article.
All these families have Kronecker-separable
covariance structures if the family of EC random tensors is defined
using the Tucker product, as in \eqref{eq:defscale}. The
Kronecker-separable structure makes it practical for us to
explore the parameter space of the ultra-high dimensional covariance
matrix. However, other types of variance structures can also be
considered. Finally, the sampling distributions of  $\mB$ in our ToTR with EC errors are unknown, and may be
of interest in applications where significance testing is
important. Thus, we see that though we have developed theory and estimation
methodology for EC tensor-variate distributions, there remain a
number of extensions and generalizations that are worthy of
investigation.

\section*{Acknowledgments}
A version of this paper won C. Llosa-Vite an award at the 2022
Statistical Methods in Imaging Student Paper Competition. 
This research was supported in part by the National Institute of
Justice (NIJ) under Grants No. 2015-DN-BX-K056 and 2018-R2-CX-0034. The research of the  second author was also supported in part by the National
Institute of Biomedical Imaging and Bioengineering (NIBIB) 
of the National Institutes of Health (NIH) under Grant R21EB016212,
and the United States Department of Agriculture (USDA) National
Institute of Food and Agriculture (NIFA) Hatch project IOW03617.
The content of this paper is however solely the responsibility of the
authors and does not represent the official views of the NIJ, the NIBIB, the NIH, the NIFA or the USDA.

\section*{\Large Supplementary Appendix}

\renewcommand\thefigure{S\arabic{figure}}\setcounter{figure}{0}
\renewcommand\thetable{S\arabic{table}}\setcounter{table}{0}
\renewcommand\thesection{S\arabic{section}}\setcounter{section}{0}
\renewcommand\theequation{S\arabic{equation}}\setcounter{equation}{0}

\section{Supplement to Section 2 }

\subsection{Tensor algebra notation involved in Section \ref{sec:prelim}}
\label{app:tensdef}
As in Section \ref{sec:prelim}, we assume $\mX$ is a $p-$way tensor of size $m_1\times m_2\times \hdots\times m_p$ and $\boldsymbol{e}_{i}^{m}\in \mathbb{R}^{m}$ is a unit-basis vector with 1 as the $i$th element and 0 elsewhere. The vector outer product of a set of vectors $(\bz_1,\bz_2,\hdots,\bz_p)$, where $\bz_j\in \mbR^{n_j}$ for all $j=1,2,\hdots,p$, is written as $\circs_{j=1}^p \boldsymbol{z}_j$ and results in a $p-$way tensor of size $n_1\times n_2\times\hdots\times n_p$ such that  $(\circs_{j=1}^p \boldsymbol{z}_j)(i_1,i_2,\hdots,i_p)= \prod_{j=1}^p \boldsymbol{z}_j(i_j).$ This product is useful for expressing $\mX$ as
\begin{equation}\label{elementform}
\sum_{i_1=1}^{m_1} \hdots \sum_{i_p=1}^{m_p} \mX(i_1,i_2,\hdots,i_p)
\big( \circs_{q=1}^p \boldsymbol{e}_{i_q}^{m_q} \big),
\end{equation}
and the Tucker product between $\mX$ and $\sfA_1,\sfA_2,$ $\hdots, \sfA_p$ as
\begin{equation*}\label{tuckerform}
[\![ \mX; \sfA_1,\sfA_2, \hdots,\sfA_p  ]\!]
=\sum_{i_1\dots i_p} \mX(i_1,i_2, \hdots, i_p)
\big( \circs\limits_{q=1}^p \sfA_q(:,i_q) \big),
\end{equation*}
where $\sfA_q(:,i_q)$ is the $i_q$th column of $\sfA_q$. 
Tensor reshapings are obtained by manipulating the vector outer product. We define the vectorization, $k$th mode matricization and $k$th canonical matricization of $\mX$ as
\begin{equation}\label{defvec}
\vecc (\mathbf{\mX})
= \sum\limits_{i_1=1}^{m_1} \hdots \sum\limits_{i_p=1}^{m_p} \mX(i_1,i_2, \hdots, i_p)
\big( \bigotimes\limits_{q =p}^1 \boldsymbol{e}_{i_q}^{m_q} \big),
\end{equation}
\vspace*{-.3cm}
\begin{equation}\label{kmode}
\mX_{(k)} 
=\sum\limits_{i_1=1}^{m_1} \hdots \sum\limits_{i_p=1}^{m_p} \mX(i_1,i_2, \hdots, i_p)
\boldsymbol{e}_{i_k}^{m_{k}} \big( \bigotimes\limits_{q=p,q\neq k}^1 \boldsymbol{e}_{i_q}^{m_q} \big)',
\end{equation}
\vspace*{-.3cm}
\begin{equation}\label{canonical}
\mX_{<k>}  
= \sum\limits_{i_1=1}^{m_1} \hdots \sum\limits_{i_p=1}^{m_p} \mX(i_1,i_2, \hdots, i_p)
\big( \bigotimes\limits_{q =k}^1 \boldsymbol{e}_{i_q}^{m_q} \big)
\big( \bigotimes\limits_{q =p}^{k+1} \boldsymbol{e}_{i_{q}}^{m_{q}} \big)', 
\end{equation}
respectively for any $k=1,2,\hdots,p$.  
The tensor outer product ($\circ$) between $\mX$ and $\mY \in \mathbb{R}^{n_1 \times n_2\times\hdots\times n_q}$ results in a tensor $\mX\circ\mY$ of size $(m_1 \times m_2\times\hdots\times m_p\times n_1 \times n_2\times\hdots\times n_q)$ such that
$
(\mX\circ\mY)(i_1,i_2,\hdots,i_p,j_1,j_2,\hdots,j_q) = \mX(i_1,i_2,\hdots,i_p)\mY(j_1,j_2,\hdots,j_q).
$
With the tensor outer product defined, we now state and prove
\begin{lemma}\label{lemma:outervectens}
Let $\rtY\in \mathbb{R}^{n_1\times n_2\times \dots\times n_q}$. Then 
$(\rtX\circ\rtY)_{<p>} = (\vecc \rtX)(\vecc \rtY)'.$
\end{lemma}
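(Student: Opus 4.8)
The plan is to verify the claimed identity elementwise, by computing the $(a,b)$th entry of each side of the equation, where $a$ indexes the rows of the resulting matrix (of size $\prod_k m_k \times \prod_l n_l$) and $b$ indexes its columns. First I would recall from \eqref{canonical} that the $<p>$-canonical matricization of a $(p+q)$-way tensor $\mZ$ of size $m_1\times\dots\times m_p\times n_1\times\dots\times n_q$ writes $\mZ_{<p>}$ as the sum over all indices of $\mZ(i_1,\dots,i_p,j_1,\dots,j_q)$ times the outer product $(\bigotimes_{s=p}^{1}\be_{i_s}^{m_s})(\bigotimes_{s=q}^{1}\be_{j_s}^{n_s})^\top$. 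Applying this to $\mZ = \rtX\circ\rtY$ and using the defining relation $(\rtX\circ\rtY)(i_1,\dots,i_p,j_1,\dots,j_q) = \rtX(i_1,\dots,i_p)\rtY(j_1,\dots,j_q)$, the left-hand side becomes
\[
(\rtX\circ\rtY)_{<p>}
=\sum_{i_1\dots i_p}\sum_{j_1\dots j_q}\rtX(i_1,\dots,i_p)\rtY(j_1,\dots,j_q)\Big(\bigotimes_{s=p}^{1}\be_{i_s}^{m_s}\Big)\Big(\bigotimes_{s=q}^{1}\be_{j_s}^{n_s}\Big)^\top.
\]

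Next I would expand the right-hand side. By \eqref{defvec}, $\vecc(\rtX) = \sum_{i_1\dots i_p}\rtX(i_1,\dots,i_p)\bigotimes_{s=p}^{1}\be_{i_s}^{m_s}$ and likewise $\vecc(\rtY) = \sum_{j_1\dots j_q}\rtY(j_1,\dots,j_q)\bigotimes_{s=q}^{1}\be_{j_s}^{n_s}$, so the outer product $(\vecc\rtX)(\vecc\rtY)^\top$ is exactly the double sum
\[
\sum_{i_1\dots i_p}\sum_{j_1\dots j_q}\rtX(i_1,\dots,i_p)\rtY(j_1,\dots,j_q)\Big(\bigotimes_{s=p}^{1}\be_{i_s}^{m_s}\Big)\Big(\bigotimes_{s=q}^{1}\be_{j_s}^{n_s}\Big)^\top,
\]
which is termwise identical to the expression obtained for the left-hand side. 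Hence the two sides agree.

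The argument is essentially a bookkeeping exercise and there is no substantive obstacle; the only point requiring care is that the ordering of the Kronecker factors in the row index of $\mZ_{<p>}$ (namely $\bigotimes_{s=p}^{1}$, highest mode first) must match the ordering used in the definition of $\vecc(\rtX)$, and similarly that the column index of $\mZ_{<p>}$ matches $\vecc(\rtY)$ — which it does, since in $\mZ_{<p>}$ the first $p$ modes form the rows and the last $q$ modes form the columns, each vectorized in the convention of \eqref{defvec}. I would also note the basic fact that $(\be_{i_1}\otimes\dots\otimes\be_{i_p})(\be_{j_1}\otimes\dots\otimes\be_{j_q})^\top$, as the index tuples range over all values, forms the standard basis of the matrix space, so the coefficient matching above is unambiguous. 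The full proof is then a one-line comparison of the two displays.
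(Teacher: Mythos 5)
Your proposal is correct and follows essentially the same route as the paper's proof: both expand the $\langle p\rangle$-canonical matricization of $\rtX\circ\rtY$ over the standard basis, use $(\rtX\circ\rtY)(i_1,\dots,i_p,j_1,\dots,j_q)=\rtX(i_1,\dots,i_p)\rtY(j_1,\dots,j_q)$ to factor the double sum, and match it term by term with $(\vecc\rtX)(\vecc\rtY)'$ via \eqref{defvec}. The paper merely inserts one intermediate step (matricizing the outer product of basis vectors into a Kronecker-product rank-one matrix) that you absorb directly into your application of \eqref{canonical}; the content is identical.
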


\begin{proof}
See Section \ref{proof:outervectens}.
\end{proof}

\subsection{Marginal distributions}\label{sec:marginals}
The marginal distributions follow from Theorem \ref{ellipt:distform}. by choosing $\sfA_1,\sfA_2,\hdots,\sfA_p$ appropriately. As an example, consider the 3-way random tensor $\rtY \sim \EC_{m_1,m_2,m_3}\big(\mM,\Sigma_1,\Sigma_2,\Sigma_3 ,\varphi\big)$, which will be split into eight sub-tensors as
\begin{equation}\label{eq:subtens}
\rtY_{ijk} = 
[\![\rtY; \sfA_{i1},\sfA_{j2},\sfA_{k3}]\!],
\quad
\mM_{ijk} = 
[\![\mM; \sfA_{i1},\sfA_{j2},\sfA_{k3}]\!],
\end{equation}
where $\sfA_{1l}=[\sfI_{n_l}\vdots \bzero]$ and
$\sfA_{2l}=[\bzero\vdots\sfI_{m_l-n_l}]$ are block matrices of size
$n_l\times m_{l}$ and $(m_l-n_l)\times m_{l}$ respectively, and $ n_l<
m_{l}$ for all $l=1,2,3$ and $i,j,k=1,2$ (here $\sfI_h$ denotes an
$h\times h$ identity matrix and $\bzero$ is a matrix of zeroes). Figure \ref{fig:subtensor} 
\begin{figure}[h!]
\includegraphics[width=0.3\textwidth]{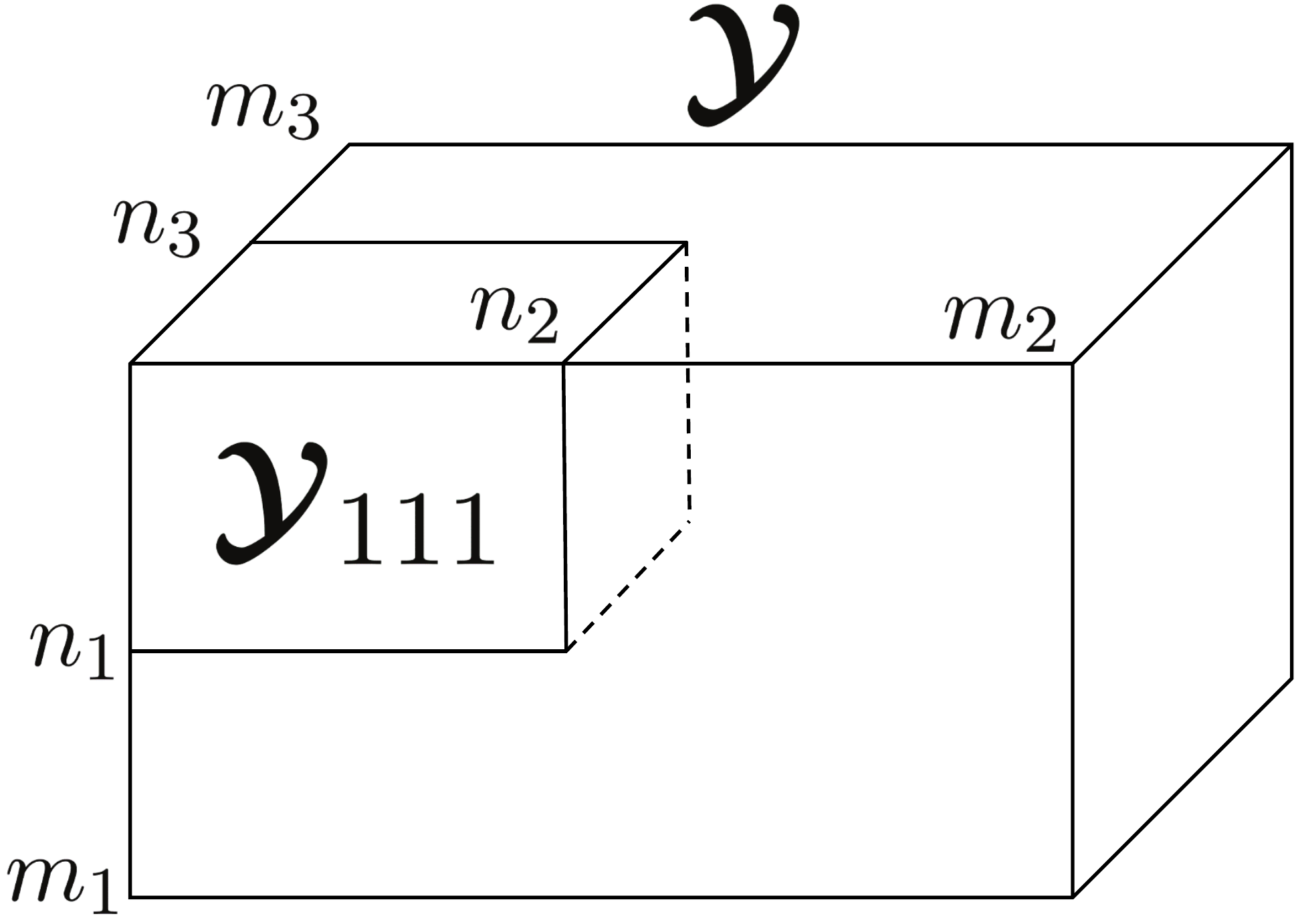}
  \centering
  \caption{A random third order tensor $\rtY$ along with one of its subtensors $\rtY_{111}$.}
  \label{fig:subtensor}
\end{figure}
displays $\rtY$ along one of its eight sub-tensors $\rtY_{111}$. Using Theorem \ref{ellipt:distform} and \eqref{eq:subtens}, we obtain its marginal distribution as
$$\rtY_{111}\sim
\EC_{n_1,n_2,n_3}\big(\mM_{111},\sfA_{11}\Sigma_1\sfA_{11}^\top,\sfA_{12}\Sigma_2\sfA_{12}^\top,\sfA_{13}\Sigma_3\sfA_{13}^\top,\varphi\big),
$$
where $\sfA_{1k}\Sigma_k\sfA_{1k}^\top$ is a sub-matrix corresponding to the first $n_k$ columns and rows of $\Sigma_k$.


\subsection{Conditional distributions along multiple modes}\label{supp:condsupp}
Here we demonstrate how to find conditional distribution along multiple modes.
As an example, consider the subtensor $\rtY_{111}$ of $\rtY \sim \EC_{m_1,m_2,m_3}\big(\mM,\Sigma_1,\Sigma_2,\Sigma_3 ,\varphi\big)$, both shown in Figure \ref{fig:subtensor} and let
\begin{equation*}
\Sigma_i = 
\begin{bmatrix}
\Sigma_{i,11}  & \Sigma_{i,12}\\
\Sigma_{i,21}  & \Sigma_{i,22}
\end{bmatrix}
,\quad
\Sigma_{i,11} \in \mathbb{R}^{n_i\times n_i},
\end{equation*}
and $\Sigma_{i,\{11\}|\bullet} = \Sigma_{i,11}-\Sigma_{i,12}\Sigma_{i,22}^{-1}\Sigma_{i,21}$ for $i=1,2,3$.
Then 
$$
\mL(  \rtY_{111} \big| \rtY_{\backslash\{111\}} = \mY_{\backslash\{111\}}
  ) =
\EC_{n_1,n_2,n_3}\big(\mM_{\{111\}|\textbackslash\{111\}},\Sigma_{1,\{11\}|\bullet},\Sigma_{2,\{11\}|\bullet},\Sigma_{3,\{11\}|\bullet},\varphi_{q} \big),
$$
where the event 
$\rtY_{\backslash\{111\}}=\mY_{\backslash\{111\}}$ means that  $
\rtY_{112}
= \mY_{112},\rtY_{121} = \mY_{121},\rtY_{122} = \mY_{122},\rtY_{211} =
\mY_{211},\rtY_{212} = \mY_{212},\rtY_{221} = \mY_{221},\mbox{ and }
\rtY_{222} = \mY_{222}$, 
\begin{equation*}
\begin{split}
\mM_{\{111\}|\textbackslash\{111\}}
&=
[\![(\mY_{222}-\mM_{222});\Sigma_{1,12}\Sigma_{1,22}^{-1},\Sigma_{2,12}\Sigma_{2,22}^{-1},\Sigma_{3,12}\Sigma_{3,22}^{-1}]\!]
\\+&
(\mY_{211}-\mM_{211})\times_1(\Sigma_{1,12}\Sigma_{1,22}^{-1})
-
[\![(\mY_{122}\!-\!\mM_{122});I_{n_1},\Sigma_{2,12}\Sigma_{2,22}^{-1},\Sigma_{3,12}\Sigma_{3,22}^{-1}]\!]
\\+&
(\mY_{121}-\mM_{121})\times_2(\Sigma_{2,12}\Sigma_{2,22}^{-1})
-
[\![(\mY_{212}-\mM_{212});\Sigma_{1,12}\Sigma_{1,22}^{-1},I_{n_2},\Sigma_{3,12}\Sigma_{3,22}^{-1}]\!]
\\+&
(\mY_{112}-\mM_{112})\times_3(\Sigma_{3,12}\Sigma_{3,22}^{-1})
-
[\![(\mY_{221}-\mM_{221});\Sigma_{1,12}\Sigma_{1,22}^{-1},\Sigma_{2,12}\Sigma_{2,22}^{-1},I_{n_3}]\!],
\end{split}
\end{equation*}
and 
$\varphi_q(u)
=\mbE\left(R^2_{q}\varphi_{\rtU}(u)\mid\rtY_{\backslash\{111\}}=\mY_{\backslash\{111\}}
\right),
$ 
with 
$
R^2_{q} = 
D^2_{\Sigma_{3,\{11\}|\bullet}\otimes \Sigma_{2,\{11\}|\bullet}\otimes \Sigma_{1,\{11\}|\bullet}}(\rtY_{111},\allowbreak\mM_{111}),
$.
If $\varphi(u) = \exp(-u/2)$, then the conditional distribution of
$R^2_{q}$ given $\rtY_{\backslash\{111\}}=\mY_{\backslash\{111\}}$ is
  $\chi^2_{n_1n_2n_3}$, and $\varphi_q(u) = \exp(-u/2)$.

\subsection{The Singular EC Wishart distribution}\label{sec:singularecwishart}

Suppose $\rmX\sim \EC_{m,n}(0,\Sigma,I_n,\varphi)$ for $n<m$. Then $\rmS=\rmX\rmX^\top\sim\ECW_m(n,\Sigma,\varphi)$ but $\rmS$ is singular.
In the following we extend the results of \citet{srivastavaandkhatri79,srivastava03} to obtain the joint PDF of the functionally independent elements of $\rmS$.
\begin{theorem}\label{thm:swishart}
Suppose $\rmS\sim \ECW_m(n,\Sigma,\varphi),$ with $ n<m$,  and for the $n\!\times\! n$ matrix $\rmS_{11}$,
$$
\rmS = \begin{bmatrix}
\rmS_{11} & \rmS_{12}\\ \rmS_{12}^\top & \rmS_{22}
\end{bmatrix}.
$$ 
Then the joint PDF of the functionally independent elements of $\rmS$ is 
\begin{equation}
f_{[\rmS_{11}\rmS_{12}]}(\sfS)=
\dfrac{\pi^{n^2/2}}{\Gamma_n(n/2)|\Sigma|^{n/2}}|\sfS_{11}|^{(n-m-1)/2}g(\tr(\Sigma^{-1}\sfS)).
\end{equation}
\end{theorem}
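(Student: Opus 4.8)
The plan is to adapt the singular-Wishart derivation of \citet{srivastavaandkhatri79} to the EC setting, using that an EC density depends on its argument only through a quadratic form. By Definition~\ref{def:matwish} we may realize $\rmS \overset{d}{=} \rmX\rmX^\top$ with $\rmX \sim \EC_{m,n}(0,\Sigma,I_n,\varphi)$, and by \eqref{ellipt:PDF1} (since $|I_n\otimes\Sigma| = |\Sigma|^n$ and the squared Mahalanobis distance here is $\tr(\Sigma^{-1}\sfX\sfX^\top)$) the density of $\rmX$ on $\mathbb{R}^{m\times n}$ is $f_{\rmX}(\sfX) = |\Sigma|^{-n/2}\,g(\tr(\Sigma^{-1}\sfX\sfX^\top))$, which depends on $\sfX$ only through $\sfX\sfX^\top$. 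Partition $\rmX$ into a top $n\times n$ block $\rmX_1$ and a bottom $(m-n)\times n$ block $\rmX_2$. Because $\rmX$ has a density, $\rmX_1$ is invertible almost surely, so from $\rmS_{11}=\rmX_1\rmX_1^\top$, $\rmS_{12}=\rmX_1\rmX_2^\top$ and $\rmS_{22}=\rmX_2\rmX_2^\top$ we obtain $\rmX_2^\top = \rmX_1^{-1}\rmS_{12}$ and hence $\rmS_{22}=\rmS_{12}^\top\rmS_{11}^{-1}\rmS_{12}$; thus $(\rmS_{11},\rmS_{12})$ are the functionally independent coordinates and determine all of $\rmS$.

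First I would integrate out $\rmX_2$ with $\rmX_1$ held fixed: the linear substitution $\sfX_2\mapsto\sfS_{12}=\sfX_1\sfX_2^\top$ has Jacobian $|\det\sfX_1|^{\,m-n}=|\sfS_{11}|^{(m-n)/2}$, so $d\sfX_2 = |\sfS_{11}|^{-(m-n)/2}\,d\sfS_{12}$; after this substitution $\sfX\sfX^\top$ equals the rank-$n$ matrix $\sfS$ reconstructed from $(\sfS_{11},\sfS_{12})$, so the generator term becomes $g(\tr(\Sigma^{-1}\sfS))$. The remaining integrand then depends on $\sfX_1$ only through $\sfS_{11}=\sfX_1\sfX_1^\top$, so I would integrate it out via the Srivastava--Khatri matrix-integral identity, under which $d\sfX_1$ contributes the factor $\frac{\pi^{n^2/2}}{\Gamma_n(n/2)}|\sfS_{11}|^{-1/2}\,d\sfS_{11}$ to integrals of functions of $\sfX_1\sfX_1^\top$. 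Collecting terms, the power of $|\sfS_{11}|$ is $-(m-n)/2 - 1/2 = (n-m-1)/2$, the scale contributes $|\Sigma|^{-n/2}$, the constant is $\pi^{n^2/2}/\Gamma_n(n/2)$, and the generator contributes $g(\tr(\Sigma^{-1}\sfS))$ -- precisely the claimed joint density of $(\rmS_{11},\rmS_{12})$.

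The two Jacobian computations are routine; the point needing care -- and the analogue of the key idea in \citet{srivastavaandkhatri79} -- is that even though $\rmS$ is almost surely singular, the EC density factors through $\sfX\sfX^\top$, so the composed map $\sfX\mapsto(\sfX_1\sfX_1^\top,\sfX_1\sfX_2^\top)$ is all that enters and no genuine $m\times m$ Jacobian is ever required. I would also make explicit the almost-sure invertibility of $\rmX_1$ (used to solve for $\rmX_2$ and to derive $\rmS_{22}=\rmS_{12}^\top\rmS_{11}^{-1}\rmS_{12}$) and track the constant $\pi^{n^2/2}/\Gamma_n(n/2)$ exactly, checking consistency by specializing to $\varphi(x)=\exp(-x/2)$, where the formula recovers the classical singular Wishart density.
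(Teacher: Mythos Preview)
Your proposal is correct and follows essentially the same route as the paper: start from the EC density $f_{\rmX}(\sfX)=|\Sigma|^{-n/2}g(\tr(\Sigma^{-1}\sfX\sfX^\top))$ obtained via Definition~\ref{def:matwish} and \eqref{ellipt:PDF1}, note that $\rmS_{22}=\rmS_{12}^\top\rmS_{11}^{-1}\rmS_{12}$ so that $(\rmS_{11},\rmS_{12})$ are the functionally independent coordinates, and then carry out the Srivastava--Khatri change of variables to produce the multiplicative factor $\tfrac{\pi^{n^2/2}}{\Gamma_n(n/2)}|\sfS_{11}|^{(n-m-1)/2}$. The only difference is expository: the paper simply cites \citet{srivastavaandkhatri79,srivastava03} for the transformation and states the resulting factor, whereas you spell out the two Jacobians ($d\sfX_2=|\sfS_{11}|^{-(m-n)/2}\,d\sfS_{12}$ and the $O(n)$-integration giving $\tfrac{\pi^{n^2/2}}{\Gamma_n(n/2)}|\sfS_{11}|^{-1/2}$) explicitly.
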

\begin{proof}
See Section \ref{proof:swishart}.
\end{proof}


\section{Supplement to Section 4}

\subsection{Linear and quadratic tensor-variate classification}\label{sec:classmethod}

Consider $G$ tensor-valued populations $\pi_1,\pi_2,\dots,\pi_G$. A
Bayes-optimal rule that minimizes the total probability of
missclassification assigns an observed tensor $\mX$ to group
$\pi_{i^*}$ if $i^* = \argmax_{i=1,\dots,G}\eta_if_i(\mX)$, where
$f_i(\mX)$ is the PDF of $\pi_i$ evaluated at $\mX$, and $\eta_i$ is
the prior probability that a member of $\pi_i$ gets correctly
classified \citep{andersonandbahadur62,thompsonetal20}.  

We first briefly visit the two-class problem. Here, we consider two tensor-valued populations $\pi_1$ and $\pi_2$. For $i,j=1,2$ let $\eta_i$ be the prior probability that a member of $\pi_i$ gets correctly classified to $\pi_i$, and also let $\mathbb{P}(i|j)$ be the probability that a member of $\pi_j$ gets misclassified to $\pi_i$ $(i\neq j)$. The total probability of misclassification (TPM) in this case is $\mathbb{P}(2|1)\eta_1 + \mathbb{P}(1|2)\eta_2$.
Now, suppose we are interested in classifying one tensor-valued
observation $\mX$ to $\pi_1$ or $\pi_2$. In this case, a Bayes-optimal
rule that minimizes the TPM assigns $\mX$ to $\pi_1$ if
$\eta_1f_1(\mX)\geq \eta_2f_2(\mX)$, where $f_i(\mX)$ is the PDF of
group $\pi_i$ evaluated at $\mX$
\citep{andersonandbahadur62,thompsonetal20}.

The extension of this rule to the case where there are $G$ groups $\pi_1,\pi_2,\dots,\pi_G$ assigns $\mX$ to  $\pi_{i^*}$ if $i^* = \argmax_{i=1,2,\dots,G}\eta_if_i(\mX)$, where $f_i$ and $\eta_i$ correspond to the PDF and prior probability that a member of $\pi_i$ gets correctly classified, respectively.

Suppose that  $\pi_i$ has the
$\EC_{\bbm}(\mM_i,\Sigma_{i,1},\sdots,\Sigma_{i,3},\varphi)$ distribution with PDF
\eqref{ellipt:PDF1}. Then an observation $\mY$ is assigned to $\pi_1$
over $\pi_2$ if 
\begin{equation}\label{eq:eqda_general}
\eta_1|\Sigma_1|^{-1/2} g( D^2_{\Sigma_1}(\mY,\mM_1))
\geq
\eta_2|\Sigma_2|^{-1/2} g ( D^2_{\Sigma_2}(\mY,\mM_2)),
\end{equation}
where $\Sigma_i = \otimes_{k=p}^1\Sigma_{i,k}$. This rule is greatly simplified for the TVN case where $g(u)=\pi^{-m/2}\exp(-u/2)$ as
\begin{equation}\label{eq:lqda_TVN}
\log\left(\dfrac{\eta_1^2}{\eta_2^2}\right) 
\!-\!\log \left(\dfrac{|\Sigma_1|}{|\Sigma_2|}\right)
\!-\!\left(D^2_{\Sigma_1}(\mY,\mM_1) - D^2_{\Sigma_2}(\mY,\mM_2)\right)
\!>\!0.
\end{equation}
\eqref{eq:lqda_TVN} is a quadratic discrimination analysis (QDA)
classification rule, with the usual reduction to linear discriminant
analysis (LDA) for  homogeneous covariance structures in the case of GSM
distributions with PDF as in \eqref{dengengamma}.

The parameters involved in \eqref{eq:eqda_general} can be estimated
from a training dataset using the methodology of Section
\ref{sec:MLE}. Moreover, our ToTR methodology of Section
\ref{sec:totr_ec} allows us to estimate the mean parameters $\mM_1$
and $\mM_2$ under low-rank formats, thereby facilitating substantial parameter reduction in many imaging applications.  ML estimates under the homogeneity of variance can be obtained by fitting the  TANOVA model
\begin{equation}\label{eq:totr_lda}
\mY_i = \langle \bx_i | \mB\rangle + \mE_i
,\quad 
\mE_i\sim \EC_{\bbm}(0,\sigma^2\Sigma_1,\dots,\Sigma_p,\varphi),
\end{equation}
where $\bx_i=[0,1]'$ or $\bx_i=[1,0]'$ accordingly as whether
$\mY_i\in \pi_1$ or $\mY_i\in\pi_2$. With population-specific
covariances, we  estimate the parameters in \eqref{eq:eqda_general} by
fitting separate EC
models to the training data from each group using the methods of
Section \ref{sec:MLE}. For the ToTR under low-rank format case,
fitting models to separate groups can be thought as the special case
of  \eqref{eq:totr_lda} when $\bx_i$ is the scalar 1. We now
illustrate our  methodology on images from the AFHQ database.

In section \ref{sec:classification} we display the sequentially applied Radon and DWT transforms
of the cat and dog images displayed in Figure
\ref{subfig:catndogimag}. These transformed figures are displayed in
Figure \ref{fig:radondwt_animals} and are examples of the $\mY$s that
are used in the calculation of the decision rule and in the
prediction.

\begin{figure}[!htb]
\includegraphics[width=\linewidth, page= 1]{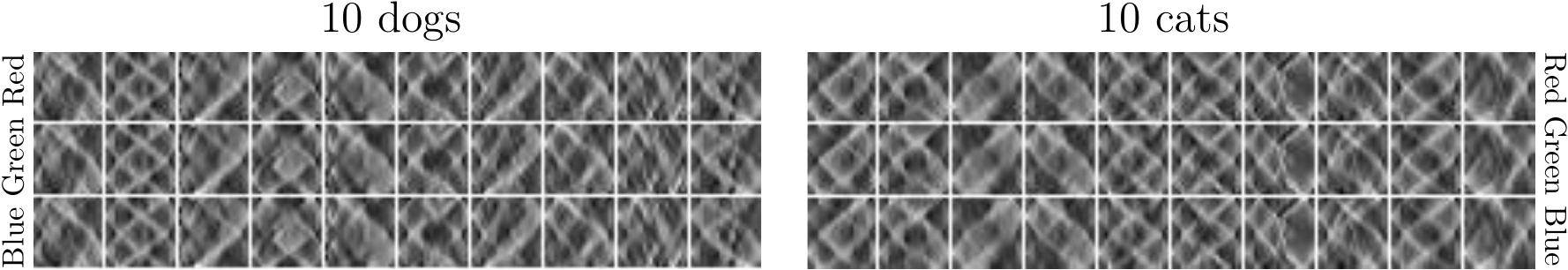}
  \centering
  \caption{Radon and DWT transforms across the 3 RGB channels for the sample dog and cat images of Figure \ref{subfig:catndogimag}.}
  \label{fig:radondwt_animals}
\end{figure}

\section{Proofs}
\subsection{Proof of Theorem \ref{thm:reshap_EC}}\label{proof:reshap_EC}

\begin{proof}
From the invariance of the Mahalanobis distance under tensor
reshapings, $\rtY,\rtY_{(k)},\rtY_{<k>}$ and $\vecc(\rtY)$ all have
the same characteristic function (CF), equivalently written for each case, as
\begin{equation*}
\begin{split}
\psi_{\rtY}(\mZ) 
\overset{1}{=}&
\exp(\I\langle \mZ , \mM\rangle)\varphi(\langle \mZ , [\![\mZ ; \Sigma_1,\Sigma_2,\hdots,\Sigma_p ]\!]\rangle)
\\\overset{2}{=}&
\exp(\I\langle \mZ_{(k)} , \mM_{(k)}\rangle)\varphi(\langle \mZ_{(k)} , [\![\mZ_{(k)} ; \Sigma_k,\Sigma_{-k}  ]\!]\rangle)
\\\overset{3}{=}&
\exp(\I\langle \mZ_{<k>} , \mM_{<k>}\rangle)
\varphi(\langle \mZ_{<k>} , [\![\mZ_{<k>} ; \otimes_{q=1}^k \Sigma_q,\otimes_{q=k+1}^p\Sigma_{q} ]\!]\rangle)
\\\overset{4}{=}&
\exp(\I\langle \vecc(\mZ) , \vecc(\mM)\rangle)\varphi(\langle \vecc(\mZ) , [\![\vecc(\mZ) ; \Sigma ]\!]\rangle). 
\end{split}
\end{equation*}
\end{proof}

\subsection{Proof of Theorem \ref{ellipt:distform}}\label{proof:ellipt:distform}
\begin{proof}  

The CF of $[\![\rtY; \sfA_1,\sdots,\sfA_p ]\!]$ can be
  expressed in terms of the CF of $\rtY$ as  $\psi_{\rtY}([\![\mZ;
  \sfA_1^\top,\sdots,\sfA_p^\top]\!])$, and using  
  \eqref{ellipt:cf1}, we have
$$
\psi_{\rtY}([\![\mZ; \sfA_1^\top,\sdots,\sfA_p^\top]\!]) \!=\!
\exp(\I\langle \mZ , [\![\mM; \sfA_1,\sdots,\sfA_p ]\!]\rangle)
\varphi(\langle \mZ , [\![\mZ ; \sfA_1\Sigma_1\sfA_1^\top, \sdots,\sfA_p\Sigma_p\sfA_p^\top ]\!]\rangle).
$$
The rest of Part \ref{reshap:par1} of the theorem follows from \eqref{ellipt:cf1}.
\end{proof}

\subsection{Proof of Theorem \ref{ellipt:conditional}}\label{proof:ellipt:conditional}
\begin{proof}
 From Part~\ref{reshap:par4} of Theorem \ref{thm:reshap_EC}, we can write the distribution of $\vecc{(\rtY)}$ as
\begin{equation*}
\begin{split}
\vecc{(\rtY)}  = 
\begin{bmatrix}	
\vecc(\rtY_1)\\
\vecc(\rtY_2)
\end{bmatrix}
 \sim
\EC_{m_{-p}\times (n_1+n_2)}\Big(
\begin{bmatrix}
\vecc(\mM_1)\\
\vecc(\mM_2)
\end{bmatrix},
\begin{bmatrix}
\Sigma_{11} \otimes \Sigma_{-p} & \Sigma_{12} \otimes \Sigma_{-p}\\
\Sigma_{21} \otimes \Sigma_{-p}  & \Sigma_{22} \otimes \Sigma_{-p}
\end{bmatrix},
\varphi
 \Big).
 \end{split}
\end{equation*}
The remainder of the proof follows from Corollary 5 in
\citep{cambanisetal81} and Part~\ref{reshap:par1} of Theorem \ref{thm:reshap_EC}, after noticing that $
\vecc(\mM_{\{1\}|\{2\}}) = \vecc(\mM_1)+ (\Sigma_{12}\otimes\Sigma_{-p})(\Sigma_{22}\otimes\Sigma_{-p})^{-1}(\vecc(\mY_2) -\vecc(\mM_2))
$ and $
\Sigma_{p,\{11\}|\bullet} \otimes \Sigma_{-p} 
= (\Sigma_{11}\otimes\Sigma_{-p})
-(\Sigma_{12}\otimes\Sigma_{-p})(\Sigma_{22}\otimes\Sigma_{-p})^{-1}(\Sigma_{21}\otimes\Sigma_{-p}).
$
\end{proof}

\subsection{Proof of Theorem \lowercase{\ref{expectelement}}}\label{app:proofmoments}
\label{app:proofexpectelement}
The CF of $\rtY$ in \eqref{ellipt:cf1} can be written as 
  $
\psi_{\rtY}(\mZ) 
=
\exp(\I h(\mZ))\varphi(g(\mZ)),
 $
 where 
$
h(\mZ) = \langle \mZ , \mM\rangle
$ and $
g(\mZ) = \langle \mZ , [\![\mZ ; \Sigma_1,\hdots,\Sigma_p ]\!]\rangle.
$
Denote $\mZ_{\bbi} = \mZ(i_1,\hdots,i_p)$, the first derivative of $h(\cdot)$ as
$\partial h(\mZ)/\partial\mZ_{\bbi} = m_{\bbi}$, the first derivative of $g(\cdot)$ as $
 g_{\bbi} \overset{.}{=} (\partial h(\mZ))/(\partial\mZ_{\bbi})$,
 and the second derivative of $g(\cdot)$ as
 $ g_{\bbi\bj} \overset{.}{=} \partial^2 h(\mZ)/\partial\mZ_{\bbi}\partial\mZ_{\bj}  = 2\sigma_{\bbi\bj}.
 $
All higher order derivatives are zero. Then
\begin{enumerate}
\item The first moment is obtained from 
$
\mbE(Y_{\bbi})  = \I^{-1} \dfrac{\partial \psi_{\rtY}}{\partial\mZ_{\bbi}}(0)
$,
 where
\begin{equation*}
\dfrac{\partial \psi_{\rtY}(\mZ)}{\partial\mZ_{\bbi}}
=
\exp(\I h(\mZ))\big[
g_{\bbi}\varphi'(g(\mZ)) + \I m_{\bbi}\varphi(g(\mZ))
\big].
\end{equation*}

\item The second moment uses
$
\mbE(Y_{\bbi}Y_{\bj})  = \I^{-2} \frac{\partial^2  \psi_{\rtY}}{\partial\mZ_{\bbi}\partial\mZ_{\bj}}(0)
$, where $\frac{\partial^2 \psi_{\rtY}}{\partial\mZ_{\bbi}\partial\mZ_{\bj}}(\mZ)$ is

$
\exp(\I h(\mZ))\times
\Bigg[
\varphi''(g(\mZ))\big(
g_{\bbi}g_{\bj}
\big)
+
\varphi'(g(\mZ))\big(
g_{\bbi\bj} + 
\I g_{\bbi}m_{\bj} + \I g_{\bj}m_{\bbi}
\big) +
\varphi(g(\mZ))\big(
\I^2m_{\bbi}m_{\bj}
\big)
\Bigg].
$

\item The third moment  is obtained using $
\mbE(Y_{\bbi}Y_{\bj}Y_{\bk})  = \I^{-3} \frac{\partial^3  \psi_{\rtY}}{\partial\mZ_{\bbi}\partial\mZ_{\bj}\partial\mZ_{\bk}}(0)
$, where $\frac{\partial^3 \psi_{\rtY}}{\partial\mZ_{\bbi}\partial\mZ_{\bj}\partial\mZ_{\bk}}(\mZ)$ is

$
\exp(\I h(\mZ))
\Bigg[
\varphi'''(g(\mZ))
\big(
g_{\bbi}g_{\bj}g_{\bk}
\big)
+
\varphi(g(\mZ))\big(
\I^3m_{\bbi} m_{\bj}m_{\bk}
\big) \\+
\varphi''(g(\mZ))\big(
g_{\bj\bk}g_{\bbi} + g_{\bbi\bk}g_{\bj}
 + g_{\bbi\bj}g_{\bk} +\I g_{\bj}g_{\bk}m_{\bbi} 
 + \I g_{\bbi}g_{\bk}m_{\bj}
+ \I g_{\bbi}g_{\bj}m_{\bk}
\big)
+\varphi'(g(\mZ))\big(
\I g_{\bbi\bj}m_{\bk} + \I g_{\bbi\bk}m_{\bj} + \I g_{\bj\bk}m_{\bbi} + 
\I^2 m_{\bj}m_{\bk}g_{\bbi} + \I^2m_{\bbi}m_{\bk}g_{\bj} + \I^2m_{\bbi}m_{\bj}g_{\bk}
\big)
\Bigg].
$

\item We get the fourth moment from $
\mbE(Y_{\bbi}Y_{\bj}Y_{\bk}Y_{\bl})  = \I^{-4} \dfrac{\partial^4  \psi_{\rtY}}{\partial\mZ_{\bbi}\partial\mZ_{\bj}\partial\mZ_{\bk}\partial\mZ_{\bl}}(0)
$, where (ignoring terms involving the first derivative of $g(\cdot)$ because they are proportional to elements in $\mZ$, which will be set to 0) $\frac{\partial^4 \psi_{\rtY}}{\partial\mZ_{\bbi}\partial\mZ_{\bj}\partial\mZ_{\bk}\partial\mZ_{\bl}}(\mZ)$ is
\begin{equation*}
\begin{split}
&\exp(\I h(\mZ))\times
\Bigg[
\varphi''(g(\mZ))
\big(
g_{\bbi\bj}g_{\bk\bl}+
g_{\bbi\bk}g_{\bj\bl}+g_{\bbi\bl}g_{\bj\bk}
\big)+
\I^4\varphi(g(\mZ))\big(
m_{\bbi}m_{\bj}m_{\bk}m_{\bl}
\big)
\\&
+
\I^2\varphi'(g(\mZ))\big(
g_{\bbi\bj}m_{\bk}m_{\bl} + g_{\bbi\bk}m_{\bj}m_{\bl} +
 g_{\bbi\bl}m_{\bj}m_{\bk} + 
g_{\bj\bk}m_{\bbi}m_{\bl} +
 g_{\bj\bl}m_{\bbi}m_{\bk} + g_{\bk\bl}m_{\bbi}m_{\bj}
\big)
\Bigg].
\end{split}
\end{equation*}
\end{enumerate}

\subsection{Proof of Theorem \lowercase{\ref{thm:moments_ec}}}
\label{proof.b2}
\label{proof:moments_ec}
\begin{proof} 
In these proofs, LHS and RHS refer to the left and right hand
sides of an expression.
  \begin{enumerate} 
\item The proof follows directly from Part~\ref{expectelement:1} of
  Theorem \ref{expectelement}.
\item Let $\rtX \sim
  \EC_{m_1,m_2,\hdots,m_p}(0,\Sigma_1,\Sigma_2,\hdots,\Sigma_p,\varphi)$. Then,
  $\text{Var}(\vecc(\rtY)) =  \text{Var}(\vecc(\rtX))$, and from
  Part~\ref{expectelement:2} of 
  Theorem \ref{expectelement} and  \eqref{defvec}, we have
\begin{equation*}
\begin{split}
\text{Var}(\vecc(\rtX)) = & \sum_{\substack{i_1,\dots,i_p\\j_1\dots j_p}}
  \mbE[\mX(\bbi)\mX(\bj)]
  (\bigotimes_{q=p}^1\be_{i_q}^{m_q})
  (\bigotimes_{q=p}^1\be_{j_q}^{m_q})'
  \\=& 
  -2\varphi'(0)
  \bigotimes_{q=p}^1
   \sum_{i_q,j_q}
  \left(\Sigma_q(i_q,j_q)\be_{i_q}^{m_q}{\be_{j_q}^{m_q}}'\right)
  = 
  -2\varphi'(0) \bigotimes_{q=p}^1 \Sigma_q.
\end{split}
\end{equation*}  
The following moment follows from Parts~\ref{par51} and~\ref{par52} of
this theorem, and is used in Parts~\ref{par54} and~\ref{par55}.
\begin{equation}\label{eq:secmom}
\begin{split}\mbE\big[(\vecc \rtY)(\vecc \rtY)'\big]
=
(\vecc \mM)(\vecc \mM)'-2\varphi'(0)(\bigotimes_{k=p}^1 \Sigma_k).
\end{split}
\end{equation}

\item We can write the LHS as
$
\tr\big\{(\bigotimes_{k=p}^1\sfA_k)\mbE\big[(\vecc \rtY)(\vecc \rtY)'\big]\big\}.
$
The result follows from  \eqref{eq:secmom} after using $\tr(\bigotimes_k \sfA_k)=\prod_k\tr(\sfA_k)$.

\item The vectorization of the LHS can be expressed as
$
\mbE\big[(\vecc \rtY)(\vecc \rtY)'\big](\bigotimes_{k=p}^1\sfA_k)(\vecc\mV).
$
Using \eqref{eq:secmom} results in
$\langle \mV,[\![\mM;\sfA_1,\sfA_2,\hdots,\sfA_p]\!]\rangle\vecc(\mM)
 -2\varphi'(0)\vecc[\![\mV;\Sigma_1 \sfA_1',\hdots,\Sigma_p \sfA_p']\!]. 
$

 We have proved that the vectorizations of the LHS and the RHS are the same, and since they are tensors of the same size, they must be the same.
\item  Let $\rtX \sim \EC_{m_1,m_2,\hdots,m_p}(0,\Sigma_1,\Sigma_2,\hdots,\Sigma_p,\varphi)$. Then the first and third moments of $\rtX$ are zero, and using $\rtY \overset{d}{=} \rtX + \mM$, the LHS can be 
expressed as
\begin{equation*}
\begin{split}
\langle \mM,[\![
&
\mM;\sfA_1,\hdots,\sfA_p]\!]\rangle\mM +
 \mbE(\langle \rtX,[\![\rtX;\sfA_1,\hdots,\sfA_p]\!]\rangle) \mM  
 \\+&
 +
  \mbE(\langle \mM,[\![\rtX;\sfA_1,\hdots,\sfA_p]\!]\rangle\rtX) +
\mbE(\langle \mM,[\![\rtX;\sfA_1',\hdots,\sfA_p']\!]\rangle\rtX). 
\end{split}
\end{equation*}
The statement of this part of the theorem follows from
Parts~\ref{par54} and~\ref{par55}.
\item If $\rtX \sim \Sph_{m_1,m_2,\hdots,m_p}(\varphi)$ and $(\sfC_k,\sfD_k)=(\Sigma_k^{1/2}\sfA_k\Sigma_k^{1/2},\Sigma_k^{1/2}\sfB_k\Sigma_k^{1/2})$ for all $k=1,2,\hdots,p$, where $(.)^{1/2}$ is the symmetric square root, then the LHS can be written as
\begin{equation}\label{eq:alt1}
\sum_{\bbi,\bj,\bk,\bl}
\mbE(\bX_{\bbi}\bX_{\bj}\bX_{\bk}\bX_{\bl})
(c_{\bbi\bj}d_{\bk\bl} ),
\end{equation}
where $\bX_{\bbi} = \rtX(i_1,i_2,\hdots,i_p)$ and $c_{\bbi\bj}=\prod_{q=1}^p \sfC_q(i_q,j_q)$, $d_{\bk\bl} = \prod_{q=1}^p \sfD_q(k_q,l_q)$.
Using Part~\ref{expectelement:4} of Theorem \ref{expectelement}, we know that 
$
\mbE(\bX_{\bbi}\bX_{\bj}\bX_{\bk}\bX_{\bl}) 
=
4\varphi''(0)(
1_{\bk\bl}1_{\bj\bbi}\!+\!
1_{\bj\bl}1_{\bbi\bk}\!+\!
1_{\bbi\bl}1_{\bj\bk}
),
$
where $1_{\bk\bl} = 1$ only if $i_k=j_k$ for all $k=1,\hdots,p$, and
it is $0$ otherwise. From these, we can write \eqref{eq:alt1} based on
\begin{equation*}
\begin{split}
\sum_{\bbi\bj}
(
c_{\bbi\bbi}d_{\bj\bj} + 
c_{\bbi\bj}d_{\bbi\bj} +
c_{\bbi\bj}d_{\bj\bbi}
) =
\prod_{k=1}^p \big[ \tr(\sfC_k)\tr(\sfD_k) \big]+ 
\prod_{k=1}^p \tr(\sfC_k\sfD_k)
+
\prod_{k=1}^p \tr(\sfC_k\sfD_k').
\end{split}
\end{equation*}
The statement follows after replacing $(\sfC_k,\sfD_k)$
 with their original expressions.
\end{enumerate}
\end{proof}

\subsection{Proof of Theorem \ref{thm:tenswish1}}\label{proof:tenswish1}

\begin{proof} We prove the two parts of the theorem in turn:
\begin{enumerate}
\item 
For any $i=1,2,\hdots,n$, the distribution of each $\rtY_i$ is obtained from Theorem 
\ref{ellipt:distform} with $\rtY_i = \rtY\times_{p+1}{\be_i^n}^\top$. They are uncorrelated using Part \ref{expectelement:2} of Theorem \ref{expectelement} because whenever $i_{p+1} \neq j_{p+1}$,
$
\mathbb{E}\left[ 
\rtY(i_1,i_2,\hdots,i_{p+1})\rtY(j_1,j_2,\hdots,j_{p+1})
\right] = 0.
$
\item From Part \ref{reshap:par3} of Theorem~\ref{thm:reshap_EC} we obtain that $\rtY_{<p>}\sim \EC_{m,n}(0,\Sigma,I_n,\varphi)$. The remainder of the theorem follows from Definition \ref{def:tenswish} after using Lemma \ref{lemma:outervectens} to obtain that
$$
\left(\sum\limits_{i=1}^n (\rtY_i \circ \rtY_i)\right)_{<p>}
=
\sum\limits_{i=1}^n (\vecc \rtY_i)(\vecc \rtY_i)^\top
=
\rtY_{<p>}\rtY_{<p>}^\top.
$$
\end{enumerate}
\end{proof}


\subsection{Proof of Theorem \ref{mluncorrelated}}\label{proof:mluncorrelated}

\begin{proof}
We have from Part \ref{reshap:par3} of our Theorem~\ref{thm:reshap_EC} that $\rtY_{<p>}\sim \EC_{m,n}(\mM_{n<p>},\sigma^2\Sigma,I_n,\varphi)$, where the $m\times n$ matrix $\mM_{n<p>} = [\vecc(\mM)\dots \vecc(\mM)]$. Therefore, from Theorem 1 of \cite{andersonandetal86} we have that the ML
estimator of $\vecc(\mM)$ is $\vecc(\tilde{\mM})$, and that of 
$
\sigma^2I_n\otimes \left[\bigotimes_{k=p}^1\Sigma_k\right]
$ is
$
(nm/d_g)\tilde{\sigma}^2I_n\otimes \left[\bigotimes_{k=p}^1\tilde{\Sigma}_k\right].
$ 
This means that in our case, the ML estimator of the covariance matrix
is the same as in the TVN case but up to a constant of proportionality, and since the scale matrices $\Sigma_k$ are proportionally constrained as $\Sigma_k(1,1)=1$ for all $k=1,2,\dots ,p$, it follows that the ML estimator of $\Sigma_k$ is $\tilde\Sigma_k$, and the proportionality constant ${nm}/d_g$ is absorbed by  $\tilde\sigma^2$.
\end{proof}

\subsection{Proof of Theorem \ref{thm:tyrobustML}}\label{proof:tyrobustML}
\begin{proof}
First, let $Z_{ik} = S_{ik} /||\mY_i||^2$.
 Then the loglikelihood in \eqref{eq:tyfull} can be written in terms of $\Sigma_k$ only as
\begin{equation}\label{eq:ty_liklK}
\ell(\Sigma_k) = \dfrac{nm_{-k}}{2}\log|\Sigma_k^{-1}| - \dfrac{m}{2} \sum_{i=1}^n \log\tr(\Sigma_k^{-1}Z_{ik}),
\end{equation}
and has matrix derivative
\begin{equation}\label{eq:derL}
\dfrac{\partial\ell}{\partial\Sigma_k^{-1}} = \dfrac{nm_{-k}}{2}\Sigma_k - \dfrac{m}{2} \sum_{i=1}^n \dfrac{S_{ik}}{\tr(\Sigma_k^{-1}S_{ik})},
\end{equation}
where the equality follows because
$
S_{ik}/\tr(\Sigma_k^{-1}S_{ik}) = Z_{ik}/\tr(\Sigma_k^{-1}Z_{ik}).
$
Setting \eqref{eq:derL} to 0 leads to the following fixed-point iterative procedure for obtaining $\Sigma_k$:
$
\widehat\Sigma_k^{(t+1)} = \dfrac{m_k}{n}
\sum_{i=1}^n \dfrac{S_{i,k}}{\tr(\widehat\Sigma_k^{-1(t)}S_{ik})}. 
$
However, we are interested in the constrained optimization under $\Sigma_k(1,1)=1$. To do this, we write the equality constraint function $g$ and its matrix derivative as
\begin{equation}\label{eq:ty_gK}
g(\Sigma_k) = {\be_1^{m_k}}'\Sigma_k\be_1^{m_k} - 1
,\quad
\dfrac{\partial g}{\partial\Sigma_k^{-1}} =
- \boldsymbol{\sigma}\boldsymbol{\sigma}'
\end{equation}
where $ \boldsymbol{\sigma}$ is the first column of $\Sigma_k$ and $\be_1^{m_k}\in \mathbb{R}^{m_k}$ has 1 at position 1 and zero everywhere else.
 Our equality constraint function is zero only if $\Sigma_k(1,1)=1$.
With $\ell(\cdot)$ and $g(\cdot)$ as in  \eqref{eq:ty_liklK} and \eqref{eq:ty_gK} we write the Lagrange multiplier as
$
\mL(\Sigma_k,\lambda)=
 \ell(\Sigma_k) - \dfrac{\lambda}{2} g(\Sigma_k),
$
and its derivative with respect to $\Sigma_k$ follows directly from  \eqref{eq:derL} and \eqref{eq:ty_gK}. Based on this derivative, the constrained optimization must satisfy
\begin{equation}\label{eq:tyroot}
\dfrac{nm_{-k}}{2}\Sigma_k - \dfrac{m}{2} \sum_{i=1}^n \dfrac{S_{ik}}{\tr(\Sigma_k^{-1}S_{ik})} + \lambda
\boldsymbol{\sigma}\boldsymbol{\sigma}' = 0.
\end{equation}
In  \eqref{eq:derL} and \eqref{eq:ty_gK}, we ignored the symmetric structure of $\Sigma_k$ when taking derivatives because it is simpler and  leads to the same root in \eqref{eq:tyroot}.
This equation is of the form of Equation (B.1) in \citet{glanzandcarvalho18}, and therefore we know that it is satisfied for \eqref{eq:tytensfixed}.
\end{proof}

\subsection{Proof of Theorem \ref{lemma:outervectens}}\label{proof:outervectens}

\begin{proof} 
In this proof we denote $\rtX(i_1,\hdots,i_p) = \rtX(\bbi)$,
$\rtY(j_1,\hdots,j_q) = \rtY(\bj)$, and we use the definitions of
vectorization, canonical matricization and outer product in Equations
\eqref{defvec}and \eqref{canonical}
\begin{equation*}
\begin{split}
(\rtX\circ\rtY)_{<p>} 
&=\sum\limits_{\substack{i_1,\sdots i_p\\j_1\sdots,j_p}}
\rtX(\bbi)\rtY(\bj)
\left(
\left( \circs\limits_{r=1}^p \boldsymbol{e}_{i_r}^{m_r} \right) \circ
\left( \circs\limits_{r=1}^q \boldsymbol{e}_{j_r}^{n_r} \right)\right)_{<p>}
\\&
= \sum\limits_{\substack{i_1,\sdots i_p\\j_1\sdots,j_p}}
\rtX(\bbi)\rtY(\bj)
\left(
\left(\bigotimes\limits_{r=p}^1 \boldsymbol{e}_{i_r}^{m_r}\right) 
\left(\bigotimes\limits_{r=q}^1 \boldsymbol{e}_{j_r}^{n_r}\right)' \right)
\\&=
\sum\limits_{i_1,\sdots,i_p}
\left(\rtX(\bbi)
\bigotimes\limits_{r=p}^1 \boldsymbol{e}_{i_r}^{m_r}\right) 
\sum\limits_{j_1,\sdots,j_p}
\left(\rtY(\bj)
\bigotimes\limits_{r=q}^1\boldsymbol{e}_{j_r}^{n_r}\right)'
\\&=
(\vecc \rtX)(\vecc \rtY)'.
\end{split}
\end{equation*}
\end{proof}

\subsection{Proof of Theorem \ref{thm:swishart}}\label{proof:swishart}
\begin{proof}
In this case, we have that $\rmS_{22} =
\rmS_{12}^\top\rmS_{11}^{-1}\rmS_{12}$, and therefore the functionally
independent elements of $\rmS$ are in $\rmS_{11}$ and
$\rmS_{12}$. From Definition \ref{def:matwish}, we have that there
exists an $\rmX\sim \EC_{m,n}(0,\Sigma,I_n,\varphi)$ such that $\rmS
\overset{d}{=} \rmX\rmX^\top$, and from \eqref{ellipt:PDF1}, we have that the pdf of $\rmX$ is
\begin{equation}
f_{\rmX}(\sfX)=|\Sigma|^{-n/2}g(\tr(\Sigma^{-1}\sfX\sfX^\top)).
\end{equation}
Transforming the random variable $\rmX$ to $[\rmS_{11}\rmS_{12}]$ can be performed using the singular value decomposition in \citep{srivastava03} or the triangular factorization method in \citep{srivastavaandkhatri79}. The remaining steps mirror those in \citep{srivastavaandkhatri79,srivastava03} in their development of the singular Wishart distribution, and so are omitted here. However, they ultimately involve multiplying $f_{\rmX}(\sfX)$ by
\begin{equation}
\dfrac{\pi^{n^2/2}}{\Gamma_n(n/2)}|S_{11}|^{(n-m-1)/2}.
\end{equation}
\end{proof}

\bibliographystyle{IEEEtran}
\bibliography{tensorontensor}

\end{document}